\newcommand{\dscaler}{{\textsc{Dscaler}}} 
\newcommand{\gscaler}{{\textsc{Gscaler}}} 
\newcommand{\oldd} {{\mathcal{D}}}
\newcommand{\scaled} {{\widetilde{\mathcal{D}}}}
\newcommand{ \oldf }{{\mathcal{F}}}
\newcommand{\scalef }{{\widetilde{\mathcal{F}}}} 
\newcommand{\tweakf} {{\widetilde{\mathcal{F}}}} 
\newcommand{ \tweakd} {{\widetilde{\mathcal{D}}}}
\newcommand{\scalepo}{\po}
\newcommand{\tweakpo}{{\widetilde{\po}}}
\newcommand{\Xiami} {{\tt Xiami}}
\newcommand{\DoubanMovie }{{\tt DoubanMovie}}
\newcommand{\DoubanMusic }{{\tt DoubanMusic}}
\newcommand{\DoubanBook }{{\tt DoubanBook}}
\newcommand{ \Douban }{{\tt Douban}}
\newcommand{ \ReX} {{ReX}}
\newcommand{\Tlinear }{{\mathcal T}_{\tt linear}} 
\newcommand{ \Tpairwise}{{\mathcal T}_{\tt pairwise}} 
\newcommand{\Tcoappear }{{\mathcal T}_{\tt coappear}} 
\newcommand{\T }{{\mathcal T}}
\newcommand{ \linear} {{linear}}
\newcommand{\coappear}{{coappear}}
\newcommand{\pairwise}{{pairwise}}
\newcommand{\rex}{{ReX}}
\newcommand{\rand}{{Rand}}
\newcommand{ \response }{{\tt response2post}}
\newcommand{ \post }{{\tt post}}
\newcommand{\tweakH}{{\widetilde{H}}}
\newcommand{\scaleH}{{H}}
\newcommand{\tweakh}{{\widetilde{h}}}
\newcommand{\scaleh}{{{h}}}
\newcommand{ \scalexi}{\xi}
\newcommand{ \tweakxi}{{\widetilde{\xi}}}
\newcommand{ \po} {{\rho}}
\newcommand{ \sF} {{\widetilde{F}}}
\newcommand{\F}{\mathcal{F}}
\newcommand{ \upsizer }{{UpSizeR}}
\newtheorem{definition}{Definition}
\newtheorem{theorem}{Theorem}
\newtheorem{lemma}{Lemma}
\tiny\color{gray},
\title{A tool framework for tweaking features in synthetic datasets}
\author{
	{J.W. Zhang{\small $~^{\#1}$},  Y.C. Tay{\small $~^{\#2}$} }%
	\vspace{1.6mm}\\
	\fontsize{10}{10}\selectfont\itshape
	$^{\#}$\,School of Computing, National University of Singapore\\
	\fontsize{9}{9}\selectfont\ttfamily\upshape
	$^{1}$\,jiangwei@u.nus.edu\\
	$^{2}$\,dcstayyc@nus.edu.sg%
	\vspace{1.2mm}\\
	\fontsize{10}{10}\selectfont\rmfamily\itshape
}
\begin{document}	
	\maketitle
	
	\begin{abstract}
		Researchers and developers use benchmarks to 
		compare their algorithms and products.
		A database benchmark must have a dataset $\oldd$.
		To be application-specific, 
		this dataset $\oldd$ should be empirical.
		However, $\oldd$ may be too small, or too large,
		for the benchmarking experiments.
		$\oldd$ must, therefore, be scaled to the desired size.
		
		To ensure the scaled $\scaled$ is similar to $\oldd$,
		previous work typically specifies or extracts 
		a fixed set of features $\F = \{\F_1, \F_2, \dots, \F_n\}$ from $\oldd$,
		then uses $\F$ to generate synthetic data for $\scaled$.
		However, this approach ($\oldd \rightarrow$ $\F$ $\rightarrow \scaled$ ) 
		becomes increasingly intractable as $\F$ gets larger,
		so a new solution is necessary.
		
		Different from existing approaches, 
		this paper proposes ASPECT to scale $\oldd$ to enforce similarity.
		ASPECT first uses a size-scaler ($S_0$) to scale 
		$\oldd$ to $\scaled$. 
		Then the user selects a set of desired features $\scalef_1,\ldots,\scalef_n$.
		For each desired feature $\scalef_k$, 
		there is a tweaking tool $\T_k$ that tweaks $\scaled$ to 
		make sure $\scaled$ has the required feature $\scalef_k$.
		ASPECT coordinates the tweaking of $\T_1,\ldots,\T_n$ to $\scaled$,
		so $\T_n(\cdots(\T_1(\scaled))\cdots)$ has the required features 
		$\scalef_1,\ldots,\scalef_n$.
		
		By shifting from $\oldd \rightarrow \F \rightarrow \scaled$  
		to $\oldd \rightarrow  \scaled \rightarrow \scalef$, 
		data scaling becomes flexible. 
		The user can customise the scaled dataset with their own interested features.
		Extensive experiments on real datasets show that ASPECT
		can enforce similarity in the dataset effectively and efficiently. 
	\end{abstract}

	\section{Introduction}
	We have two motivations for introducing ASPECT:
	
	\label{sec:intro}
	{\bf Motivation 1}:
	Benchmarks are ubiquitous in the computing industry and academia.
	Developers use benchmarks to compare products and algorithms,
	while researchers use them similarly in research.
	
	For 20-odd years, 
	the popular benchmarks for database management systems
	were the ones defined by the Transaction Processing Council 
	(TPC)~\footnote{http://www.tpc.org/}.
	However, the small number of TPC benchmarks are increasingly irrelevant
	to the myriad of diverse applications,
	and the TPC standardization process is too slow~\cite{stonebreaker}.
	This led to a proposal for a paradigm shift,
	from a top-down design of domain-specific benchmarks 
	by committee consensus,
	to a bottom-up collaboration to develop tools 
	for application-specific benchmarking~\cite{vision}.
	
	A database benchmark must have a dataset.
	For the benchmark to be application-specific,
	it must start with an empirical dataset $\oldd$.
	This $\oldd$ may be too small or too large for the benchmarking experiment,
	so the first tool to develop would be for scaling $\oldd$ to a desired size.
	
	{\bf Motivation 2}:  
	Apart from benchmarking, 
	dataset scaling plays important roles in other fields as well.
	A start-up company with a small dataset 
	may want a larger dataset for testing the scalability of their system architecture. 
	On the other hand, an enterprise with a large dataset
	may want a scaled down version to provide quick answers
	to aggregation queries (averages, count, etc.).

	\begin{figure}[t]
		\centering
		\includegraphics[height=0.38in]{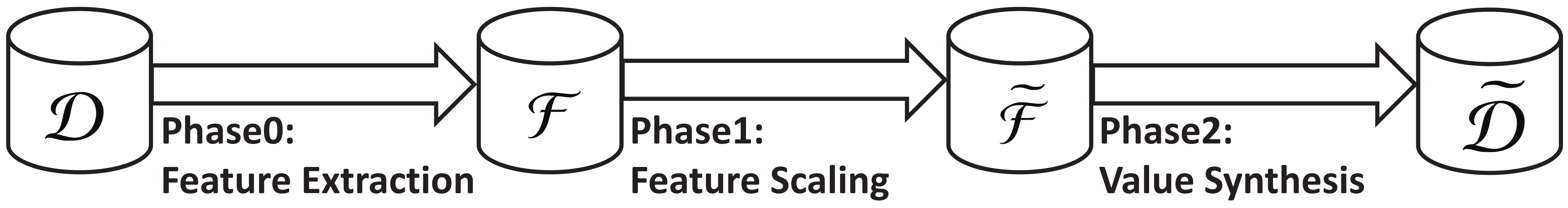}
		\caption{Existing scaling framework}
		\label{fig:oldframework}
		\vspace{-4mm}
	\end{figure}
	
	Given this outlook, 
	a tool that scales an {\it empirical} dataset $\oldd$ to 
	a {\it synthetic} and {\it similar} $\scaled$ will be very appealing.
	This generation of artificial data is necessary if $\scaled$ is larger,
	and helpful if $\scaled$ is smaller or equal in size~\cite{sdv,upsizer}.
	For all cases: 
	$\scaled$  must be {\it similar} to $\oldd$.
	Moreover, the similarity definition should be application-specific. 
	It can be measured by graph properties, query results, etc., 
	depends on the application.
	
	\subsection{Existing approach and the limitations}
	\label{sec:existingLimitation}
	To ensure $\scaled$ is similar to $\oldd$,
	previous work~\cite{chronos, sdv, upsizer, dscaler} 
	typically follows the framework  in Fig.\ref{fig:oldframework}.
	Each algorithm extracts a fixed set of features 
	$\F = \{\F_1, \F_2, \dots, \F_n\}$  from $\oldd$, 
	then scales  $\F$ to $\bf \widetilde{\F}$ as a predicting feature 
	for the scaled dataset $\scaled$.
	$\scaled$ is finally synthesized based on $\bf \widetilde{\F}$.
	$\F$ here defines the similarity between $\oldd$ and $\scaled$:
	the more features in $\F$, 
	the greater the similarity between $\oldd$ and $\scaled$. 
	
	For example, 
	if $\oldd$ is a graph and $\F = \{ \F_1, \F_2\}$, 
	where $\F_1$ is \textit{density} 
	and $\F_2$ is \textit{number of triangles}, 
	then we would expect $\oldd$ and $\scaled$ are similar 
	in terms of \textit{density} and \textit{triangles}. 
	However, there are some limitations from the perspective of a developer and a user:
	
	\subsubsection{The developer faces the implementation reusability and scalability issue} 
	{\bf Algorithm Implementation Reusability}: 
	Consider the scenario where one application developer 
	implements an algorithm $A_1$ using the feature set $\{\F_1, \F_2\}$. 
	Later, another developer may find it more important 
	for her application to preserve $\{\F_2, \F_3\}$,
	where $\F_3$ is the \textit{number of rectangles}.
	So she implements another algorithm $A_2$ to preserve $\{\F_2, \F_3\}$.
	However, a third developer might want to preserve $\{\F_1, \F_2, \F_3\}$;
	what should he do?
	In this case, $A_1$ or $A_2$ only preserves part of $\{\F_1, \F_2, \F_3\}$.
	To preserve  $\{\F_1, \F_2, \F_3\}$, 
	one has to modify $A_1$ to preserve the extra $\F_3$, 
	or modify $A_2$ to preserve the extra $F_1$, 
	or write a new algorithm $A_3$ from scratch.
	This is a waste of effort, 
	since $\F_1$, $\F_2$, $\F_3$ were already preserved by $A_1$ and $A_2$.
	{\bf Algorithm Implementation Scalability}: 
	As mentioned previously,  the more features in $\F$, 
	the greater the similarity between $\oldd$ and $\scaled$. 
	However, a large feature set dramatically increases the difficulty of
	designing an algorithm that maintains the features simultaneously. 
	For example, if $\F = \{\F_1, \F_2, \F_3, \F_4 \}$, 
	where $\F_4$ is  {\it the fraction of nodes with degree} $1$,
	then it is less likely one can design a single algorithm 
	which preserves all $4$ features.
	If we only consider degree distributions as features, 
	then it is already NP-hard to decide whether 
	there exists a graph satisfying certain degree distributions~\cite{gmark}.
	
	\subsubsection{The user does not have a choice of the features} 
	In the current framework, 
	once an algorithm is implemented, the features are fixed.
	Consider the same example used above, 
	$A_1$ is implemented to preserve $\F=\{\F_1, \F_2\}$, 
	and $A_2$ is implemented to preserve $\F=\{\F_2, \F_3\}$. 
	The user can only choose to preserve $\{\F_1, \F_2\}$ 
	or $\{\F_2, \F_3\}$, but not the union.
	
	\subsection{Overcoming the limitations}
	In this paper, we propose ASPECT, 
	a flexible framework for synthetic data scaling.
	Unlike existing approaches, 
	ASPECT takes the following two steps as illustrated in Fig.\ref{fig:featuretweaking}:
	
	\noindent
	{\bf Step1:}  
	Use a size-scaler $S_0$ to scale $\oldd$ to $\scaled_0$ of desired size.
	
	\noindent
	{\bf Step2:} 
	For the desired feature set $\{\scalef_1, \scalef_2,\dots,\scalef_k\}$, 
	apply \textit{independently} developed tools 
	$\T_1, \T_2, \dots, \T_k$ on $\scaled_0$ in order.
	Each tool $\T_i$ generates a dataset $\scaled_i$ 
	by adjusting $\scaled_{i-1}$.
	After the adjustment of $\T_i$, 
	$\scaled_i$ satisfies \{$\scalef_1, \scalef_{2}, \dots, \scalef_i$\}.
	Note that Step2 does not depend on the size-scaler in Step1.
	We call this {\bf tweaking} $\scaled_{i-1}$ by tool $\T_i$.
	The final dataset is $\scaled$.
	
	For the above-mentioned limitations in Sec.\ref{sec:existingLimitation}, 
	ASPECT resolves them with ease: 
	For \textbf{implementation reusability}, 
	each feature tweaking tool is independently developed. 
	Once a tweaking tool $\T_i$ for feature $\scalef_i$ is implemented, 
	then the user can apply $\T_i$ together with other tweaking tools 
	whenever it is needed. No re-coding! 
	For \textbf{implementation scalability}, 
	to preserve the feature set with $n$ features, 
	the developer just needs to implement $n$ tweaking tools, 
	instead of hardcoding all $n$ features into a single piece of software.
	And each tweaking tool $\T_i$ tweaks the feature $\scalef_i$. 
	For the issue of \textbf{feature choice}, 
	once tweaking tools $\T_1$, $\T_2$, $\T_3$ and $\T_4$
	are implemented for $\sF_1$, $\sF_2$, $\sF_3$ and $\sF_4$ respectively, 
	the user can choose $\{\T_2, \T_4\}$ to get $\{\sF_2, \sF_4\}$, 
	or $\{\T_1, \T_3, \T_4\}$ to get $\{\sF_1, \sF_3, \sF_4\}$, etc.
	
	Hence, to enforce greater similarity in the scaled dataset $\scaled$, 
	we just need to apply more tweaking tools.
	We envision having developers from the database community 
	contributing tools $\T_i$ to a repository for tweaking synthetic datasets. 
	Then ASPECT will have more tools for the user to customise the scaled datasets.
	This would go some way towards realising the suggested paradigm shift
	to a bottom-up collaboration for application-specific benchmarking.
	
	\begin{figure}[t]
		\centering
		\includegraphics[height=0.55in]{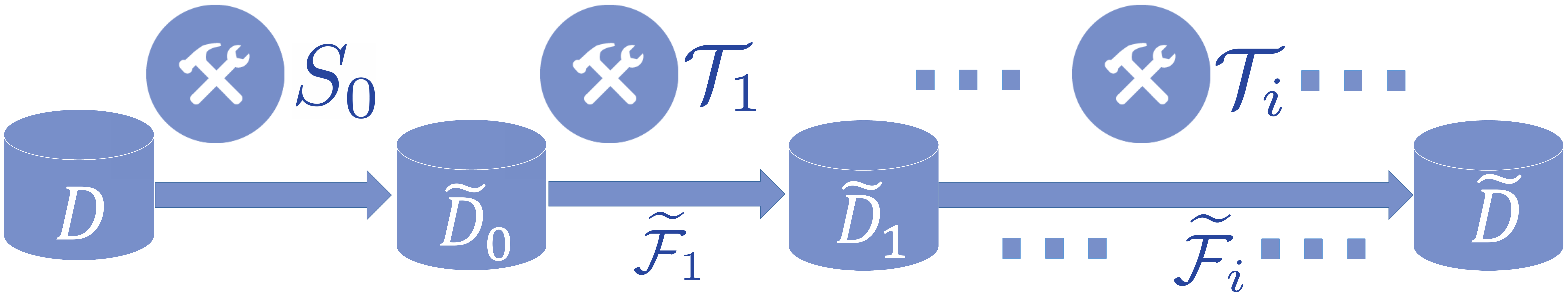}
		\caption{The ASPECT abstract -- 
			The empirical dataset $\oldd$ 
			is first scaled to $\widetilde{D}_0$ by a size-scaler $S_0$.
			After that, tools 
			$\T_1,\T_2,\dots, \T_n$ sequentially tweaks dataset for 
			features $\scalef_1,\scalef_2,\dots, \scalef_n$.
			After $\T_i$ tweaks $\scaled_{i-1}$, this results in $\scaled_i$. 
			\{$\scalef_1, \scalef_{2}, \dots, \scalef_i$\} is reflected in $\scaled_i$.
		}
		\label{fig:featuretweaking}
		\vspace{-4mm}
	\end{figure}
	
	However, the tools require some coordination,
	since some changes to $\scaled_{i-1}$ by one tool $\T_i$ 
	may be undone by another tool $\T_j$.
	Moreover, tools are developed independently by different developers. 
	Different developers might have different tweaking techniques. 
	To ensure a tool is compatible with ASPECT,  
	ASPECT must explicitly standardise types of modifications 
	that could be made on a dataset while tweaking. 
	
	\subsection{Overview}
	To summarize, our contribution in this paper are fourfold:
	\begin{itemize}
		\item[1.]
		We propose ASPECT, a framework for flexible application of tweaking tools to 
		enforce target features in synthetic dataset.	
		
		\item[2.]
		We present results from extensive experiments on real datasets,
		to verify that ASPECT can enforce similarity in the dataset 
		effectively and efficiently. 	
		
		\item[3.]
		We present necessary and sufficient conditions,
		and tweaking algorithms, for three new complex features.
		
		\item[4.] 
		We state Feature Tweaking Bound and Order Problems 
		that offer a rewarding challenge for research on dataset tweaking.
	\end{itemize}

	We first introduce ASPECT architecture in Sec.\ref{sec:framework}, 
	followed by three new complex features that serve to 
	illustrate the ASPECT framework in Sec.\ref{sec:featuretweaker}.
	One of them concerns inter-column and inter-row correlation induced by
	implicit relationships in a social network dataset;
	we thus provide here a solution to a problem highlighted  previously~\cite{vision}.
	Sec.\ref{sec:experiment} describes the datasets and similarity measures used in the experiments,
	and the results are presented in Sec.\ref{sec:results}.
	Sec.\ref{sec:limit} points out some limitations and insights of ASPECT.  
	Related work is surveyed in Sec.\ref{sec:related}, 
	before Sec.~\ref{sec:conclusion} concludes with a summary.

	\section{ASPECT Architecture}
	\label{sec:framework}
	
	As shown in Fig.\ref{fig:featuretweaking},
	an input dataset is first scaled by the size-scaler $S_0$ 
	which returns a scaled dataset $\scaled_0$ of the desired size.
	Note that $S_0$ could be any tool 
	which guarantees the number of tuples in each table generated is as expected 
	and there are no invalid foreign key values.
	For example, $S_0$ may be $\dscaler$~\cite{dscaler}, 
	or it could be $\rex$~\cite{rex};
	we will show that ASPECT is able to preserve the features well 
	for both $\dscaler$ and $\ReX$ in Sec.\ref{sec:results}.
	The choice of $S_0$ is outside the scope of this paper.
	After the dataset is resized, 
	ASPECT then coordinates the application of tools $\T_i$ on $\scaled_{i-1}$, 
	to make sure the feature $\tweakf_i$ is reflected 
	in the tweaked dataset $\scaled_i = \T_i(\scaled_{i-1})$.
	In the tweaking process, there are a few issues:
	
	{\bf I1.} 
	How do we get the target feature $\tweakf_i$?
	
	{\bf I2.} 
	Given an target feature $\tweakf_i$,
	how can we tweak $\scaled_{i-1}$ to ensure that
	the tweaked dataset $\tweakd_{i}$ contains $\tweakf_i$?
	
	{\bf I3.} 
	Given $\scaled_{n-1}$ already contains 
	$\tweakf_1, \tweakf_2,\dots, \tweakf_{n-1}$,
	how can we maintain  $\tweakf_1, \tweakf_2,\dots,\tweakf_{n-1}$ 
	while tweaking $\tweakf_n$? 
	
	{\bf I4.} 
	Tools are developed independently by different developers.
	How can we make sure these independently developed tools are compatible with ASPECT?

	\subsection{ASPECT flow}
	To address the above 4 issues, 
	we illustrate the tweaking process for, 
	say, $\T_4$.
	
	{\bf Step1.} 
	ASPECT first calls the tool $\T_4$ to start tweaking, 
	then calls previously applied tools $\T_1, \T_2, \T_3$ to start preparation. 
	
	{\bf Step2.1.} 
	$\T_4$ then finds the target feature $\tweakf_4$ 
	by calling its \textit{Feature Generator}. 
	
	{\bf Step2.2.} 
	$\T_1, \T_2, \T_3$ call their respective \textit{Feature Calculators} to 
	calculate the corresponding features $\scalef_1, \scalef_2, \scalef_3$. 
	This step is concurrent with Step 2.1.
	
	{\bf Step3.} 
	$\T_4$ starts the \textit{Tweaking Algorithm}. 
	Every time $\T_4$ needs to modify $\scaled_3$, 
	$\T_4$ sends the intended modification to ASPECT for validation.
	
	{\bf Step4.} 
	ASPECT calls  $\T_1, \T_2, \T_3$ to confirm the modification 
	with their own \textit{Feature Validators}. 
	ASPECT summarizes the feedback from $\T_1, \T_2, \T_3$ 
	and replies ``yes/no'' to  $\T_4$.
	
	{\bf Step5.1.} 
	If the reply is ``yes'',  
	ASPECT modifies $\scaled_3$ and tells $T_1, \T_2, \T_3$ to 
	update the feature statistics by using their \textit{Feature Updators}.
	
	{\bf Step5.2.} 
	If the reply is ``no'',  
	ASPECT tells $\T_4$ to find an alternative modification.
	
	{\bf Step6.} Repeat from Step3 until $\T_4$ halts.
	
	In the flow above, 
	I1 is addressed by Step2.1 (\textit{Feature Generator}), 
	I2 is addressed by Step3 (\textit{Tweaking Algorithm}) and 
	I3 is handled in Step2.2, Step4 and Step5 
	(\textit{Feature Calculator, Feature Validator, Feature Updator}). 
	We will explain I1, I2, and I3 
	in Sec.\ref{sec:toolComponent}, 
	and I4 in Sec.\ref{sec:protocol} in detail.
	
	\subsection{Tweaking tool component}
	\label{sec:toolComponent}
	We now explain how each individual tool $\T_i$ should be implemented. 
	Each $\T_i$ must have at least 5 components.
	
	{\textit{\textbf{Feature Generator}}}: 
	This module generates the target feature statistics for the tweaked dataset.
	Such a generation can be done in 3 ways:
	(i) \textit{\textbf{User input}}: 
	The user might have their own target feature statistics for the scaled dataset. 
	For example, the user might want to specify
	the number of {males} in the population.	
	Hence, the user can manually input target features.
	(ii) \textit{\textbf{Developer generation}}: 
	When a developer implements the tweaking tool for a specific feature, 
	the developer has a better understanding of 
	how the feature changes while the dataset scales. 
	Therefore, the developer can provide the feature generation tool
	for his/her own developed feature. 
	(iii) \textit{\textbf{Generate through historical data}}:
	Apart from the previous two methods, 
	statistical tools can be developed for certain features for general purposes, 
	e.g. frequency distribution for attribute values.
	One can first take chronological snapshots of the dataset (if applicable), 
	$\oldd_1, \oldd_2,\dots, \oldd_k$,
	then extract the feature $\oldf_i$ from each snapshot dataset. 
	Next, apply data fitting methods on $\oldf_i$ to  
	fit $\oldf_i$  into different statistical models, 
	e.g. Poisson distribution. 
	Once the best matching model is learned, say, Poisson distribution,  
	we can learn how the shape parameter $\lambda$ varies as the dataset grows.
	Hence, we can get the target feature $\scalef_i$.
	Such an approach is orthogonal to this paper, 
	and will be elaborated in a separate paper.

	\textit{\textbf{Tweaking Algorithm}}: 
	It tweaks the dataset $\tweakd_{i-1}$ to make sure that 
	$\tweakd_{i}$ has the target feature $\scalef_i$ at the end of tweaking. 
	Note that it is not trivial to provide a tweaking algorithm for a complex feature, 
	e.g. $\linear$ feature as presented in Sec.~\ref{sec:featuretweaker}.
	The developer has to code the tweaking algorithm. 
	Moreover, the tweaking algorithm can only modify the dataset through 
	similar operations presented in Fig. \ref{fig:interface}
	
	\textit{\textbf{{Feature Calculator}}}: 
	It calculates the feature statistics for $\scalef_i$ from a given dataset.

	\textit{\textbf{Feature Validator}}: 
	It checks whether a proposed tuple insertion/deletion/replacement 
	affects some existing feature.
	Assuming 1 modification on a tuple is needed when 
	tweaking feature $\scalef_n$ on dataset $\scaled_{n-1}$.
	Modifying either $t_1$ or $t_2$ will satisfy $\scalef_n$,
	and modifying $t_1$ changes a 
	previously tweaked feature $\tweakf_1$,
	but modifying $t_2$ does not, 
	Then, modify $t_2$ instead of $t_1$. 
	
	At times, it is too strict when a tuple modification is allowed 
	only if no previously tweaked feature affected.  
	For example, 
	$\F_1$ = \{more than half of the customers are men\} 
	and $\F_2$ = \{more than half of the customers are women\}.
	These $2$ features are contradictory, 
	so one of it has to be violated.
	Hence, the validation needs to be relaxed. 
	In this paper, a tuple modification is allowed if 
	the resulting errors for all previously tweaked features are less than 5\%, 
	which is the threshold $e_{threshold}$.
	Consider the example used previously: 
	if error $< e_{threshold}$, 
	then modification on $t_1$ is allowed as well; 
	however, if error $\geq e_{threshold}$,
	then only $t_2$ can be chosen.
	
	In the worst case, if no tuple modification can satisfy
	previously tweaked feature's $e_{threshold}$,
	ASPECT allows more tuple modification by relaxing the validation on fewer features.
	So, some of the feature's error might be larger than $e_{threshold}$.
	
	\textit{\textbf{Feature Updater}}:
	After each modification of the tuples, 
	the\textit{ Feature Updater} 
	updates the tweaked features' statistics.
	
	Under ASPECT, it is the tool developers' responsibility
	to ensure that the above requirements are correctly implemented
	and adhere to ASPECT's structure. If the developer 
	does not, say, properly validate the modifications for
	\textit{Feature Validator}, then it is highly likely
	the corresponding feature will be affected by subsequent tweaks. 
	
	\begin{figure}[t]
		
		%
		%
		%
		%
		\centering
		\includegraphics[height=0.7in]{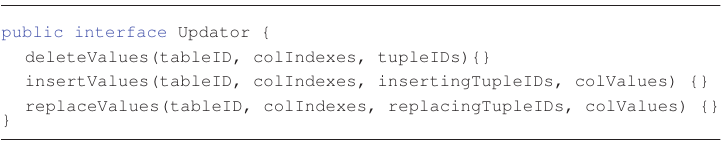}
		\vspace{-2mm}
		\caption{Feature Updator Interface}
		\label{fig:interface}
		\vspace{-4mm}
	\end{figure}
	
	\subsection{ASPECT compatibility guarantee}
	\label{sec:protocol}
	As mentioned in I4, 
	we need to guarantee each individually developed tool 
	can be used in ASPECT. 
	So, we standardise interface for the \textit{Feature Updator} and  \textit{Feature Validator}. 
	By having the same structure, 
	all tweaking tools will be compatible with ASPECT.
	Fig.\ref{fig:interface} presents the common functions 
	to be implemented for any \textit{Feature Updator}. 
	\textit{Feature Validator} uses a similar interface, 
	so it is omitted here.
	
	We classify three types of modifications that 
	can be made on a table with $n$ tuples and $k$ columns excluding primary key.

	{\tt \bf deleteValues}: 
	Given a list of tuples ({\tt tupleIDs}) in a table ({\tt tableID}),
	this operation erases some columns ({\tt colIndexes}) of these tuples. 
	Note that the deleted entries are temporarily empty, 
	new values will be added back via {\tt insertValues}.
	In Fig.\ref{fig:demoImplementation}, Step1 is a {\tt deleteValues} operation. 
	It deletes the first and third column's values of first and second tuple.
	The erased entries are empty after Step1.
	
	{\tt \bf insertValues}: 
	Given a table  ({\tt tableID}), 
	a list of values $\langle v_1,\dots, v_d \rangle$ ({\tt colValues}), 
	some columns $\langle c_1,\dots,c_d \rangle$,
	this operation adds this 
	$\langle v_1,\dots, v_d \rangle$ 
	into the tuples $t_1,t_2,\dots,t_m$ ({\tt insertingTupleIDs}), 
	where $v_1$ is the value for column $c_1$. 
	These values can only be inserted to the empty entries 
	resulting from {\tt deleteValues}.
	Moreover, the total number of inserted values 
	is the same as the total number of deleted values.
	In Fig.\ref{fig:demoImplementation}, 
	Step2 is an {\tt insertValues} operation. 
	It inserts  [4,4] into the first and third column of second tuple.

	{\tt \bf replaceValues}: 
	Given a table ({\tt tableID}) and a list of its tuples ({\tt replaceingTupleIDs}), 
	this operation replaces some columns ({\tt colIndexes}) 
	of these tuples with new values ({\tt newValues}).
	All these tuples will have the same values 
	({\tt newValues}) for the replaced attributes.
	\noindent {\tt replaceValues} is different from {\tt insertValues} where 
	the replacing entries must not be empty entries. 
	In Fig.\ref{fig:demoImplementation}, 
	Step3 is a {\tt replaceValues} operation.
	It replaces the first, second and third columns of second and third tuples with  [7,7,7].

	\begin{figure}
		\centering
		\includegraphics[height=1.1in]{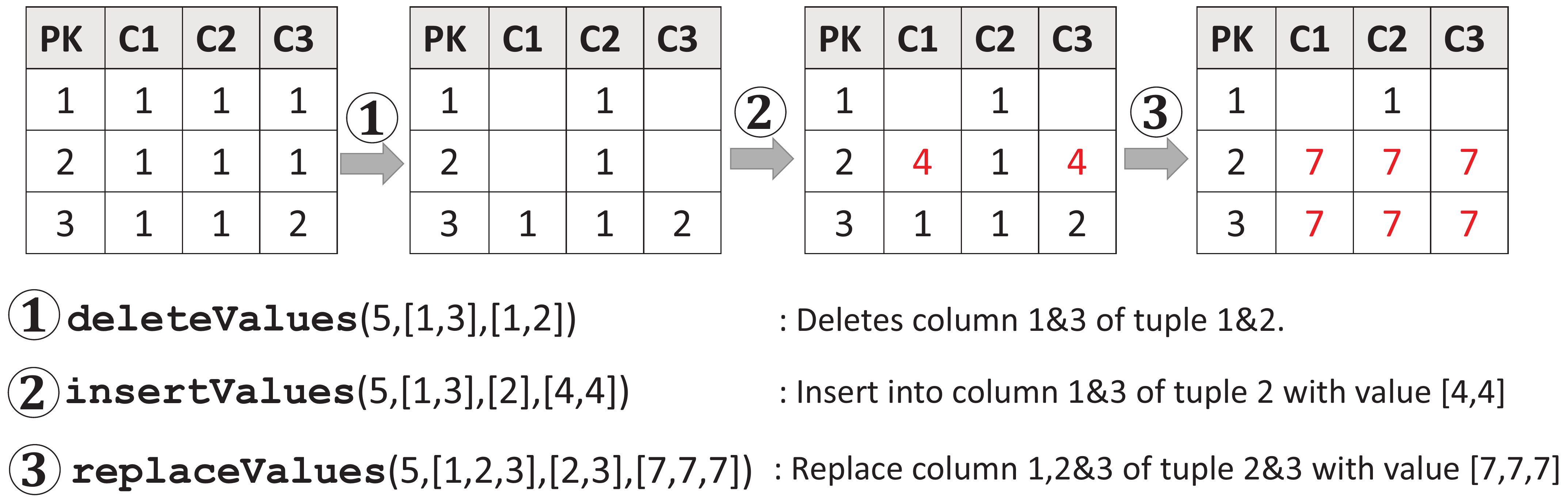}
		\caption{Demonstration for Updator Interface: 
			The update is on table with tableID 5.
			Step1 demonstrates  {\tt \bf deleteValues}; 
			Step2 demonstrates  {\tt \bf insertValues}; 
			Step3 demonstrates  {\tt \bf replaceValues}.
			The affected values are highlighted using red colour, 
			and the operation for each step is listed at the bottom.}
		\label{fig:demoImplementation}
		\vspace{-4mm}
	\end{figure}

	Any feature tweaking tool developer has to implement 
	the functions in Fig.\ref{fig:interface}
	to update the corresponding features. 
	The types of modifications in ASPECT are not exhaustive. 
	There can be other types of modifications that one wants to make. 
	The developer should transform other modifications to 
	the three basic modifications in Fig.\ref{fig:interface}.
	In this case,
	we sacrifice some accuracy to 
	favour generality for ASPECT and ease of programming for the developers. 
	If one wants to validate a modification on two tables A and B,
	one can validate it on table A and then validate it on table B, 
	then accept the modifications if both tables' validations are successful.
	We might have false-positive/false-negative cases for these types of validations,
	but we believe the effect is minor.
	In Sec.~\ref{sec:Tcoappear},
	we need to modify $k$ tables simultaneously.
	To fit into ASPECT, we validate/modify the tables one by one.
	Experiments in Sec.~\ref{sec:results} show this approach is effective.

	\section{Example Features For Demonstration}
	\label{sec:featuretweaker}	
	In ASPECT, the more tools we apply, the more features we can preserve.
	As pointed out previously, for some feature $\sF_j$, 
	it is inevitable that some previously tweaked feature $\sF_i$ 
	may be affected when tweaking $\sF_n$.
	The concern is how much $\sF_i$ is affected while tweaking $\sF_j$ under ASPECT.
	
	In this paper, 
	we run experiments to demonstrate 
	how $\sF_i$ is affected while tweaking $\sF_j$ empirically. 
	We will apply $3$ tools $\T_1, \T_2, \T_3$ sequentially 
	on $\scaled_0$ to fix the features $\sF_1, \sF_2, \sF_3$.  
	At the end of the tweaking,
	we will examine how well the three features are preserved.
	We propose three important and complex features: 
	$\linear$, $\coappear$, $\pairwise$ as example features.  
	These features are selected based on two criteria: 
	
	\textbf{Popularity}: 
	Our ultimate goal is to build an application-specific system for dataset scaling.
	Hence, it only makes sense if the features are widely used.
	The features we consider are used widely in the literature 
	~\cite{joinsyn,linkedSyn,recommendation,volunteer}. 
	
	\textbf{Complexity}: 
	Since we want to check how well ASPECT 
	can maintain features if tools can undo previously applied tools.
	Hence, the features we use for demonstration should be complex and affect each other. 
	Simple features such as 
	``\# of null values in each table'', 
	``\# of tuples in each table'' are easy to tweak. 
	To avoid presenting a strawman test, 
	we skip such simple features in this paper.
	Nevertheless, 
	these simple feature tweaking tools are already implemented in ASPECT.
	The user can apply such simple features if needed.
	
	To the best of our knowledge, 
	we are the first to publish these 3 features.
	Moreover, we are the first to preserve these 3 features within one dataset.
	
	In this section, we make the following assumption to avoid confusion.
	If $T'$ references $T$ (denoted $T' \rightarrow T$), 
	it does so via one foreign key constraint only. 
	This assumption can be easily relaxed.
	Also, $t_1 \rightarrow t_2, t_1 \in T_1, t_2 \in T_2$ 
	means tuple $t_1$ references $t_2$.  
	
	\subsection{Linear feature}	
	\label{sec:Tlinear}

	Applications are often interested in computing $T_k\bowtie\dots\bowtie T_1$
	for some reference chain $T_k\rightarrow \dots  \rightarrow T_1$.
	For example, to count the number of (distinct) movies with reviews
	that are commented on by users, 
	one may need to take the join of a reference chain
	from {\tt comments} to {\tt reviews} to {\tt movies}. 
	This is what we call a $\linear$ feature.
	
	Fig.\ref{fig:lineardemo} illustrates the concept of a $\linear$ feature.
	For any reference chain $T_k \rightarrow \dots \rightarrow T_1$, 
	the $\linear$ feature describes how one tuple $t_i\in T_i$ 
	is {\it transitively} referenced by other tuples $t_j\in T_j$, 
	for any $j>i$. 
	In Fig.\ref{fig:lineardemo}, 
	$a_2\in T_A$ is directly referenced by $b_2, b_3\in T_B$,
	and indirectly referenced by $c_1, c_2, c_3\in T_C$. 
	However, $a_2$ is not indirectly referenced by any tuple in $T_D$. 
	
	\begin{figure}
		\centering
		\includegraphics[height = 1.2in]{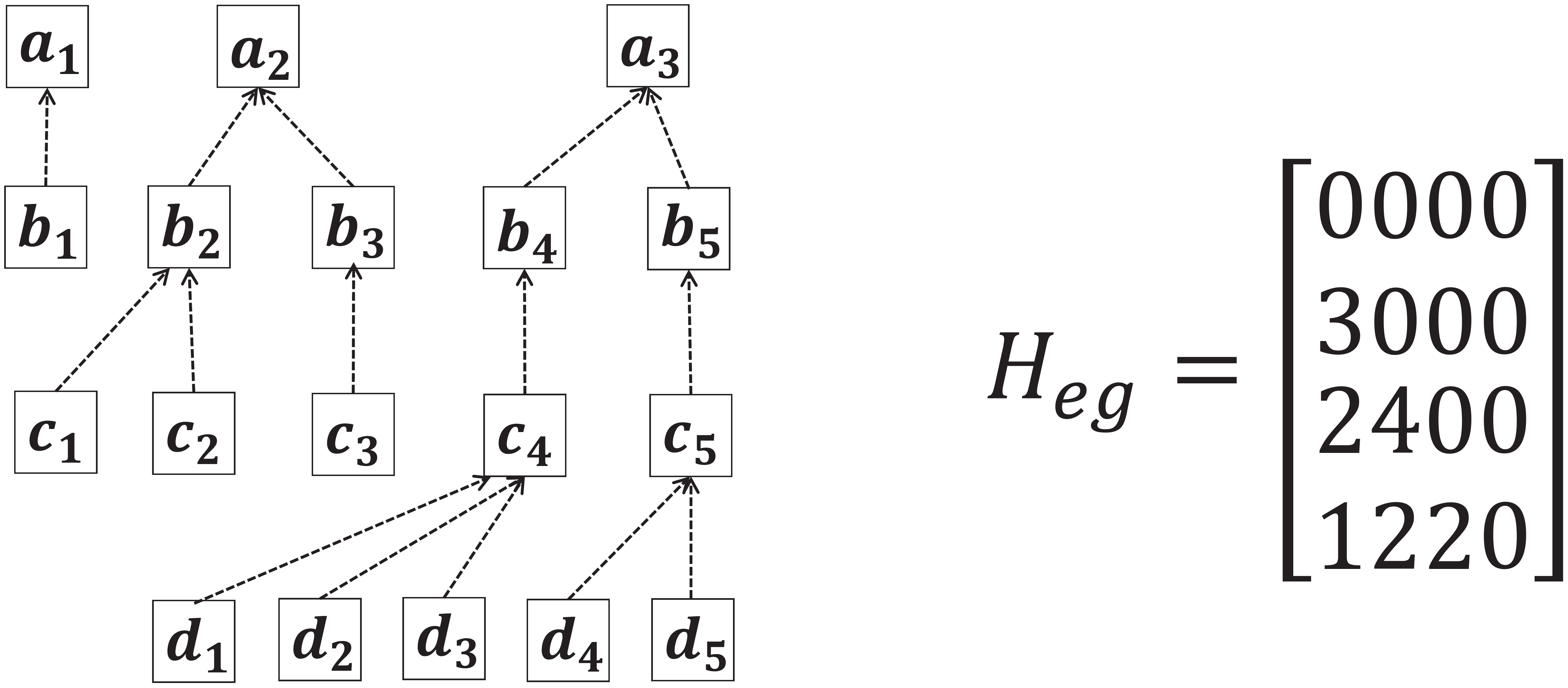}
		\caption{{\bf Linear feature}. There are $4$ tables 
			$T_D\rightarrow T_C \rightarrow T_B \rightarrow T_A$. 
			Each node in the tree is a tuple in the table.
			$a_i$ represents the $ith$ tuple in $T_A$, $b_i,c_i,d_i$ are defined similarly.
			$a_3$ is referenced by $T_B, T_C, T_D$ and
			$a_2$ is only referenced by $T_B, T_C$. 
			$H_{eg}$ is the corresponding linear join matrix.
		}
		\label{fig:lineardemo}
		\vspace{-4mm}
	\end{figure}
	
	Variants of linear joins are widely used to generate query result approximations, 
	such as database sampling~\cite{linkedSyn} and database generation~\cite{joinsyn}. 
	This paper presents $\Tlinear$, 
	an algorithm to tweak a dataset so it accurately scales the size of linear joins.
	\newline
	
	\begin{definition}	
		A tuple $t_1 \in T_1$ is a \textbf{root} of 
		$T_k  \rightarrow \dots \rightarrow T_1 $ 
		if there are tuples 
		$t_2\in T_2$, $\dots$, $t_k \in T_k$ such that 
		$t_k\rightarrow \dots \rightarrow t_1$.
		Let $S_{j,i}$ be the set of roots of
		$T_j \rightarrow \dots \rightarrow T_i$,
		and $h_{j,i}=|S_{j,i}|$.
	\end{definition}
	
	In Fig.\ref{fig:lineardemo}, $a_2$ is a root of 
	$T_C\rightarrow T_B\rightarrow T_A$,
	but not a root of  $T_D\rightarrow T_C\rightarrow T_B\rightarrow T_A$.
	If $T_1=T_A$, $T_2=T_B$, $T_3=T_C$ and $T_4=T_D$,
	then $S_{4,2}=\{b_4,b_5\}$, so $h_{4,2}=2$.
	The $h_{j,i}$ values form a matrix, as follows:
	\newline
	
	\begin{definition}
		For a maximal chain 
		$T_k \rightarrow \dots \rightarrow T_1 $, 
		define its \textbf{linear join matrix} as a lower triangular matrix 
		$$ H= \begin{bmatrix}
		0        &          &         &            & 0 \\
		h_{2,1}  & 0        &         &            &  \\
		h_{3,1}  & h_{3,2}  & \ddots  &            &  \\
		\vdots   & \vdots   & \ddots  & \ddots     &  \\
		h_{k,1}  & h_{k,2}  & \dots   & h_{k,k-1}  & 0 
		\end{bmatrix} 
		$$
		$T_k\rightarrow\dots\rightarrow  T_1$ is {\bf maximal}
		if there is no $T_{k+1}$ such that 
		$T_{k+1}\rightarrow T_k\rightarrow \dots \rightarrow T_1$ and
		$T_k\rightarrow\dots \rightarrow T_1 \rightarrow T_{k+1}$.
	\end{definition}
	
	In Fig.\ref{fig:lineardemo},  
	there are 2 roots $a_2, a_3$ for  $T_C\rightarrow T_B\rightarrow T_A$, 
	and 3 roots $a_1, a_2, a_3$ for  $T_B\rightarrow T_A$, 
	so $h_{3,1} = 2, h_{2,1} = 3$. 
	\newline
	
	Let $H$ be a linear join matrix in some $\scaled_i$ before tweaked by $\Tlinear$
	and $\tweakH$ the target linear join matrix.
	$\Tlinear$ tweaks $H$ to become $\tweakH$.
	There are two concerns: 
	\begin{enumerate}
		\item Is it possible to tweak $H$ to $\tweakH$?  (necessary conditions)
		\item How to tweak $H$ to $\tweakH$? (sufficient conditions)
	\end{enumerate}

	We first address concern 1 using the following theorem.
	\newline
	
	\begin{theorem}\text{\normalfont [{\bf necessity}]}
		Let $\scaleH$ be the linear join matrix of 
		$T_k\rightarrow \cdots \rightarrow T_1$
		before tweaked by $\Tlinear$,
		and $\tweakH$ be the target linear join matrix. 
		$\scaleH$ can be tweaked to  $\tweakH$  only if
		
		\begin{enumerate}
			\item[(L1)] $\tweakh_{j,i} \leq \min_{\substack{i \leq n \leq j}} |T_n|$
			for all $1 \leq i < j \leq k$.
			\item[(L2)] $\tweakh_{i+1,i} \geq \tweakh_{i+2,i} \geq \dots \geq \tweakh_{k,i}$
			for all $1\leq i\leq k-1$.
			\item[(L3)] $\tweakh_{j,1} \leq \tweakh_{j,2} \leq \dots \leq \tweakh_{j,j-1}$
			for all $2\leq j\leq k$.
			\item[(L4)] $\tweakh_{j,i+1} - \tweakh_{j+1,i+1} 
			\geq  \tweakh_{j,i} - \tweakh_{j+1,i}$ for all $1\leq i <j-1 < k-1$.
		\end{enumerate}
		\label{theorem:linearnec}
	\end{theorem}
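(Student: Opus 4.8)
The plan is to read (L1)--(L4) as structural constraints that hold for the linear join matrix of \emph{any} dataset over tables of the given sizes. Since tweaking $\scaleH$ to $\tweakH$ means producing an actual dataset (whose table sizes $|T_n|$ are fixed by the size-scaler) whose linear join matrix equals $\tweakH$, any achievable $\tweakH$ must satisfy every inequality that holds for a generic realizable matrix. So I would drop the tilde and establish (L1)--(L4) for an arbitrary linear join matrix $\scaleH$ with entries $h_{j,i}$; necessity for $\tweakH$ then follows immediately.

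For the setup I would use the single-foreign-key assumption to model each reference $T_{n+1}\rightarrow T_n$ as a partial function $\phi_n : T_{n+1}\to T_n$ (partial, since a nullable key may reference nothing). For $i\le n\le j$ let $\phi_{n\to i}=\phi_i\circ\phi_{i+1}\circ\cdots\circ\phi_{n-1}$ be the composite that chases references from level $n$ down to level $i$, defined only where every intermediate link is non-null. Then $t_i\in T_i$ is a root of $T_j\rightarrow\cdots\rightarrow T_i$ exactly when $t_i$ lies in the image of $\phi_{j\to i}$, so $S_{j,i}=\phi_{j\to i}(T_j)$ and $h_{j,i}=|S_{j,i}|$. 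Two elementary facts drive everything: (a) $S_{j,i}=\phi_i(S_{j,i+1})$, since the roots reaching $T_i$ are exactly the one-step images under $\phi_i$ of the roots reaching $T_{i+1}$; and (b) $S_{j+1,i}\subseteq S_{j,i}$, since a chain from $T_{j+1}$ down to $t_i$ restricts to a chain from $T_j$ down to $t_i$.

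Conditions (L1), (L2), (L3) then fall out of this framework. For (L2), fact (b) directly gives the decreasing sequence $h_{i+1,i}\ge h_{i+2,i}\ge\cdots\ge h_{k,i}$. For (L3), fact (a) together with the observation that a (partial) function never enlarges the cardinality of an image gives $h_{j,i}=|\phi_i(S_{j,i+1})|\le|S_{j,i+1}|=h_{j,i+1}$. For (L1), fix $n$ with $i\le n\le j$ and factor $\phi_{j\to i}=\phi_{n\to i}\circ\phi_{j\to n}$; then $S_{j,i}$ is the image under a partial function of a subset of $T_n$, so $h_{j,i}\le|T_n|$, and minimizing over $n$ yields (L1).

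The substantive step, and the one I expect to be the main obstacle, is (L4). Using fact (b) I would first rewrite each side as a set difference: $h_{j,i}-h_{j+1,i}=|S_{j,i}\setminus S_{j+1,i}|$ and likewise at level $i+1$, so (L4) becomes the claim $|S_{j,i+1}\setminus S_{j+1,i+1}|\ge|S_{j,i}\setminus S_{j+1,i}|$. Writing $A=S_{j+1,i+1}\subseteq B=S_{j,i+1}$ and invoking fact (a), the right-hand side is $|\phi_i(B)\setminus\phi_i(A)|$. The key observation is that every $t_i\in\phi_i(B)\setminus\phi_i(A)$ has a $\phi_i$-preimage in $B$, and that preimage must lie in $B\setminus A$ --- a preimage in $A$ would force $t_i\in\phi_i(A)$. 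Since distinct elements of $T_i$ have disjoint preimage sets under the function $\phi_i$, selecting one such preimage per element defines an injection $\phi_i(B)\setminus\phi_i(A)\hookrightarrow B\setminus A$, giving $|\phi_i(B)\setminus\phi_i(A)|\le|B\setminus A|$, which is exactly (L4). The care here lies in (i) reducing the differences of $h$-values to honest set differences via the nesting (b), and (ii) justifying the preimage injection when $\phi_i$ is only partial; both are routine once facts (a) and (b) are in hand, so the real content is recognizing that (L4) says precisely that the step $\phi_i$ cannot introduce new ``$T_j$-only'' roots faster than it already carries them at the level above.
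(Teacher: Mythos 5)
Your proof is correct and follows essentially the same route as the paper's: (L1) from the single-parent property of references, (L2) from the nesting $S_{j+1,i}\subseteq S_{j,i}$, (L3) from images under the reference map not growing, and (L4) by exhibiting an injection from $S_{j,i}\setminus S_{j+1,i}$ into $S_{j,i+1}\setminus S_{j+1,i+1}$ --- your ``choose a preimage in $B\setminus A$'' is exactly the paper's ``$t$ has a child $t'$ with a path from $T_j$ but not from $T_{j+1}$.'' The only difference is packaging (partial functions and images versus trees and paths), which does not change the substance of the argument.
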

	
	\begin{proof}
		\noindent
		{\it (L1)} The condition says the number of roots is not more than any table size
		along the linear join. 
		Consider the directed trees defined by the tuple references,
		like in Fig.\ref{fig:lineardemo}.
		Since each tuple has at most one parent, 
		$|T_n|$ is at least the number of roots $\tweakh_{j,i}$ for any $i\le n\le j$.
		
		\noindent
		{\it (L2)} The condition says the elements in $\tweakH$ for a column are
		non-increasing.
		It follows from observing that every path from $T_{j+1}$ to $T_i$
		contains a path from $T_j$ to $T_i$.
		
		\noindent
		{\it (L3)}  This condition says the elements in $\tweakH$ for a row
		are non-decreasing.
		It follows from observing that every path from $T_j$ to $T_i$
		contains a path from $T_j$ to $T_{i+1}$.
		
		\noindent
		{\it (L4)}  Since $S_{j+1,i}\subseteq S_{j,i}$, then
		$\tweakh_{j,i}-\tweakh_{j+1,i}=|S_{j,i}|-|S_{j+1,i}|$= $|S_{j,i}-S_{j+1,i}|$.
		Then, 
		$\tweakh_{j,i+1}-\tweakh_{j+1,i+1}=|S_{j,i+1}-S_{j+1,i+1}|$.
		Any $t\in S_{j,i}-S_{j+1,i}$ has a child $t^\prime$ that has a path
		from $T_j$ but not from $T_{j+1}$,
		so $t^\prime\in S_{j,i+1}-S_{j+1,i+1}$.
		Thus $|S_{j,i}-S_{j+1,i}|\leq |S_{j,i+1}-S_{j+1,i+1}|$ and {\it (L4)} follows.
		\newline
	\end{proof}
	
	\begin{figure}
		\centering
		\includegraphics[height=2.6in]{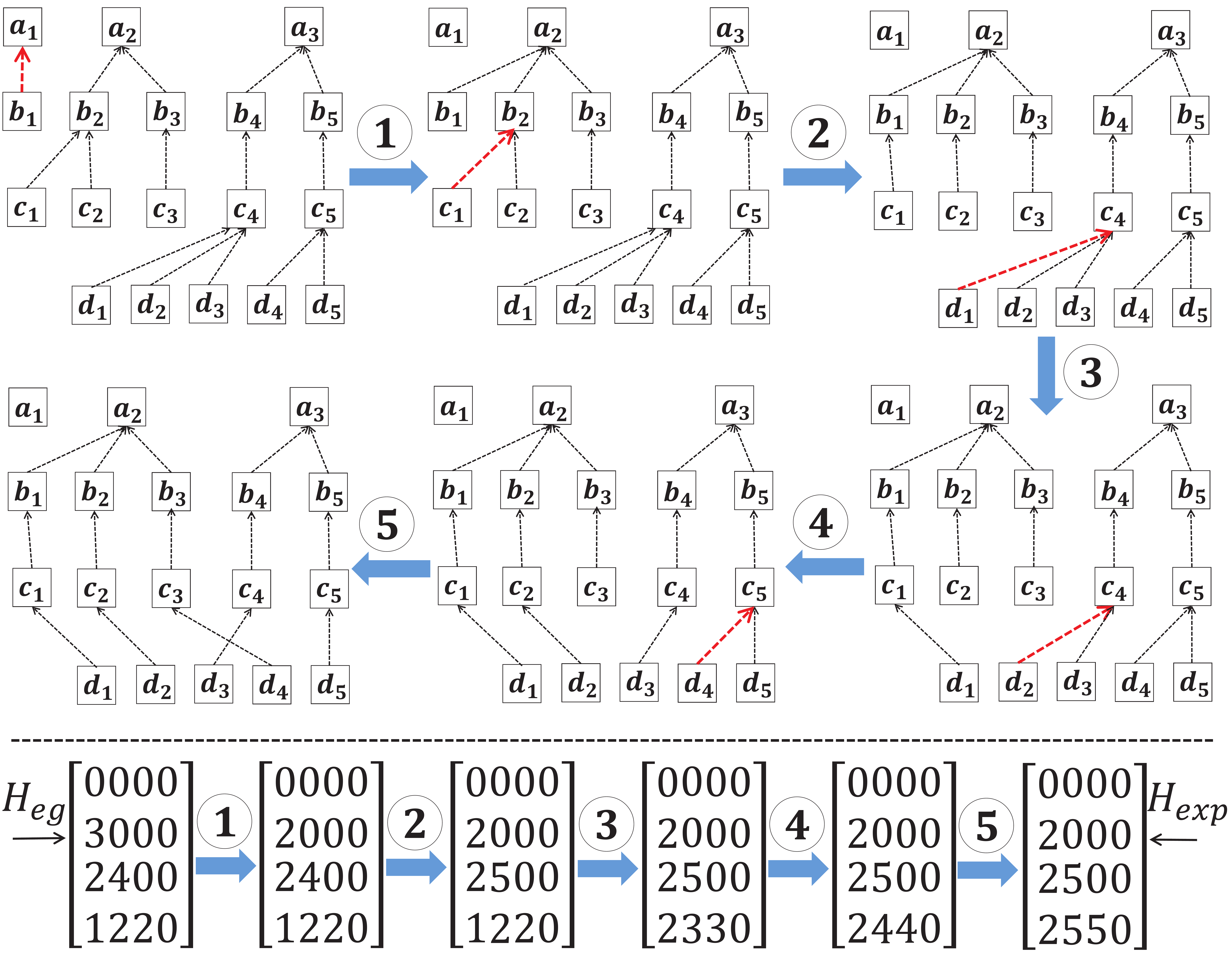}
		\caption{Tweaking demonstration for $\Tlinear$. 
			The red dotted lines are the modifications made on the dataset.
			The linear join matrix after each step is presented at the bottom.}
		\label{fig:linearExp}
		\vspace{-4mm}
	\end{figure}

	Next, we describe how $\Tlinear$ tweaks $\scaleH$ to $\tweakH$.
	$\Tlinear$ tweaks $\scaleH$  row by row.
	For $i$th row, $\Tlinear$ then tweaks the row entry by entry.
	$\Tlinear$ first does {\tt leadingAdjust}, 
	tweaking $(\scaleh_{i,1},\scaleh_{i,2},\dots, \scaleh_{i,i-1})$ to $(\tweakh_{i,1},\dots)$.
	$\Tlinear$ then does {\tt nonLeadingAdjust}:
	it tweaks $(\tweakh_{i,1},\dots)$ to $(\tweakh_{i,1},\tweakh_{i,2},\dots)$  
	$ \rightarrow (\tweakh_{i,1},\tweakh_{i,2},$
	$\tweakh_{i,3}, \dots)$  $\rightarrow \dots \rightarrow$ 
	$(\tweakh_{i,1},$ $\tweakh_{i,2},\tweakh_{i,3}, \dots,\tweakh_{i,i-1} )$. 
	Instead of providing a formal proof,
	we present an example of tweaking from 
	$H_{eg}$ to $H_{exp}$ in Fig.\ref{fig:linearExp} and 
	attach the proofs in the appendix.

	{\bf Second Row:} 
	We are expecting one less root for $T_B\rightarrow T_A$. 
	$\Tlinear$ chooses an existing root, say $a_1$, 
	and plucks all its descendants ($b_1$) and 
	attach them to some other root, say $a_2$.
	After such modification, 
	we will have 2 roots $a_2, a_3$ for $T_B\rightarrow T_A$.
	This is reflected in step 1.
	
	{\bf Third Row:} 
	No modifications are needed for the first entry. 
	For the second entry, 
	one more root for $T_C\rightarrow T_B$ is expected.
	Hence, we pluck $c_1$ from $b_2$ and attach $c_1$ to $b_1$. 
	This completes the tweaking for 
	the second row and it is reflected in step 2.
	
	{\bf Fourth Row:} 
	For the first entry, 
	we expect 1 more root for 
	$T_D\rightarrow T_C\rightarrow T_B\rightarrow T_A$, say $a_2$. 
	Hence, we pluck $d_1$ from $c_4$ and attach $d_1$ to $c_1$. 
	Now we have 2 roots $a_2, a_3$ for 
	$T_D\rightarrow T_C\rightarrow T_B\rightarrow T_A$, 
	and the last row becomes $(2,3,3,0)$.
	This is reflected in step 3.
	For the second entry, 
	we expect 2 more roots for $T_D\rightarrow T_C\rightarrow T_B$. 
	Hence, we pluck $d_2$ from $c_4$ and attach it to $c_2$. 
	This is reflected in step 4, and the last row is $(2,4,4,0)$ now.
	Lastly, 
	we pluck $d_4$ from $c_5$ and attach it to $c_3$ which ends the tweaking.
	This is reflected in step 5.

	When tweaking the $i$th row, 
	$\Tlinear$ always plucks the tuples in $i$th table 
	and attaches them to the $(i-1)$th table. 
	$\Tlinear$ never re-modify the entries in previously tweaked rows, 
	which gives some intuition that the tweaking is always possible.

	So far, we only considered tweaking one linear join matrix.
	In general, a dataset can have multiple overlapping reference chains.
	Suppose we have already tweaked the matrix for 
	$T_4 \rightarrow T_3 \rightarrow T_1$, 
	then tweak the matrix for an overlapping 
	$T_4 \rightarrow T_3 \rightarrow T_2$. 
	This can undo the tweaking for $T_4 \rightarrow T_3 \rightarrow T_1$.
	
	The issue is not just for overlapping linear joins but, in general,
	applies to any pair of tweaking algorithms.
	For example, running $\T^{\prime\prime}$ (e.g. $\Tpairwise$) 
	after $\T^\prime$ (e.g. $\Tcoappear$)
	can undo the work done by $\T^\prime$.
	We do not have a solution yet. 
	Instead, we adopt the heuristic as presented in Sec.\ref{sec:toolComponent}.
	
	\begin{figure}[t]
		\centering
		\includegraphics[height=1.1in]{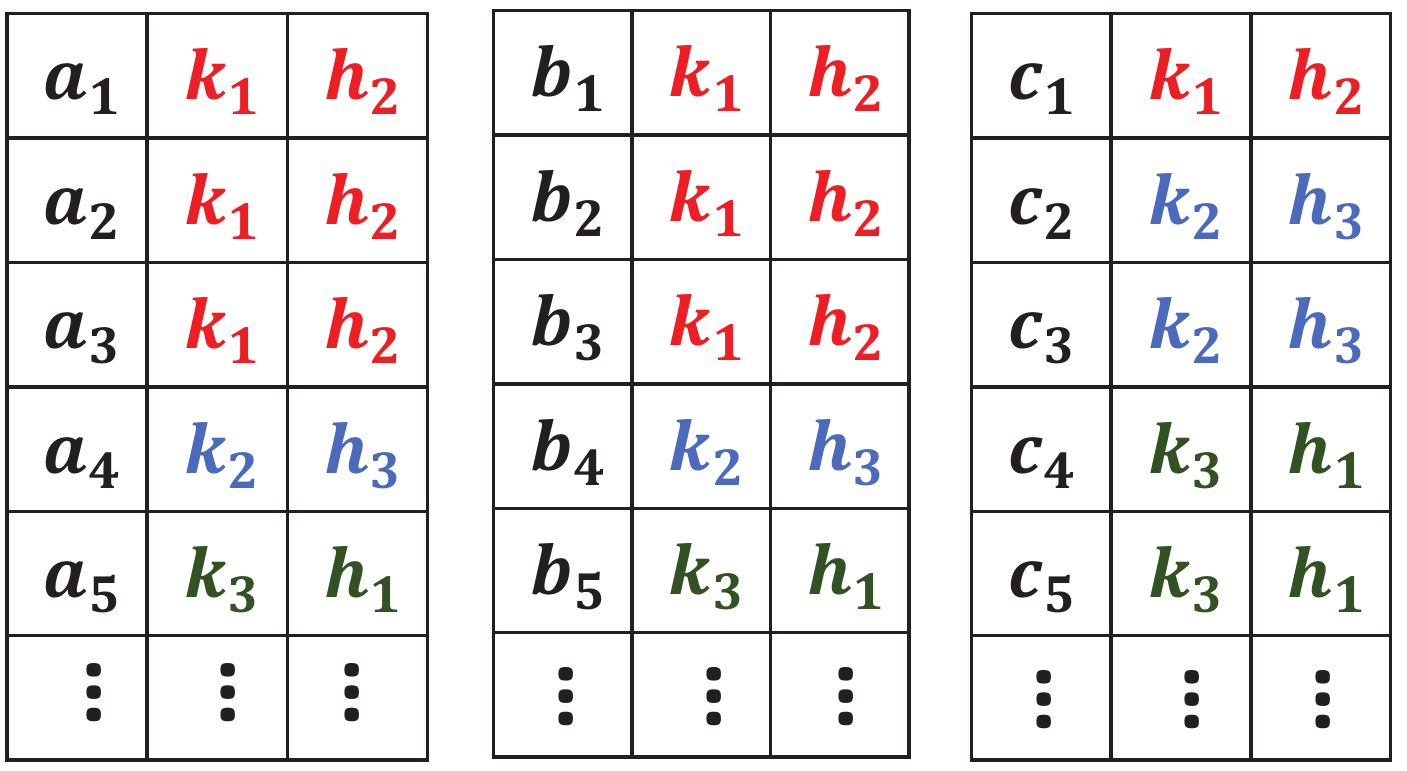}
		\caption{\textbf{Coappear feature}.
			There are 3 tables $T_A, T_B, T_C$ referencing to the same tables $T_K, T_H$. 
			$ \langle k_1, h_2 \rangle$ appeared together in $T_A$, $T_B$ for 3 times, 
			in $T_C$ for 1 time. 
		}
		\label{fig:coappeardemo}
		\vspace{-4mm}
	\end{figure}
	
	\subsection{Coappear feature}	
	\label{sec:Tcoappear}
	
	Fig.\ref{fig:coappeardemo} illustrates the concept of a $\coappear$ feature.
	The tables $T_A$, $T_B$ and $T_C$ may be for {\tt comment}, {\tt share}
	and {\tt like} in a social network service,
	referencing tables $T_K$ and $T_H$ for {\tt post} and {\tt users}.
	Thus, the same $\langle postID,userID \rangle$ may appear 
	multiple times in the same table and in multiple tables.
	This $\coappear$ feature can be used for user profiling;
	e.g. if Alice comments, shares and likes a post about volunteerism many times,
	it is more likely that Alice is interested in volunteer work~\cite{volunteer}.
	Other examples include group theme prediction~\cite{grouptheme} and 
	on-line recommendation~\cite{recommendation}. 
	
	Tweaking is done via tweaking a frequency distribution that
	captures the correlation in foreign key appearances:
	\newline
	
	\begin{definition}
		Suppose $T_1, \dots,T_k$ reference the same tables 
		$T_1^\prime$, $\dots$, $T_m^\prime$,
		and $b_1,\ldots,b_m$ coappear as foreign keys
		$v_1$ times in $T_1$, $\ldots$, $v_k$ times in $T_k$.
		If there are $n$ such $\langle b_1,\ldots, b_m\rangle$,
		then $\xi_{T_1, \dots, T_k}(v_1$, $\dots$,$v_k)=n$. 
		We call $\langle v_1, \dots,v_k\rangle$ a {\bf coappear vector}
		and $\xi_{T_1, \dots, T_k}$ the {\bf coappear distribution}.
		To simplify notation, we refer to
		$\scalexi_{T_1, \dots, T_k}$ as $\scalexi$ if there is no ambiguity.
	\end{definition}
	
	In Fig.\ref{fig:coappeardemo},  
	$\langle k_1, h_2\rangle$ appears 3 times in $T_A$, 
	3 times in $T_B$, and 1 time in $T_C$,
	so $\xi(3,3,1)=1$.
	Further,
	$\langle k_2, h_3\rangle$ and $\langle k_3,h_1\rangle$ each appears
	1 time in $T_A$, 1 time in $T_B$ and 2 times in $T_C$,
	so $\xi(1,1,2)=2$.
	\newline

	Like for linear joins, 
	we  present necessary and sufficient conditions for tweaking
	the coappear distribution:
	\newline

	\begin{theorem} \text{\normalfont [{\bf necessity}]}
		Suppose tables $T_1, \dots,T_k$ reference the same tables 
		$T_1^\prime, \dots, T_m^\prime$ in some $\scaled_i$ .
		$\scalexi$ is the coappear distribution before being tweaked by $\Tcoappear$, 
		and $\tweakxi$ is the target coappear distribution.
		$\scalexi$ can be tweaked to  $\tweakxi$
		only if:
		\label{theorem:coappearnecessary}
	\end{theorem}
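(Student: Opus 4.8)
The plan is to read the necessary conditions off two double-counting identities, treating the dataset as a multiset of foreign-key combinations. Each tuple of $T_i$ carries a combination $\langle b_1,\dots,b_m\rangle$ of primary keys drawn from $T_1^\prime,\dots,T_m^\prime$, and $\tweakxi(v_1,\dots,v_k)$ counts the \emph{distinct} combinations that occur exactly $v_1$ times in $T_1$, \dots, and $v_k$ times in $T_k$. Every condition I expect in the statement should follow by counting either tuples or combinations in two ways.

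First I would derive a \textbf{cardinality-preservation} condition. The three ASPECT primitives (\texttt{deleteValues}, \texttt{insertValues}, \texttt{replaceValues}) only rewrite foreign-key values inside existing tuples and never add or delete a tuple, so each $|T_i|$ is invariant during tweaking. Grouping the tuples of $T_i$ by their combination yields $|T_i| = \sum_{\langle v_1,\dots,v_k\rangle} v_i\,\scalexi(v_1,\dots,v_k)$, and the identical count must hold for $\tweakxi$. Hence necessity forces $\sum_{v} v_i\,\tweakxi(v) = \sum_{v} v_i\,\scalexi(v)$ for each $1\le i\le k$: tweaking may redistribute occurrences among combinations but cannot change the total number of foreign-key occurrences in any $T_i$.

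Next I would derive a \textbf{combination-count} condition. The combinations appearing at least once number $\sum_{v\neq\mathbf{0}}\tweakxi(v)$, while there are at most $\prod_{j=1}^{m}|T_j^\prime|$ admissible $m$-tuples of primary keys. Necessity therefore gives $\sum_{v\neq\mathbf{0}}\tweakxi(v)\le\prod_{j=1}^{m}|T_j^\prime|$. I would also record that each $\tweakxi(v)$ is a nonnegative integer, forced because it counts combinations; the per-table bound $\sum_{v:\,v_i>0}\tweakxi(v)\le|T_i|$ then follows for free from the cardinality identity, so it need not be listed separately.

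The double counting itself is routine. The main obstacle is twofold. First is the invariance claim: one must check that none of the three primitives, and in particular the one-table-at-a-time validation used when all $k$ tables are modified simultaneously (Sec.~\ref{sec:Tcoappear}), ever alters a table's cardinality; this is where the abstract feature definition has to be reconciled with the concrete rewriting operations, and I expect it to be the delicate step. Second is completeness --- confirming the listed conditions capture every obstruction --- which I would not attempt here, deferring the converse to the accompanying sufficiency result that exhibits a realisation of any $\tweakxi$ meeting both bounds.
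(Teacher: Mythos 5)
Your argument is correct and is essentially the paper's: both conditions are read off by double counting, once over tuples grouped by their foreign-key combination (giving $\sum_{\bf v} v_i\,\tweakxi({\bf v})=|T_i|$) and once over the combinations themselves. The one real divergence is your second condition: you state it as the inequality $\sum_{{\bf v}\neq\mathbf{0}}\tweakxi({\bf v})\leq\prod_{j=1}^{m}|T_j^\prime|$, whereas the paper's $(C2)$ is the equality $\sum_{\bf v}\tweakxi({\bf v})=\prod_{j=1}^{m}|T_j^\prime|$, which holds because the paper's coappear distribution assigns the all-zero vector to every foreign-key combination appearing in none of the $T_i$, so the sum over \emph{all} vectors exhausts the $\prod_j|T_j^\prime|$ possible combinations exactly. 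The two formulations are interconvertible once $\tweakxi(\mathbf{0})$ is allowed to absorb the slack, but the equality form is the one the paper later leans on in the sufficiency proof (to conclude $\sum_{\bf v}\xi^*({\bf v})=0$ so that surpluses and deficits cancel), so if you carry your inequality forward you would need to reinstate the zero-vector convention there. Your explicit justification that the ASPECT primitives preserve each $|T_i|$ is a point the paper leaves implicit, and your deferral of completeness to the sufficiency theorem matches the paper's structure.
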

	
	\begin{align}
	(C1)\phantom{XXX}& \sum_{\bf v} v_i
	\tweakxi({\bf v}) = |T_i|
	\ {\rm for\ }1\leq i\leq k \nonumber \\
	(C2)\phantom{XXX}& \sum_{\bf v}
	\tweakxi({\bf v}) = \prod_{i=1}^m |T_{i}^\prime|.
	\nonumber  
	\end{align}
	\begin{proof} 
		$(C1)$ $\tweakxi({\bf v})$ is the number of different 
		foreign key tuples $\langle b_1,\ldots, b_m\rangle$
		with coappear vector $\bf v$.
		Hence, each $\langle b_1,\ldots, b_m\rangle$ 
		appears $v_i$ times in $T_i$, so
		$ \sum_{\bf v} v_i \tweakxi({\bf v}) = |T_i|$.
		
		\noindent
		$(C2)$ $ \sum_{\bf v} \tweakxi({\bf v})$
		is the total number of different
		$\langle b_1,\ldots, b_m\rangle$ foreign key combinations.
		Since each $b_i$ is unique in $T_i^\prime$,
		the total number of combinations is  $\prod_{i=1}^m |T_{i}^\prime|$.
		\newline
	\end{proof}

	Next, we explain how $\Tcoappear$ 
	tweaks $\scalexi$ to $\tweakxi$. 
	Let $\xi^*=\scalexi-\tweakxi$, $\Delta^+=\{{\bf v}|\xi^*({\bf v})>0\}$,
	$\Delta^0=\{{\bf v}|\xi^*({\bf v})=0\}$ and 
	$\Delta^-=\{{\bf v}|\xi^*({\bf v})<0\}$.
	$\Tcoappear$  works as follows:
	
	For each ${\bf v}=(v_1,\ldots,v_k)\in \Delta^-$,
	it adds $|\xi^*({\bf v})|$ more  foreign key tuples
	$\langle b_1,\ldots, b_m\rangle$, each appearing $v_i$ times in $T_i$.
	It does this by looping $|\xi^*({\bf v})|$ times, and in each iteration:
	
	{\bf CoappearVectorRetrieve}:
	Pick the closest coappear vector
	${\bf v}^\prime=\langle v_1^\prime,\ldots,v_k^\prime\rangle \in \Delta^+$,
	using Manhattan distance.
	
	{\bf TupleRetrieve}:
	There may be multiple ${\bf b}=\langle b_1,\ldots, b_m\rangle$ foreign key
	tuples with coappear vector ${\bf v}^\prime$
	(e.g. in Fig.\ref{fig:coappeardemo}, 
	$\langle k_2,h_3\rangle$ and $\langle k_3,h_3\rangle$ 
	both have ${\bf v}^\prime=\langle 1,1,2\rangle$).
	For each ${\bf v}^\prime$,
	choose one such $\bf b$. 
	
	{\bf Tuple Modification}:
	Tuples are tweaked as follows: For  $1\leq i\leq k$,
	if $v_i^\prime - v_i>0$,
	remove $v_i^\prime-v_i$ tuples with foreign key values $\bf b$ from $T_i$ ;
	if $v_i^\prime - v_i<0$,
	add $v_i-v_i^\prime$ tuples with foreign key values $\bf b$ into $T_i$.
	
	{\bf StatsUpdate}:
	Update $\xi^*({\bf v})$ by $1$ and $\xi^*({\bf v}^\prime)$ by $-1$.
	
	The job is done when the loop terminates.
	We can prove that the necessary conditions are sufficient for the tweaking.
	\newline

	\begin{theorem}  \text{\normalfont [{\bf sufficiency}]}
		Suppose tables $T_1, \dots, T_k$ reference the same tables 
		$T_1^\prime, \dots, T_m^\prime$. 
		Let $\scalexi$ be the coappear distribution in some $\scaled_i$ before tweaking and 
		$\tweakxi$ the target coappear distribution.
		If $\tweakxi$ satisfies the necessary conditions in
		Theorem~\ref{theorem:coappearnecessary},
		then $\Tcoappear$ tweaks $\scalexi$ to become $\tweakxi$. 
		\label{theorem:Tcoappear}
		\newline
	\end{theorem}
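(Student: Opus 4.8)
The plan is to treat $\Tcoappear$ as a discrete transport process and to exhibit a monotone integer potential that forces both termination and correctness, with the necessary conditions (C1) and (C2) supplying the conservation laws that keep the process from getting stuck. Throughout, let $\xi^* = \scalexi - \tweakxi$ denote the \emph{running} difference, updated by \textbf{StatsUpdate} after each iteration, and set $\Phi = \sum_{\bf v} |\xi^*({\bf v})|$. Since $\Phi$ is a nonnegative integer, it suffices to show that every iteration is well defined, that each iteration strictly decreases $\Phi$, and that $\Phi = 0$ is equivalent to $\scalexi = \tweakxi$.

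First I would record the two conservation laws. Because both the current distribution $\scalexi$ and the target $\tweakxi$ are coappear distributions over the same referenced tables, each satisfies (C2), so $\sum_{\bf v}\xi^*({\bf v}) = \prod_{i=1}^m |T_i^\prime| - \prod_{i=1}^m |T_i^\prime| = 0$. Hence $\sum_{{\bf v}\in\Delta^+}\xi^*({\bf v}) = \sum_{{\bf v}\in\Delta^-}|\xi^*({\bf v})|$ at all times, i.e. total supply equals total demand. Subtracting the two instances of (C1) likewise gives $\sum_{\bf v} v_i\,\xi^*({\bf v}) = |T_i| - |T_i| = 0$ for every $i$, which I would use at the end to certify the table sizes.

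Next I would verify that each iteration is legal. Whenever demand remains (some ${\bf v}\in\Delta^-$ is still being processed), the supply--demand balance forces $\Delta^+\neq\emptyset$, so \textbf{CoappearVectorRetrieve} can always pick a ${\bf v}^\prime$; this is precisely where (C2) is indispensable. For the chosen ${\bf v}^\prime\in\Delta^+$ we have $\xi^*({\bf v}^\prime)>0$, so $\scalexi({\bf v}^\prime)>\tweakxi({\bf v}^\prime)\geq 0$ and an excess foreign-key tuple ${\bf b}$ with vector ${\bf v}^\prime$ is available for \textbf{TupleRetrieve}. In \textbf{Tuple Modification}, ${\bf b}$ currently occurs exactly $v_i^\prime$ times in $T_i$, so removing $v_i^\prime - v_i \leq v_i^\prime$ copies (when $v_i^\prime - v_i>0$) or inserting $v_i - v_i^\prime$ new copies (when $v_i^\prime - v_i<0$) is always possible; afterwards ${\bf b}$ has vector ${\bf v}$. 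This converts one tuple from ${\bf v}^\prime$ to ${\bf v}$, so $\scalexi({\bf v}^\prime)$ drops by $1$ and $\scalexi({\bf v})$ rises by $1$, exactly as recorded by \textbf{StatsUpdate}.

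Finally I would close with the potential. Selecting ${\bf v}\in\Delta^-$ (with $\xi^*({\bf v})<0$) and ${\bf v}^\prime\in\Delta^+$ (with $\xi^*({\bf v}^\prime)>0$) and moving each one integer step toward $0$ flips neither sign, so $\Phi$ decreases by exactly $2$ per iteration and the loop halts after $\tfrac12\Phi_{\text{init}}$ iterations. Because a demand vector's entry is incremented exactly $|\xi^*({\bf v})|$ times it never crosses into $\Delta^+$, so a tuple is never re-selected once it reaches its target vector; combined with supply$\,=\,$demand this forces $\Delta^+$ and $\Delta^-$ to empty simultaneously, giving $\Phi = 0$, i.e. $\scalexi = \tweakxi$. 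The table sizes then follow automatically from $\sum_{\bf v} v_i\,\xi^*({\bf v}) = 0$. I expect the main obstacle to be the no-stuck claim, namely verifying that $\Delta^+$ is nonempty at every step where demand remains, since this is the one place where the global constraint (C2) does real work; by contrast the Manhattan-distance rule in \textbf{CoappearVectorRetrieve} is only an efficiency heuristic and is irrelevant to correctness, a point I would state explicitly.
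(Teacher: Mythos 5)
Your proof is correct and follows essentially the same route as the paper's: both derive $\sum_{\bf v}\xi^*({\bf v})=0$ from (C2), conclude that surplus equals deficit, and count iterations until $\xi^*=0$; your potential $\Phi$ is just twice the paper's iteration count $\sum_{{\bf v}\in\Delta^+}\xi^*({\bf v})$. The extra care you take in verifying that each iteration is legal (that $\Delta^+$ stays nonempty while demand remains, and that tuple retrieval and modification are always feasible), plus your remark that the Manhattan-distance rule is only an efficiency heuristic, are welcome refinements but do not change the underlying argument.
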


	The formal proof is provided in the appendix.
	In a dataset, we might have multiple coappear distributions.
	Suppose $T_D$ and $T_E$ reference $T_A$ and $T_B$, 
	while $T_G$ and $T_H$ reference $T_B$ and $T_C$, 
	so there are two coappear distributions:
	$\scalexi_{T_D, T_E}$ and $\scalexi_{T_G, T_H}$. 
	$\Tcoappear$ only modifies the referencing tables;
	e.g. tweaking $\scalexi_{T_D, T_E}$ only modifies $T_D$ and $T_E$,
	without affecting $\scalexi_{T_G,T_H}$, $T_G$ and $T_H$.
	We can thus tweak coappear distributions without affecting each other.

	\subsection{Pairwise feature}
	\label{sec:Tpairwise}			
	\begin{figure}[t]
		\centering
		\includegraphics[height=1.0in]{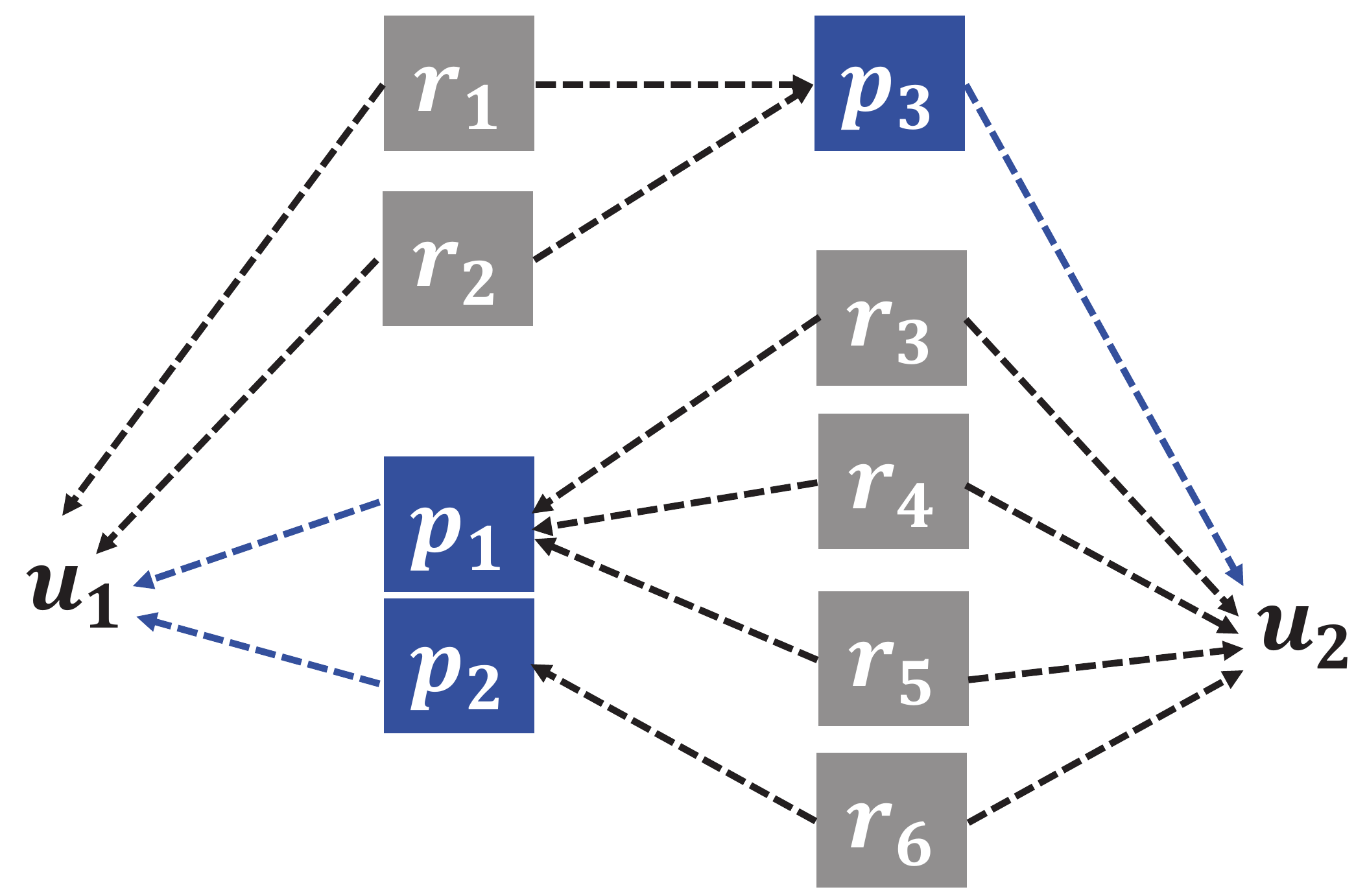}
		\vspace{-1mm}
		\caption{\textbf{Pairwise feature}.
			There are two users, $u_1$ has 2 $\post$  $p1,p2$, 
			and $u_2$ has 1 $\post$ $p_3$. 
			Moreover, $u_1$ has 2 $\response$ on $u_2$'s $\post$.
			Similarly, $u_2$ has 4 $\response$  on $u_1$'s $\post$.}
		\label{fig:pairwisedemo}
		\vspace{-5mm}
	\end{figure}
	
	Fig.\ref{fig:pairwisedemo} illustrates the concept of a $\pairwise$ feature. 
	In social networks, the social tie between users
	$u_1$ and $u_2$ may be implicit,
	instead of explicitly declared (as friends, say).
	For example, $u_1$ may respond twice $(r_1,r_2)$ to a post $p_3$ by $u_2$,
	whereas $u_2$ responds 4 times $(r_3,r_4,r_5,r_6)$ 
	to 2 posts $p_1$ and $p_2$ by $u_1$. 
	Such a feature highlighted by previous work\cite{vision} 
	involves both inter-column and inter-row correlation.
	
	For expository convenience, we use sonSchema, 
	a generic database schema for social networks~\cite{sonSchema}.
	We focus on 3 tables: {\tt users}, {\tt post}, {\tt response2post}.
	Each of these can have multiple instantiations:
	{\tt user} can represent a company, an advertiser, etc.,
	but we will only consider human users; 
	{\tt post} tables may record blogs, videos, etc. contributed by {\tt user};
	and {\tt response2post} may be a share, like, etc.
	For each type of \response \ table,
	the implicit user-to-user tie is captured by the following distribution:
	\newline
	
	\begin{definition}
		Let $R$ be a $\response$ table, and $x,y\in\{0,1,2$ $,\ldots\}$.
		Suppose there are $k$ user pairs 
		$\langle u_1,v_1\rangle$, $\ldots$, $\langle u_k,v_k\rangle$,
		where $u_i$ responds $x$ times to $v_i$'s post,
		and $v_i$ responds $y$ times to $u_i$'s post.
		We denote this as $\rho_R(x,y)=k$,
		and call $\rho_R$ the {\bf pairwise distribution}.
		\newline
	\end{definition}

	Like for $\Tlinear$, 
	we state necessary and sufficient conditions for tweaking pairwise distributions.
	To simplify the presentation,
	this section assumes a user never respond to his/her own post.
	The appendix relaxes this assumption.
	\newline

	\begin{theorem}\text{\normalfont[{\bf necessity}]}
		For a $\response$ table $R$ in some $\scaled_i$, 
		$\scalepo_R$ is the pairwise distribution before being tweaked by $\Tpairwise$,
		and $\tweakpo_R$ is the target pairwise distribution.
		$\scalepo_R$ can be tweaked to become $\tweakpo_R$ only if:
		\begin{align}
		(P1)&\tweakpo_R(x,y) = \tweakpo_R(y,x) {\rm\ for\ all\ }x,y \nonumber \\
		(P2)&\sum_{x,y}(x+y)\tweakpo_R(x,y) = 2|T_R|   \nonumber \\
		(P3)&\sum_{x,y} \tweakpo_R(x,y)  = |U|(|U| -1)\ 
		{\rm where\ U\ is\ the\ }{\tt user}\ {\rm table}. \nonumber
		\end{align}
		\label{theorem:pairnec}	
	\end{theorem}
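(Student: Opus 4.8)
The plan is to read all three conditions off a single combinatorial picture: the target distribution $\tweakpo_R$ is a partition of the set of \emph{ordered} pairs of distinct users, grouped by how the two users respond to each other. Fix the \response\ table $R$ and let $U$ be the \texttt{user} table; for each ordered pair $\langle u,v\rangle$ with $u\neq v$, let $x$ be the number of times $u$ responds to $v$'s post and $y$ the number of times $v$ responds to $u$'s post. This sends every ordered pair of distinct users to exactly one cell $(x,y)$, and by definition $\tweakpo_R(x,y)$ counts the ordered pairs in cell $(x,y)$. Since any dataset realising $\tweakpo_R$ must induce this assignment, the identities below are forced, which is what ``only if'' demands. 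Condition (P3) is then immediate: summing $\tweakpo_R(x,y)$ over all cells counts every ordered pair of distinct users, and under the standing assumption that no user responds to his/her own post the diagonal pairs $\langle u,u\rangle$ are excluded, so the total is $|U|(|U|-1)$.

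Condition (P1) follows from the swap involution $\langle u,v\rangle\mapsto\langle v,u\rangle$ on ordered pairs of distinct users: if $\langle u,v\rangle$ lies in cell $(x,y)$ then the reversed pair lies in cell $(y,x)$, because the two response counts are interchanged. This involution therefore restricts to a bijection between cell $(x,y)$ and cell $(y,x)$, giving $\tweakpo_R(x,y)=\tweakpo_R(y,x)$. For (P2) I would count responses with multiplicity. Each tuple of $R$ is one response from some user $a$ to another user $b$'s post, and it contributes $1$ to the $x$-value of $\langle a,b\rangle$ and $1$ to the $y$-value of $\langle b,a\rangle$, and to no other pair. Hence $\sum_{x,y}(x+y)\tweakpo_R(x,y)$, which aggregates all $x$- and $y$-contributions over ordered pairs, counts each response exactly twice, yielding $2|T_R|$.

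The main obstacle is the bookkeeping in (P2): one must verify that summing over \emph{ordered} pairs double counts each physical response---once through an $x$-contribution and once through the matching $y$-contribution---rather than counting it once, so the factor of $2$ is genuine. The no-self-response assumption plays a quiet but essential role here and in (P3), since it keeps the $(x,y)$ assignment well defined and prevents diagonal pairs from contaminating either total. Once the ordered-pair viewpoint and this double count are fixed, all three conditions drop out with no real computation; relaxing the self-response assumption (deferred to the appendix) would simply require revisiting these same two points.
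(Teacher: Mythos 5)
Your proposal is correct and follows essentially the same route as the paper's proof: the ordered-pair viewpoint with the swap symmetry for (P1), the double-counting of response tuples for (P2), and the count of all $|U|(|U|-1)$ ordered pairs of distinct users for (P3). Your accounting for (P2) --- each tuple contributing once to the $x$-value of $\langle a,b\rangle$ and once to the $y$-value of $\langle b,a\rangle$ --- is a slightly more explicit rendering of the paper's ``double-counted'' remark, but it is the same argument.
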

	\begin{proof} 
		($P1$) $\tweakpo_R(x,y)=k$ means there are $k$ user pairs 
		$\langle u_i,v_i\rangle$, where $u_i$ responds $x$ times to $v_i$'s post
		and $v_i$ responds $y$ times to $u_i$'s post.
		This yields $x+y$ tuples in $\response$ for each $\langle u_i,v_i\rangle$.
		By symmetry, these $x+y$ tuples also represent $k$ 
		$\langle v_i,u_i\rangle$ pairs, so $\tweakpo_R(y,x)=k$.
		
		\noindent
		($P2$) As above, for each $\tweakpo_R(x,y)=k$, there are $k$ 
		$\langle u_i,v_i\rangle$ pairs, and each pair has $x+y$ tuples in $\response$.
		These $k(x+y)$ tuples are double-counted by $\tweakpo_R(x,y)$,
		so we get the equality in ($P2$).
		
		\noindent
		($P3$) Similarly, there are $|U|(|U|-1)$ user pairs,
		and each is counted once by $\tweakpo_R(x,y)$, so ($P3$) follows.
		\newline
	\end{proof}

	Next, we explain how $\Tpairwise$ tweaks $\scalepo$ to $\tweakpo$.
	Let $\rho^*_R=\scalepo_R-\tweakpo_R$,
	$\Theta^+=\{(x,y)|\rho^*_R(x,y)>0\}$ and
	$\Theta^-=\{(x,y)|\rho^*_R(x,y)<0\}$.
	$\Tpairwise$ loops through each $\response$ table $R$.
	For each $(x,y)\in\Theta^-$, it adds $|\rho^*_R(x,y)|$ pairs 
	$\langle u_i,v_i\rangle$, 
	where user $u_i/v_i$ has $x/y$ $\response$ tuples in $R$ referencing $v_i/u_i$'s post.
	It does this by looping $|\rho^*_R(x,y)|$ times, and in each iteration:
	
	{\bf PairwiseVectorRetrieve}: 
	Pick $v^\prime=(x^\prime,y^\prime)\in\Theta^+$ that is closest to $(x,y)$
	by Manhattan distance.
	
	{\bf TupleModification}:
	Choose users $u_i$ and $v_i$ with pairwise vector $(x^\prime,y^\prime)$
	and tweak $u_i$'s responses to $v_i$'s post, as follows:
	If $x<x^\prime$, 
	then
	$u_i$ has $x^\prime-x$ more responses to $v_i$'s post than desired,
	so $\Tpairwise$ randomly chooses and removes $x^\prime-x$ such responses.
	If $x>x^\prime$,
	we add $x-x^\prime$ responses from $u_i$ on $v_i$'s post.
	If $v_i$ has no post, we artificially create a post for $v_i$.
	To do this, we pick another user $w_i$ who has more than 1 post and 
	pick a post $p_w$ with minimum responses among $w_i$'s posts;
	we make $p_w$ a post by $v_i$, 
	and shift the responses to $p_w$ to other posts by $w_i$.
	If (rare case) all users have at most 1 post,
	we will make a new post $p$ for $v_i$, 
	and add $x-x^\prime$ responses to $p$.
	We similarly tweak $v_i$'s responses to $u_i$'s post.
	
	{\bf StatsUpdate}: 
	Increase $\rho^*_R(x,y)$ and $\rho^*_R(y,x)$ by 1 and 
	decrease $\rho^*_R(x^\prime,y^\prime)$ and $\rho^*_R(y^\prime,x^\prime)$ by 1.
	
	We can prove the above mentioned conditions in 
	Theorem~\ref{theorem:pairnec} are sufficient for 
	$\Tpairwise$ tweaks $\scalepo_R$ to $\tweakpo_R$ by Theorem~\ref{theorem:pairsufficient}.
	\newline
	
	\begin{theorem} \text{\normalfont [{\bf sufficiency}]}
		For each $\response$ table $R$  in some $\scaled_i$, 
		$\scalepo_R$ is the pairwise distribution before tweaking and 
		$\tweakpo_R$ is the target pairwise distribution.
		If $\tweakpo_R$ satisfies the necessary conditions in Theorem~\ref{theorem:pairnec},
		then $\Tpairwise$ tweaks $\scalepo_R$ to $\tweakpo_R$.
		Moreover, the extra tuples added to the {\tt post} table $P$ is at most
		$|U|-|P|$, where $U$ is the {\tt user} table. \newline
		\label{theorem:pairsufficient}
	\end{theorem}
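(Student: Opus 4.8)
The plan is to establish two separate claims: that $\Tpairwise$ halts with $\scalepo_R$ transformed exactly into $\tweakpo_R$, and that it inserts at most $|U|-|P|$ fresh tuples into the $\post$ table $P$. For the correctness claim I would argue from a conservation invariant. Both $\scalepo_R$ and $\tweakpo_R$ count the same $|U|(|U|-1)$ ordered pairs of distinct users (the $\Tpairwise$ run never alters the {\tt user} table), so by $(P3)$ they carry equal total mass; hence $\sum_{\bf v}\rho^*_R({\bf v})=0$, i.e.\ the total surplus $\sum_{\Theta^+}\rho^*_R$ equals the total deficit $-\sum_{\Theta^-}\rho^*_R$. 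Each loop iteration performs a balanced \textbf{StatsUpdate} (two $+1$'s on the target side and two $-1$'s on the source side, collapsing to a single $\pm2$ when $x=y$), so it preserves both this invariant and the symmetry $(P1)$ of $\rho^*_R$ while strictly shrinking the total deficit. Consequently $\Theta^+$ is nonempty whenever $\Theta^-$ is, so \textbf{PairwiseVectorRetrieve} always finds a source vector (the Manhattan-closest choice being only a quality heuristic, irrelevant to feasibility), and when the loop terminates $\rho^*_R\equiv0$, i.e.\ $\scalepo_R=\tweakpo_R$.

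Before invoking termination I must check that every \textbf{TupleModification} it schedules is executable and leaves all \emph{other} entries of $\rho$ untouched. Converting a real pair $\langle u_i,v_i\rangle$ from a vector $(x',y')\in\Theta^+$ to $(x,y)$ only ever removes responses (always possible) or adds them; the latter needs a post owned by the recipient. When $x>x'$ but $v_i$ owns no post we must have $x'=0$ (one cannot have responded to a nonexistent post), so the post-provision step is consistent. Reassigning a post $p_w$ from a user $w_i$ with $\ge2$ posts to $v_i$, and shifting $p_w$'s old responses onto $w_i$'s remaining posts, keeps every responder's count toward $w_i$ unchanged, so no pairwise vector except the one being edited is disturbed; the genuine-new-post branch is similarly clean. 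This is what certifies that each iteration realizes exactly the $\rho^*_R$ change assumed in the conservation argument.

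For the bound I would track the post count and the set of post-owning users. Removals never touch posts, and reassignment merely transfers ownership, so the total post count stays constant until the first \emph{new} post is created; moreover no user is ever stripped of their last post, so the set of post-owning users only grows. A new post is created only in the branch where \emph{every} user owns at most one post. Once in that regime the ownership map is injective, so the number of post-owning users equals the current post count, which is still the original $|P|$ (reassignments preserved it and no new post had yet been made); hence exactly $|U|-|P|$ users are postless when the regime begins. Each new post is handed to one such postless user and leaves us still in the regime (the recipient now owns exactly one post), so the regime is absorbing and at most $|U|-|P|$ new posts are ever created, yielding the stated bound.

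I expect the bound to be the delicate part. The correctness invariant is essentially a bookkeeping consequence of $(P1)$ and $(P3)$, but the post-table bound requires arguing carefully that (i) reassignment is always preferred and strictly avoids creating a post, (ii) the post count is still $|P|$ at the instant the ``every user owns $\le1$ post'' regime is first entered, and (iii) this regime is absorbing, so the count of postless users at entry, namely $|U|-|P|$, is never exceeded. Pinning down that the reassignment step preserves all unrelated pairwise vectors — so it can be used freely without breaking the invariant — is the main thing to get right.
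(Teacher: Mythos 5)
Your proposal is correct and follows essentially the same route as the paper's proof: the conservation argument from $(P3)$ giving $\sum\rho^*_R=0$ and hence surplus equal to deficit, the balanced symmetric \textbf{StatsUpdate} driving $\rho^*_R$ to zero, and the post-table bound via the observation that new posts are only created once every user owns at most one post, leaving $|U|-|P|$ postless users. You supply somewhat more detail than the paper does on why post reassignment preserves unrelated pairwise vectors and why the post count is still $|P|$ when the new-post regime is entered, but these are refinements of the same argument rather than a different approach.
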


	The formal proof is provided in the appendix.
	Since $\response$ can have several instantiations (e.g. share, like, etc.),
	a social network dataset can have multiple pairwise distributions,
	but they can be tweaked independently.
	For example, suppose a {\tt post} table $P$ 
	has two $\response$ tables $R_1$ and $R_2$,
	and  $\scalepo_{R_1}$ is tweaked to $\tweakpo_{R_1}$ first.
	When tweaking $\scalepo_{R_2}$, we only modify the tuples in $R_2$,
	so it does not affect the tweaked $\tweakpo_{R_1}$.
	Moreover, adding tuples in $P$ does not affect $\tweakpo_{R_1}$ as well.
	
	For the above mentioned tweaking tools, 
	they modify the dataset by calling the functions in Sec.\ref{sec:protocol}.
	$\Tlinear$ modifies the dataset through the operation {\tt replaceValues}.
	$\Tcoappear$ and $\Tpairwise$ modify the dataset 
	through the operations {\tt deleteValues}, {\tt insertValues}.

	\section{Experiment Setup}
	\label{sec:experiment}
	
	In the following, each experiment is run on a Linux machine with
	64GB memory and an Intel Xeon 2.4GHz processor. 
	We now describe the datasets and similarity measures used in our experiments. 
	ASPECT is implemented in Java.
	
	\subsection{Datasets}
	\label{sec:dataset}
	In this paper, 
	due to the space constraint, 
	we only present experiments on $\Xiami$\footnote{https://www.xiami.com}.  
	$\Xiami$ contains music-related data with 28 tables and more than 90M tuples.
	Reader can refer to the appendix for experiments 
	on three more datasets, $\DoubanBook$, $\DoubanMusic$ and $\DoubanMovie$.
	Each dataset is larger than 10GB originally. 
	However, there are columns, e.g. {\tt song\_name}, {\tt movie\_name},  
	that are irrelevant to the experiments.
	We do not want to exaggerate ASPECT's capability of handling big datasets. 
	We hence purposely filter out those irrelevant columns 
	and only conduct experiment on the relevant columns.
	
	\label{sec:partition}
	
	We take $6$ snapshots of each dataset, 
	$\oldd_1 \subset \oldd_2 \subset \oldd_3 \subset \oldd_4 \subset \oldd_5 \subset \oldd_6$. 
	For each $\oldd_i$, ASPECT takes $\oldd_1$ as input, 
	and first uses a size-scaler to scale $\oldd_1$ to $\scaled_0$,
	where $\scaled_0$ and $\oldd_i$ are of the same size. 
	We then apply tweaking tools on $\scaled_0$ to achieve the target features. 
	After the tweaking process is done, 
	ASPECT outputs $\scaled_i$ which is similar to $\oldd_i$.
	For our experiments, we use $\oldd_i$ as the ground-truth, 
	and compare the similarity between $\scaled_i$ and $\oldd_i$.

	\subsection{Size-scaler}
	\label{sec:scalers}
	Size-scalers are orthogonal to enforcing features in the final dataset $\scaled$.
	Our experiments show that ASPECT is able to generate datasets 
	with small errors for three different size-scalers,
	$\dscaler$~\cite{dscaler}, $\rex$~\cite{rex} and Rand, described below:
	
	\textbf{$\dscaler$} is the first solution to scale relational tables by different ratios. 
	It uses a \textit{correlation database} which captures fine-grained, 
	per-tuple correlations to scale the original dataset. 
	
	{\bf ReX } is an automated representative extrapolation technique~\cite{rex}.
	It scales all tables by the same ratio. 
	Since tables in the ground truth dataset do not scale uniformly, 
	the targeted features do not satisfy the  
	necessary conditions in Sec.\ref{sec:featuretweaker}
	for datasets generated by $\rex$.
	We, therefore, modify the targeted features  to enforce the necessary 
	conditions before tweaking.
	
	\textbf{Rand} is a randomised size-scaler. 
	The tuples are generated randomly.
	However, it satisfies two requirements: 
	(i) the number of tuples are generated as expected and
	(ii) the tuples generated satisfy the foreign key constraints. 

	\subsection{Similarity measure}
	\label{sec:measure}
	As stated previously, the similarity between 
	tweaked dataset $\scaled$ and ground truth dataset $\oldd$ 
	are defined through the feature set $\F$. 
	Hence, we measure how well ASPECT preserves the features.
	In the experiment, we only apply $3$ tweaking tools presented in Sec.\ref{sec:featuretweaker} 
	to preserve the corresponding features.
	Hence, we measure similarity based on these $3$ features.
	
	\subsubsection{\textbf{Feature Accuracy}} 
	we individually measure the similarity of the 3 features.
	
	\textit{Linear Feature}:
	For a target linear join matrix $H$ (ground truth) and the corresponding $\tweakH$
	in the final tweaked dataset, 
	let $\epsilon_H$ be the {\it mean} relative error among the entries.
	For example, 
	$$ 
	{\tweakH}= \begin{bmatrix}
	0	& 0	& 0 \\
	5	& 0 & 0  \\
	2	& 3	& 0  
	\end{bmatrix}
	\ \ \ \ \ \ \
	H= \begin{bmatrix}
	0	& 0	& 0 \\
	4	& 0 & 0  \\
	3	& 4	& 0  
	\end{bmatrix}
	$$
	then $\epsilon_H= \frac{1}{3}(\frac{|5-4|}{4} + \frac{|2-3|}{3} + \frac{|3-4|}{4}) = \frac{5}{18}$.
	The $\linear$ feature error of $\tweakd$ is the mean of all $\epsilon_H$, 
	thus unbounded.
	
	$\textit{Coappear Feature}$:
	For each coappear distribution,
	let $\xi$ be the target (ground truth) and 
	$\tweakxi$ the tweaked distribution.  
	The coappear distribution error $\epsilon_{\xi}$ is 
	$$\epsilon_\xi = \frac{1}{N_{\rm FK}} \sum_{\bf v} | {\xi}({\bf v}) - \tweakxi({\bf v})|,$$
	where $N_{\rm FK}$ is the number of foreign key vectors.
	$\epsilon_\xi$ is bounded by 
	$\frac{1}{N_{\rm FK}} (\sum_{\bf v} |{ \xi}({\bf v})| + |\tweakxi({\bf v})|)=2.$
	The $\coappear$ feature error of $\tweakd$ is the mean of all $\epsilon_{\xi}$.
	
	$\textit{Pairwise Feature}$:
	Similarly, for a pairwise distribution, 
	let $\rho$ be the target (ground truth) and $\tweakpo$ the tweaked distribution.
	The pairwise distribution error $\epsilon_\rho$ is
	$$\epsilon_\rho = \frac{1}{N_{\rm user-pair}} \sum_{\bf v} | {\rho}({\bf v}) - \tweakpo({\bf v})|,$$
	where $N_{\rm user-pair}$ is the number of user pairs;
	$\epsilon_\rho$ is at most 2.
	The pairwise distribution error of $\tweakd$ is the mean of all $\epsilon_\rho$.
	
	\subsubsection{\textbf
		{Query Accuracy}}
	we also measure similarity by the result of an aggregate queries 
	(\texttt{COUNT, AVERAGE}) that are related to the 3 features. 
	The query error is measured by $e_q = \frac{|q(\scaled) - q(\oldd)|}{q(\oldd)}$.

	\section{Results and Analysis}
	\label{sec:results}
	In our experiment, 
	ASPECT coordinates $\Tlinear$, $\Tcoappear$ and $\Tpairwise$
	on the scaled dataset generated by a size-scaler to realize the corresponding features.
	There are $3!=6$ ways of ordering these tweaking tools.
	We use P-L-C, say, 
	to denote the permutation where $\Tpairwise$, $\Tlinear$ and $\Tcoappear$ 
	are applied in that order.
	
	We first compare the feature similarity in Sec.\ref{sec:featureSimExp}, 
	followed by query similarity experiments in Sec.\ref{sec:querySimExp}.
	Later, we discuss the possible improvements in Sec.\ref{sec:querySimExp}.
	Lastly, we present the execution time of ASPECT in Sec.\ref{sec:time}.

	\subsection{Feature similarity}
	\label{sec:featureSimExp}
	
	For each feature, the plots are organized as follows: 
	x-axis represents the dataset snapshots;
	y-axis is the feature error.
	In each plot, we compare how the 6 permutations perform 
	against the baseline (without tweaking).

	\subsubsection{\textbf{Linear feature}}
	\label{sec:linearresult}
	In $\Xiami$, there are in total 38 linear join matrices.
	Fig.\ref{fig:linear} plots the average error of these linear join matrices.
	In general, the later $\Tlinear$ is applied,
	the smaller the $\linear$ feature error,
	i.e. C-L-P and P-L-C have smaller errors 
	than L-C-P and L-P-C, and C-P-L and P-C-L have 0 error.
	All permutations reduce the error tremendously for all size-scalers on all datasets.
	
	Different size-scalers generate a scaled dataset with different errors. 
	In Fig.\ref{fig:linear}, take $\oldd_6$ for example, 
	$\dscaler$ generates a dataset with error around 1.0, 
	while $\ReX$ generates a dataset with error around 25. 
	Regardless of the initial error difference, 
	ASPECT is able to reduce the error tremendously after applying the tweaking tools.
	
	\begin{figure}[t]
		\begin{minipage}[t]{0.32\linewidth}
			\centering
			\includegraphics[height = 1.1in]{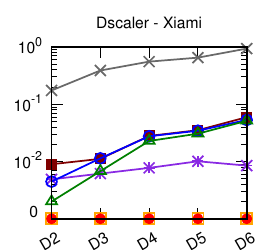}
		\end{minipage}
		\hfill
		\begin{minipage}[t]{0.32\linewidth}
			\centering
			\includegraphics[height = 1.1in]{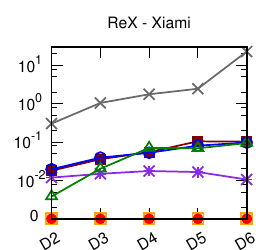}
		\end{minipage}
		\hfill
		\begin{minipage}[t]{0.32\linewidth}
			\centering
			\includegraphics[height = 1.1in]{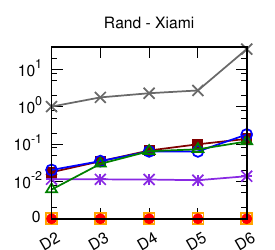}
		\end{minipage}
		\hfill
		\begin{minipage}[t]{0.99\linewidth}
			\centering
			\includegraphics[height = 0.085in]{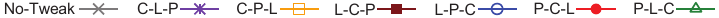}
		\end{minipage}
		\caption{Linear feature errors  for $\Xiami$ (log scale)}
		\label{fig:linear}
		\vspace{-4mm}
	\end{figure}
	
	\begin{figure}[t]
		\hfill\begin{minipage}[t]{0.32\linewidth}
			\centering
			\includegraphics[height = 1.1in]{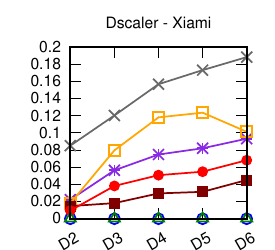}
		\end{minipage}
		\hfill
		\begin{minipage}[t]{0.32\linewidth}
			\centering
			\includegraphics[height = 1.1in]{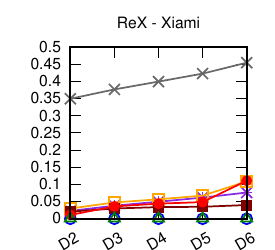}
		\end{minipage}
		\hfill
		\begin{minipage}[t]{0.32\linewidth}
			\centering
			\includegraphics[height = 1.1in]{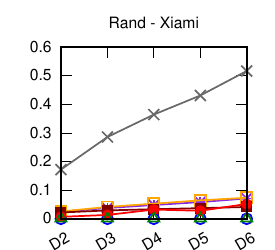}
		\end{minipage}
		\hfill
		\begin{minipage}[t]{0.99\linewidth}
			\centering
			\includegraphics[height = 0.085in]{newgnuplot/seed/xiami/legend-eps-converted-to.pdf}
		\end{minipage}
		\caption{Coappear feature errors for $\Xiami$ }
		\label{fig:coappear}
		\vspace{-4mm}
	\end{figure}
	
	\subsubsection{\textbf{Coappear feature}}
	\label{sec:coappearresult}
	There are 12 coappear distributions for $\Xiami$. 
	Fig.\ref{fig:coappear} plots the average error of these coappear distributions.
	It shows that, like for $\Tlinear$,
	the later $\Tcoappear$ is applied in the tweaking order,
	the smaller the $\coappear$ error.
	In general, we find that permutations where $\Tcoappear$ is after
	$\Tlinear$ reduces the errors more than if $\Tcoappear$ is before $\Tlinear$.
	This is expected, since $\Tlinear$ modifies the coappearing tables
	massively after $\Tcoappear$ is done.
	
	Similar to $\linear$ feature, 
	most tweaking permutations significantly reduce the $\coappear$ errors. 
	However, for the plot {\tt Dscaler-Xiami}, 
	we observe that the tweaking permutation C-L-P and C-P-L have a smaller error reduction.
	One possible reason may be small original error, 
	that gives limited room for improvement.
	The other possible reason could be the highly overlapping structure: 
	the coappear distribution involves many tables. 
	Take $\xi_{\bf T}$ for example, where 
	$\bf T$ is $\langle${\tt Listen\_Artist}, {\tt Lib\_Artist}, 
	{\tt Artist\_Fan}, {\tt Artist\_Comment}$\rangle$. 
	This $\xi_{\bf T}$ overlaps with 8 linear joins,
	so it is modified by 8 linear tweaking tools 
	if $\Tlinear$ is applied after $\Tcoappear$.
	This increases the difficulty of 
	getting a validated modification as described in Sec.\ref{sec:toolComponent}. 
	We will discuss how to improve the similarity for such highly overlapping features in Sec.\ref{sec:improvement}.
	
	Nevertheless,  {\tt ReX-Xiami} and {\tt Rand-Xiami} still have small errors 
	despite such a highly overlapping features.

	\begin{figure}[t]
		\begin{minipage}[t]{0.32\linewidth}
			\centering
			\includegraphics[height = 1.1in]{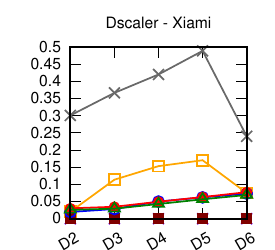}
		\end{minipage}
		\hfill\begin{minipage}[t]{0.32\linewidth}
			\centering
			\includegraphics[height = 1.1in]{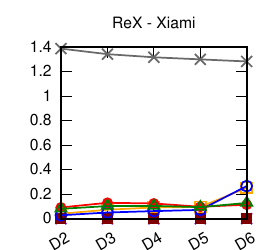}
		\end{minipage}	
		\hfill
		\begin{minipage}[t]{0.32\linewidth}
			\centering
			\includegraphics[height = 1.1in]{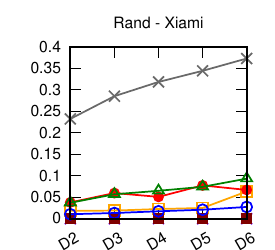}
		\end{minipage}
		\hfill
		\begin{minipage}[t]{0.99\linewidth}
			\centering
			\includegraphics[height = 0.085in]{newgnuplot/seed/xiami/legend-eps-converted-to.pdf}
		\end{minipage}
		\caption{Pairwise feature errors for $\Xiami$ }
		\label{fig:pairwise}
		\vspace{-4mm}
	\end{figure}

	\subsubsection{\textbf{Pairwise feature}}
	\label{sec:pairwiseresult}
	$\Xiami$ has 4 pairwise distributions. 
	Fig.\ref{fig:pairwise} plots the error of these pairwise distributions.
	It again shows that,
	the later $\Tpairwise$ is applied in a tweaking order,
	the smaller the pairwise feature error in the tweaked dataset.
	Moreover, 
	all tweaking permutations reduce the errors tremendously for all size-scalers.	
	
	In summary, the later a tool $\T_i$ is applied, 
	the smaller the error for the feature $\F_i$.
	Moreover, all tweaking permutations reduce the errors tremendously for most of the cases.
	If the features are highly overlapping, 
	it is possible that the error reduction is not very significant. 
	In the next section, we will discuss how to improve this.
	
	\subsection{Query similarity}
	\label{sec:querySimExp}
	\begin{figure}[b]
		\begin{minipage}[t]{0.24\linewidth}
			\centering
			\includegraphics[height = 0.95in]{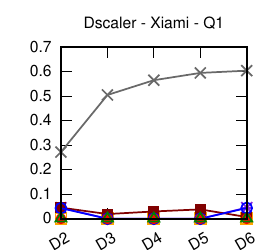}
		\end{minipage}
		\hfill
		\begin{minipage}[t]{0.24\linewidth}
			\centering
			\includegraphics[height = 0.95in]{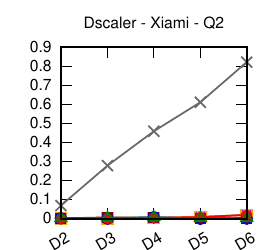}
		\end{minipage}
		\hfill
		\begin{minipage}[t]{0.24\linewidth}
			\centering
			\includegraphics[height = 0.95in]{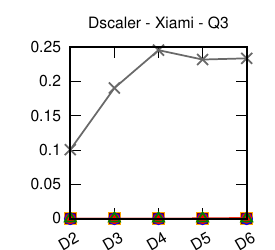}	
		\end{minipage}
		\hfill
		\begin{minipage}[t]{0.24\linewidth}
			\centering
			\includegraphics[height = 0.95in]{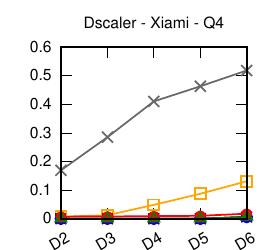}
		\end{minipage}
		\hfill
		\begin{minipage}[t]{0.24\linewidth}
			\centering
			\includegraphics[height = 0.95in]{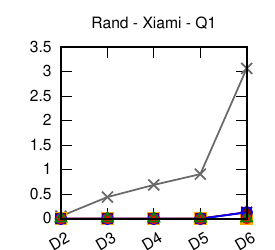}
		\end{minipage}
		\hfill
		\begin{minipage}[t]{0.24\linewidth}
			\centering
			\includegraphics[height = 0.95in]{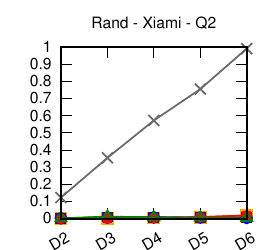}
		\end{minipage}
		\hfill
		\begin{minipage}[t]{0.24\linewidth}
			\centering
			\includegraphics[height = 0.95in]{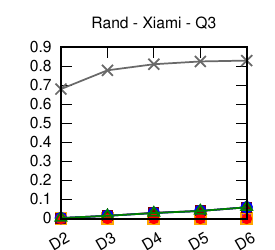}	
		\end{minipage}
		\hfill
		\begin{minipage}[t]{0.24\linewidth}
			\centering
			\includegraphics[height = 0.95in]{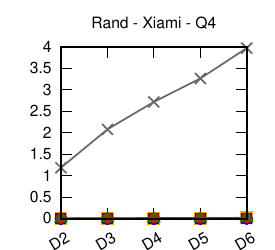}
		\end{minipage}
		\hfill
		\begin{minipage}[t]{0.99\linewidth}
			\centering
			\includegraphics[height = 0.085in]{newgnuplot/seed/xiami/legend-eps-converted-to.pdf}
		\end{minipage}
		\caption{Query similarity for $\Xiami$}
		\label{fig:queryExp}
		\vspace{-4mm}
	\end{figure}

	For each query  $q$, we compare the query results on the 
	ground-truth dataset $q(\oldd)$ and the scaled dataset $q(\scaled)$.
	As mentioned in Sec.\ref{sec:scalers}, 
	$\ReX$ cannot scale the dataset to arbitrary sizes, 
	so it cannot be used for query similarity experiments.
	The $4$ queries used are: 
	$Q_1$ computes the number of users who have uploaded a photo with commenters;
	$Q_2$ computes the number of Music Videos that have been commented on by at most 10 different users;
	$Q_3$ computes the average number of listeners per song;
	$Q_4$ computes the number of user pairs having interactions through profile page.
	
	Fig.\ref{fig:queryExp} presents results for the $4$ queries. 
	The first row uses $\dscaler$ as a size-scaler, 
	the second row uses Rand as a size-scaler.
	The x-axis represents the dataset snapshots, 
	and y-axis represents query error.
	
	As we can see from Fig.\ref{fig:queryExp}, 
	all tweaking permutations reduce the query error significantly on both size-scalers.
	The errors are reduced to $ < 0.05$ for most of the tweaking permutations.
	For $Q_2$, even though the initial error after 
	the size-scaler is relatively low for $\oldd_2$, 
	ASPECT is still able to reduce the error further.
	
	\subsection{Similarity improvement over iterations}
	\label{sec:improvement}
	Even though ASPECT significantly reduces the errors for most of the cases, 
	there are some rare exceptions, e.g. {\tt Dscaler-Xiami} in Fig.\ref{fig:coappear}, 
	where ASPECT generates a dataset with $\coappear$ error > 0.1. 
	Such cases happen when tools modify previously tweaked features.
	To improve the performance, we run ASPECT for multiple iterations. 
	Previously, we applied tools $\Tcoappear, \Tlinear, \Tpairwise$ 
	sequentially for the permutation C-L-P which results in $\scaled$.
	But now, 
	we apply tools $ \Tcoappear, \Tlinear, \Tpairwise$ on $\scaled$ 
	in the same order with another few iterations. 
	We find that by having more iterations of tweaking, 
	the error is further reduced tremendously.
	
	\begin{figure}
		\centering
		\begin{minipage}[t]{0.4\linewidth}
			\centering
			\includegraphics[height = 1.1in]{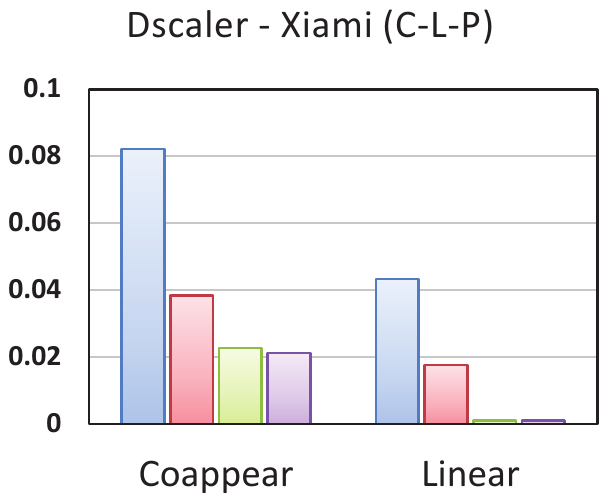}
		\end{minipage}
		\hfill
		\begin{minipage}[t]{0.4\linewidth}
			\centering
			\includegraphics[height = 1.1in]{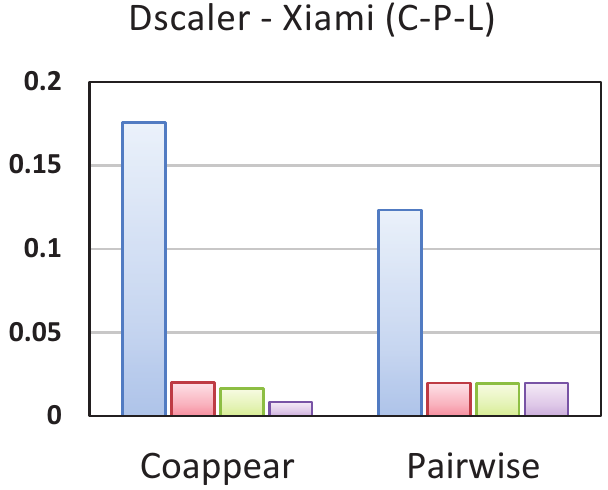}
		\end{minipage}
		\hfill
		\begin{minipage}[t]{0.15\linewidth}
			\includegraphics[height = 0.9in]{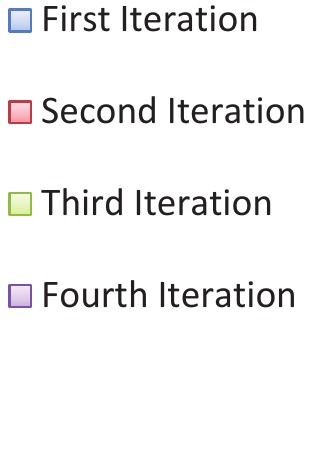}
		\end{minipage}
		\vspace{-2mm}
		\caption{Tweaking error over iterations (vertical axis is error)}
		\label{fig:iterationDemo}
	\end{figure}

	In {\tt Dscaler-Xiami}, C-L-P, C-P-L have larger errors.
	Fig.\ref{fig:iterationDemo} presents the results of C-L-P and C-P-L with more iterations. 
	The x-axis is the features; 
	the y-axis represents the errors; 
	the bar represents the iterations.
	
	For C-L-P, the $\coappear$ error is $0.08$ for the first iteration, 
	and reduced to $0.04$ in the second iteration and 
	further reduced to $0.02$ in the third iteration.
	For $\linear$ feature, the error reduction is greater, 
	from $0.05$ to $10^{-3}$ from third iteration onwards.
	For C-P-L, we observe similar phenomenon. 
	Moreover, the error reduction is faster (the error stabilises from second iteration onwards).
	
	In summary, 
	the errors are reduced significantly 
	as the number of iterations increases. 
	From the second or third iteration onwards, 
	the resulting error will be really small $\sim$0.02. 
	Hence, the room for improvement will be limited.  
	The reader can find significant error reduction for other datasets 
	in the appendix as well .

	\subsection{ASPECT execution time}
	\label{sec:time}
	So far, we have verified that ASPECT is effective in tweaking the features. 
	Next, we will show that ASPECT is efficient as well.  
	Fig.\ref{fig:runTime} presents the running time of each tweaking permutation. 
	Similar to the previous plots, 
	the x-axis represents the dataset snapshot; 
	y-axis represents the running time (minutes).
	
	In Fig.\ref{fig:runTime}, 
	the execution time increases linearly 
	with the dataset size for most of the experiments.
	All tweaking permutations finish within 100 minutes.
	Moreover, 
	different size-scalers result in different execution time.
	This is expected, 
	as different size-scalers have different feature errors.
	Hence, the amount of tweaking is different. 
	
	For the same size-scalers and the same dataset, 
	different tweaking permutations have different execution times. 
	In general, L-C-P and L-P-C 
	are more efficient than other tweaking permutations.
	
	\begin{figure}
		\begin{minipage}[t]{0.32\linewidth}
			\centering
			\includegraphics[height = 1.1in]{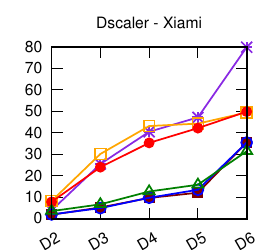}
		\end{minipage}
		\hfill
		\begin{minipage}[t]{0.32\linewidth}
			\centering
			\includegraphics[height = 1.1in]{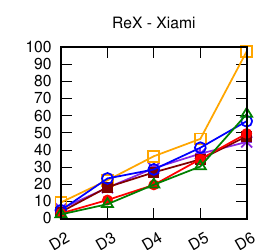}
		\end{minipage}
		\hfill
		\begin{minipage}[t]{0.32\linewidth}
			\centering
			\includegraphics[height = 1.1in]{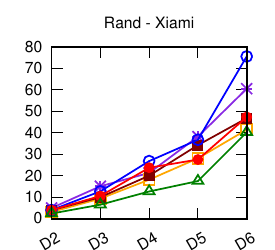}	
		\end{minipage}
		\hfill
		\begin{minipage}[t]{0.99\linewidth}
			\centering
			\includegraphics[height = 0.085in]{newgnuplot/seed/xiami/legend-eps-converted-to.pdf}
		\end{minipage}
		\caption{Execution time for $\Xiami$  (minutes)}
		\label{fig:runTime}
	\end{figure}	
	
	\section{Limitations And Observations}
	\label{sec:limit}
	ASPECT aims to tweak the features 
	$\oldf_1, \oldf_2,\dots, \oldf_n$
	so that the tweaked dataset has the corresponding target features 
	$\scalef_1$, $\scalef_2$, $\dots$, $\scalef_n$.
	Extensive experiments above show that ASPECT
	has the capability of tweaking complex features with 
	reasonably small errors within reasonable running time.
	
	\subsection{Limitations}
	While tweaking a feature, we might modify some already tweaked features.
	These already tweaked features may take various forms,
	which significantly increases the difficulty of 
	proving some error bound of previously tweaked features. 
	We state this issue as the \textbf{Feature Tweaking Bound Problem}:
	
	\begin{quote}
		\textit{Assuming a dataset $\scaled_n$ has features $\scalef_1, \dots, \scalef_n$.
			If a tweaking tool $\T_{n+1}$ is applied on $\scaled_n$, 
			how much does it affect the previous features $\scalef_1, \dots ,\scalef_n$? }
	\end{quote}
	
	Solving the Feature Tweaking Bound Problem for general features might not be possible. 
	Proving error bounds should be easier if the features satisfy certain properties.
	Consider the following trivial example: 
	if each $\scalef_i$ represents the attribute distribution of a distinct column, 
	then one can easily see that tweaking tool $\T_{n+1}$ 
	will never affect $\scalef_1, \dots ,\scalef_n$.
	An example of a non-trivial restriction would be limiting the features to just 1 join.
	
	Besides the Feature Tweaking Bound Problem, 
	there is also the \textbf{Feature Tweaking Order Problem}: 
	
	\begin{quote}
		\textit{When tweaking a dataset for $n$ features, 
			which tweaking order results in the least error?}
	\end{quote}
	
	We believe the \textit{Feature Tweaking Bound Problem} 
	and \textit{Feature Tweaking Order Problem} are issues that offer
	a rewarding challenge for research on dataset tweaking.  
	
	\subsection{Observations}
	In developing ASPECT, we arrive at the following observations:
	
	(\textbf{O1}) \textbf{Non-overlapping features}. 
	If the features are not overlapping, 
	then regardless of the modification of $\T_{n+1}$, 
	all the previous modified features will be preserved. 
	
	(\textbf{O2}) \textbf{Determination of non-overlapping features}. 
	Given (\textbf{O1}), 
	we would want to determine which features among 
	$\scalef_1, \dots ,\scalef_n$ do not overlap. 
	Then, the user will clearly know which features 
	do not affect each other.
	This can be achieved in ASPECT through monitoring 
	the dataset access by each tweaker.
	In ASPECT, 
	each tweaker can only access the dataset via 
	the functions similar to the ones provided in Fig.\ref{fig:interface}.
	Hence, 
	ASPECT knows if any two tweaking tools have accessed the same tuples. 
	Then the problem is reduced to finding independent sets in graph theory, 
	where the nodes are the tweaking tools. 
	If two tweaking tools access the same tuples, 
	then there will be an edge linking the two nodes. 
	Even though, it is NP-Hard to find a maximum independent set, 
	Robson~\cite{robson} has proven that it can done in $O(1.22^n)$ time, 
	which is a reasonable complexity for a small number of tools $n$.

	(\textbf{O3}) 
	\textbf{Conflicting overlapping features}. 
	Overlapping features are called conflicting if no dataset can satisfy all of them. 
	A simple example of conflicting
	features of a social network dataset is: 
	$F_1$ = {more than half of the customers are men} 
	and 
	$F_2$ = {more than half of the customers are women}. 
	Such features must be modified to resolve the conflict,
	and ASPECT always modifies the features that are applied earlier.

	(\textbf{O4}) 
	\textbf{Non-conflicting overlapping features}. 
	For non-conflicting features, 
	it is not always feasible to synthesize a dataset 
	that satisfies all of them, even if one exists. 
	For example, it is already NP-Hard to decide whether there exists a
	graph that satisfies certain degree distributions\cite{gmark}, 
	so there is no polynomial algorithm that generates a graph for such distributions.
	For the sake of efficiency, 
	we may have to sacrifice some feature accuracy.
	Even so, for the features in this paper, 
	ASPECT maintains the features accurately.
	As we can see from Sec.\ref{sec:improvement}, 
	the error is reduced to 0.02 after $2$ to $3$ iterations in the experiments..

	\section{Related Work}
	\label{sec:related}
	The Dataset Scaling Problem (DSP) was first advocated by Tay~\cite{vision}.
	There have been several solutions to this problem in the field of relational database.
	$\upsizer$~\cite{upsizer} is the first solution to DSP, 
	which uses attribute correlation extracted from an empirical dataset to
	generate a synthetic dataset.
	$\rex$~\cite{rex} is a later work that scales up the original dataset by
	an integer factor \textit{s}, using an automated representative extrapolation
	technique.
	Chronos~\cite{chronos} scales the streaming data by
	focusing on capturing and simulating streaming data
	with both column correlation and temporal correlation.
	Recently, DSP was extended to non-uniform DSP (nuDSP)~\cite{dscaler}. 
	As a solution to nuDSP, $\dscaler$ uses a {\it correlation database} 
	which captures fine-grained, per-tuple correlations for scaling.
	
	Data scaling is extended into other fields as well.
	In~\cite{rbench}, the authors propose a dataset scaling problem for RDF data and
	provide a solution RBench that scales the original input dataset by preserving  4 features:
	resource identity (resource name, resource type, resource degree), 
	relationship patterns (subgraphs with only relationship edges), 
	predicate dictionary (frequency counts of the words)
	and attribute stars (frequency counts of the star structure).
	In~\cite{vig}, the authors lift the scaling approach from
	the pure database level to the OBDA level, 
	where the domain information of ontologies and mappings are also taken into account as well.
	VIG~\cite{vig} maintains the similarity for OBDA data by preserving the following features:
	size of columns clusters and disjointness, schema dependencies and 
	column-based duplicates and NULL Ratios. 
	However, VIG only supports dataset where each table has at most one foreign key only.
	In most storage systems, compression time and compression ratio are important issues.
	Hence, these two criteria should also be used for similarity measurement. 
	In~\cite{SDGen}, SDGen is proposed to scale an input dataset to 
	an arbitrary size which preserves these two similarities.
	In~\cite{gscaler}, 
	the authors extend DSP to the Graph Scaling Problem for directed graphs.
	$\gscaler$ is proposed as a solution that maintains not only local graph properties, 
	e.g. degree distribution,
	but also global graph properties, e.g. effective diameter. 
	
	In the broader field, much work~\cite{flexible, quickBillion, sig2015,mudd} 
	have been done on query-independent 
	application-specific database generation (not scaling).
	The query-independent  
	database generation refers to generating the database in terms of a
	given real data set (data-driven) or a set of its character descriptions (character-driven), which is initiated by Jim Gray~\cite{quickBillion}.

	For character-driven data generation,
	a parallel algorithm is first proposed to generate a dataset with a
	predefined schema and a predefined distribution~\cite{quickBillion}.
	However, no correlations between attributes are preserved.
	Later, Bruno and Chaudhuri introduce DGL~\cite{flexible}, 
	a simple specification language to generate datasets with
	complex synthetic distributions and inter-table correlations.
	In~\cite{simpleGen}, Houkjaer proposes a table-wise graph model
	which holds various statistical information about the foreign key and column content.
	By using such a graph model, a more realistic dataset can be generated.
	PSDG~\cite{psdg} is another parallel solution to generate ``industrial sized'' dataset.
	PSDG supports easy parallelism, using a construct for specifying foreign keys.
	However, PSDG does not allow independent generation of dependent tables. 
	Referenced tables have to be created for generating dependent tables. 
	This is very inefficient if the referenced tables are not needed, and are huge in size.
	To overcome this, PDGF~\cite{pdgf} is proposed.
	
	Character-driven data generation also appears in graphs as well~\cite{chunglu,erdos60,jennifer2015,gmark}.
	For example, the Erd{\"{o}}s-R{\'{e}}nyi model generates a graph of
	any size $n$ with a specified edge probability $p$.
	gMark~\cite{gmark} is another tool that 
	generates a graph database based on users' specifications, 
	e.g. degree distribution and node type occurrence.
	Recently, TrillionG~\cite{trillionG} is proposed to generate large scale graphs in a short time. 
	It can generate trillion-node graphs within two hours by using 10PCs.
	
	For data-driven synthetic database generation, 
	the input is always a real dataset.
	MUDD~\cite{mudd} is the first tool which uses a real dataset for database generation.
	However, MUDD uses very little information (name and address) from the real dataset.
	Similarly, TEXTURE~\cite{texture} is another micro-benchmark for text query workloads.
	However, it only extracts simple properties,
	e.g. word distribution, document length.
	As an extension to PDGF,
	DBSynth~\cite{datasynth} scales the dataset by utilizing the meta-data
	(e.g. min/max constraints)
	and the statistics from the input dataset. 
	Strictly speaking, DSP is a sub-problem of query-independent synthetic database generation (data-driven).
	
	There are also domain-specific benchmarks that
	generate datasets for specific domains.
	They include TPC~\footnote{http://www.tpc.org/}, YCSB~\cite{ycsb}, 
	LinkBench~\cite{linkbench}, LDBC~\cite{ldbc}, 
	BigDataBench~\cite{biddatabench} and MWGen~\cite{MWGen}.
	For example, MWGen~\cite{MWGen} uses road and floor plans as input
	and generates a set of real world infrastructure together with moving objects in different transportation modes.
	Such domain-specific data generation usually generates 
	only one fixed dataset with fixed schema for 
	all applications under the same domain.
	Hence, domain-specific data generation is different from application-specific data scaling.
	
	Nevertheless, previous works generate datasets with pre-defined features, instead of flexible features.
	To the best of our knowledge, 
	ASPECT is the first framework which allows 
	dataset scaling with flexible features 
	through coordinating the tweaking tools.

	\section{Conclusion and future work}
	\label{sec:conclusion}
	This paper introduces ASPECT, 
	a framework for flexible application of tweaking tools to 
	enforce target features in synthetic dataset.
	To generate a scaled dataset with greater similarity 
	comparing to the original dataset,
	one just needs to apply more tweaking tools.
	We demonstrate ASPECT by coordinating 3 highly overlapping and complex tweaking tools 
	on real datasets to realize the target features.
	Extensive experiments show that ASPECT 
	effectively reduces the errors by orders of magnitudes
	in the synthetic data without sacrificing efficiency.
	
	ASPECT is a step towards the vision for application-specific benchmark data generation.
	It facilitates bottom-up collaboration among developers in contributing tools
	for tweaking synthetic datasets to enforce similarity with empirical data.
	
	Our current work is on the Feature Tweaking Bound and Order Problems. 
	We hope to make some progress by restricting the feature types (e.g. single joins).
	
	\bibliographystyle{abbrv}
	\bibliography{tweak}{}

	\section{Appendix}
	\subsection{Algorithm pseudocode}
	In this section, we present the pseudocode for three tweaking tools: $\Tlinear$, $\Tcoappear$, $\Tpairwise$.
	\begin{algorithm}[h]
		\caption{\textbf{Linear Feature Tweaking}}
		\label{algo:Tlinear}
		\For{each row of $H$}{
			leadingElementAdjust()     //Lemma ~\ref{lemma:base}\\
			\For{ 2 to row\_length}{
				nonLeadingElementAdjust()    //Lemma~\ref{lemma:induction}
			}
		} 
	\end{algorithm}
	
	\begin{algorithm}[h]
		\caption{\textbf{Coappear Feature Tweaking}}
		\label{algo:Tcoappear}
		\For{each $(v_1,\dots,v_k) \in \Delta^-$}{
			\While{$\xi^*(v_1,\dots,v_k)\neq0$}{
				$(v'_1,\dots,v'_k) \leftarrow$ closest$(\Delta^+, (v_1,\dots,v_k))$\\
				\For{$i \in [1,k]$}{
					$t_i \leftarrow$ 
					tupleRetrieve$(T_i, (v'_1,\dots,v'_k))$\\
					tupleModification$(T_i, v'_i-v_i, t_i)$		
				}
				statsUpdate$(\xi^*(v_1,\dots,v_k), \xi^*(v'_1,\dots,v'_k))$\\
			}
		} 
		
	\end{algorithm}

	\begin{algorithm}[h!]
		\caption{\textbf{Pairwise Feature Tweaking}}
		\label{algo:Tpairwise}
		\For {each $\po^*_R$} {
			\For{each $(x,y) \in \Theta^-$}{
				\While{$\po_R^*(x,y)\neq0$}{
					$(x',y') \leftarrow$ closest$(\Theta^+, (x,y))$\\
					$u, v \leftarrow$ 
					tupleRetrieve$(R, (x',y'))$\\
					tupleModification$(R, x'-x, y'-y, u,v)$\\	
					statsUpdate$(\po^*_R(x,y))$\\
					statsUpdate$(\po^*_R(x',y'))$
				}
			} 
		}
	\end{algorithm}

	\subsection{Theorems and proofs for $\linear$ feature}
	In this section, we present the the formal proofs that are related to the $\linear$ feature.
	Define $H^{*} = \scaleH - \tweakH$ and 
	$h^{*}_{j,i} = \scaleh_{j,i} - \tweakh_{j,i}$ for all $1\leq i,j \leq k$.\newline
	
	\begin{lemma}
		[{\bf \tt leadingElementAdjust}]
		If the first $n$ rows of $\scaleH$ and $\tweakH$ are the same,
		then $(h^{*}_{n+1,1},h^{*}_{n+1,2},\dots, $ $h^{*}_{n+1,n})$
		can be tweaked to 
		$(0,h^{*\prime}_{n+1,2}, \dots, h^{*\prime}_{n+1,n})$,
		where $h^{*\prime}_{n+1,i}\geq0$ for $2\leq i \leq n$.
		\label{lemma:base}
	\end{lemma}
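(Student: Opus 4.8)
The plan is to model tables $T_1,\dots,T_{n+1}$ as the directed forest of foreign-key references (Fig.~\ref{fig:lineardemo}) and to tweak row $n+1$ purely by relocating tuples of $T_{n+1}$ to new parents in $T_n$, i.e. by {\tt replaceValues} on the foreign key of $T_{n+1}$. The first step is the independence observation: such relocations leave every tuple of $T_1,\dots,T_n$ and every reference among them intact, so each $S_{j,i}$ with $j\le n$ is unchanged and rows $1,\dots,n$ of $\scaleH$ stay identical to those of $\tweakH$. This confines the whole argument to row $n+1$.

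Next I would recast row $n+1$ combinatorially. For $p\in T_n$ write $\pi_i(p)\in T_i$ for the $T_i$-ancestor of $p$ along its chain (well defined since each maximal chain reaches $T_1$). If $P\subseteq T_n$ is the set of distinct parents used by the $|T_{n+1}|$ tuples, then a $T_i$-tuple is a root of $T_{n+1}\!\rightarrow\!\cdots\!\rightarrow\!T_i$ iff it lies in $\pi_i(P)$, so $\scaleh_{n+1,i}=|\pi_i(P)|$ and in particular $\scaleh_{n+1,n}=|P|$. Thus {\tt leadingElementAdjust} is exactly the task of choosing a support $P$, with $1\le|P|\le|T_{n+1}|$ (spare tuples piled on any used parent), such that $|\pi_1(P)|=\tweakh_{n+1,1}$ and $|\pi_i(P)|\ge\tweakh_{n+1,i}$ for $2\le i\le n$; these two requirements are precisely $h^{*\prime}_{n+1,1}=0$ and $h^{*\prime}_{n+1,i}\ge0$.

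I would then build such a $P$ incrementally, driven by the sign of the leading difference $h^{*}_{n+1,1}$. When $h^{*}_{n+1,1}<0$ we must create a column-$1$ root: since row $n$ matches, (L2) gives $\tweakh_{n+1,1}\le\tweakh_{n,1}=\scaleh_{n,1}$, so a $T_1$-tuple reachable from $T_n$ but not yet from $T_{n+1}$ exists. The clean move is to take a parent carrying two $T_{n+1}$-tuples and relocate one onto a $T_n$-tuple whose chain reaches that missing root: this raises $\scaleh_{n+1,1}$ by exactly $1$ and, the plucked tuple being surplus at its parent (hence at every ancestor $\pi_i$ of that parent), removes no root in any column while possibly adding some. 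Condition (L1) guarantees a surplus parent is available whenever one is needed, because $\scaleh_{n+1,1}<\tweakh_{n+1,1}\le|T_{n+1}|$ forbids the only configuration ($h_{n+1,1}=h_{n+1,n}=|T_{n+1}|$) in which all parents are distinct with distinct column-$1$ roots. The case $h^{*}_{n+1,1}>0$ is handled symmetrically, by selecting a column-$1$ root to dissolve and relocating all its $T_{n+1}$-witnesses onto chains already reaching a surviving root.

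The main obstacle is controlling the non-leading entries so that the terminal state satisfies $h^{*\prime}_{n+1,i}\ge0$. The surplus-pluck move of the increase case never lowers any entry, but in tight configurations (and throughout the decrease case) a relocation can temporarily drop a column-$i$ root, and one must show this never leaves $\scaleh_{n+1,i}$ below $\tweakh_{n+1,i}$ at the end. I expect to close this by choosing, at each step, the relocation that consolidates witnesses without undershooting any target, leaning on the row/column monotonicities (L2)--(L3) to locate a reachable-but-unwitnessed root or a redundantly witnessed one, and on (L1) (and, for the finer bookkeeping across columns, (L4)) to guarantee the needed slack. Once the leading difference is $0$ and the remaining differences are non-negative, the residual positive entries are left to {\tt nonLeadingElementAdjust} (Lemma~\ref{lemma:induction}).
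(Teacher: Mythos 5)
Your overall strategy is the same one the paper uses -- modify only the foreign keys of $T_{n+1}$ so that rows $1,\dots,n$ are untouched, split on the sign of $h^{*}_{n+1,1}$, and invoke (L1) and (L2) for the counting -- and the reformulation of row $n{+}1$ as the projections $|\pi_i(P)|$ of a parent-support $P\subseteq T_n$ is a clean way to say it. But there are two concrete problems. First, your availability argument for the ``clean move'' in the case $h^{*}_{n+1,1}<0$ is wrong: a surplus parent fails to exist exactly when all $|T_{n+1}|$ parents are distinct, i.e.\ $\scaleh_{n+1,n}=|T_{n+1}|$, and this is entirely compatible with $\scaleh_{n+1,1}<\tweakh_{n+1,1}\le|T_{n+1}|$ (e.g.\ three tuples of $T_{n+1}$ with three distinct parents in $T_n$ that all share one $T_1$-ancestor). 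Ruling out the single configuration ``distinct parents \emph{and} distinct column-$1$ roots'' does not rule out ``distinct parents with colliding column-$1$ roots,'' so the move you rely on can be unavailable under hypotheses that satisfy (L1)--(L4). The paper avoids this by not insisting on root-preserving moves: it fixes one representative leaf in $T_{n+1}$ per existing column-$1$ root, shows via (L1) that $|T_{n+1}|-\scaleh_{n+1,1}\ge|h^{*}_{n+1,1}|$ non-representative leaves remain to be plucked, and via (L2) that $\scaleh_{n,1}-\scaleh_{n+1,1}\ge|h^{*}_{n+1,1}|$ branches reaching $T_n$ but not $T_{n+1}$ exist to receive them.

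Second, and more importantly, the part of the lemma that goes beyond ``drive the leading entry to zero'' -- namely that the terminal state satisfies $h^{*\prime}_{n+1,i}\ge 0$ for $2\le i\le n$ -- is exactly the part you defer with ``I expect to close this by choosing \dots the relocation that consolidates witnesses without undershooting any target.'' That is the crux, not a loose end: your own increase-case move can delete intermediate-column roots once the surplus-parent assumption fails, and your decrease case dissolves a column-$1$ root wholesale, so without an explicit invariant and a counting argument per column there is no reason the non-leading entries end up non-negative. As it stands the proposal establishes only the first coordinate of the conclusion; the remaining coordinates are asserted, not proved.
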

	\begin{proof}
		There are two cases: $h^*_{n+1,1} > 0 $ and $h^*_{n+1,1} < 0 $.
		
		\indent \textbf{Case $h^*_{n+1,1} > 0 $}: 
		There are $h^*_{n+1,1}$ more tuples in $T_1$ having descendants in $T_{n+1}$.
		Hence, tweaking is needed to make these $h^*_{n+1,1}$ tuples have no descendants in $T_{n+1}$.
		It takes two steps: \textit{Leaf Tuple Plucking} and {\it Leaf Tuple Attaching}.
		
		{\textbf{\textit{Leaf Tuple Plucking}}}: 
		Consider $S_{n+1,1}$, the tuples in $T_1$ which have descendants in $T_{n+1}$.
		Let $R_{n+1,1} $  be the $h^*_{n+1,1}$ tuples from $S_{n+1,1}$ 
		with least number of descendants in $T_{n+1}$.
		Let $Q_{n+1,1}$ be the tuples in $T_{n+1}$ which are descendants of tuples in 
		$R_{n+1,1}$.
		
		Pluck all tuples in $Q_{n+1,1}$ by removing their foreign key reference to $T_{n}$.
		Then, all tuples in $R_{n+1,1} $ have no descendants in $T_{n+1}$,
		so, $h^{*}_{n+1,1} = 0$ after the tuple plucking. 
		Next, we will do \textit{Leaf Tuple Attaching}.

		\textbf{\textit{Leaf Tuple Attaching}}:
		Let $V_{n+1,1}$ be the tuples in $T_n$ which are descendants of $S_{n+1,1} - R_{n+1,1} $.
		All tuples in $S_{n+1,1} - R_{n+1,1} $ already have descendants in $T_{n+1}$. 
		Hence, attaching more tuples to $V_{n+1,1}$ will not change $h^{*}_{n+1,1}$.
		Therefore, 
		attach back all the tuples in  $Q_{n+1,1}$ by 
		setting their foreign key reference randomly to tuples in $V_{n+1,1}$. \\

		\indent \textbf{Case $h^*_{n+1,1}<0$}:
		There are $h^*_{n+1,1}$ more tuples in $T_1$ which do not have descendants in $T_{n+1}$.
		The tweaking takes two steps by doing tuple plucking first, and then tuple attaching. 
		
		\textbf{\textit{Leaf Tuple Plucking:}}
		$|h^*_{n+1,1}|$ tuples in $T_{n+1}$ need to be found first. 
		Once these $|h^*_{n+1,1}|$ tuples are plucked, 
		we can attach them to the tuples in $T_{n}$ to increase $h^*_{n+1,1}$.
		For each tuple in $S_{n+1, 1}$ 
		(the tuples in $T_1$ which have descendants in $T_{n+1}$), 
		pick $1$ descendant in $T_{n+1}$ to form a leaf set $Leaf_{n+1,1}$.
		It is obvious that plucking any tuples from $T_{n+1} - Leaf_{n+1,1}$ never modify $h^*_{n+1,1}$.
		Then,
		\begin{equation*}
		\begin{split}
		& |T_{n+1} - Leaf_{n+1,1}| - |h^*_{n+1,1}| \\
		& =|T_{n+1}| - \scaleh_{n+1,1} - (\tweakh_{n+1,1} -{\scaleh}_{n+1,1} ) \\
		& = 
		|T_{n+1}| - \tweakh_{n+1,i} \geq 0\ \ \ \ \ \ \ \ \  \ \  \  (By\ L1)
		\end{split}
		\end{equation*} 
		Hence,  $|h^*_{n+1,1}|$ tuples in $T_{n+1} - Leaf_{n+1,1}$ can be randomly plucked.
		Next, we will attach these tuples back.
		
		\textbf{\textit{Leaf Tuple Attaching:}}
		Consider $S_{n,1} - S_{n+1,1}$,  
		the set of tuples in $T_{1}$ having descendants in $T_{n}$ but no descendants in $T_{n+1}$.
		Since
		\begin{equation*}
		\begin{split}
		& |S_{n,1} - S_{n+1,1}| - |h^*_{n+1,i}| \\
		& = |\scaleh_{n,1} - \scaleh_{n+1,1}| - | \scaleh_{n+1,1} - \tweakh_{n+1,1} |  \\
		&	=\scaleh_{n,1} - \scaleh_{n+1,1} - (- (  \scaleh_{n+1,1} - \tweakh_{n+1,1} )) \\
		&	= \scaleh_{n,1}  - \tweakh_{n+1,1} \\
		&	= \tweakh_{n,1}  -\tweakh_{n+1,1}  \geq 0\ \ \ \ \ \  \ \  \  (By\ L2)
		\end{split}	
		\end{equation*}
		
		Then, there are $|h^*_{n+1,1}|$ tuples from $S_{n,1} - S_{n+1,1}$, 
		denoted as $Sub_{n+1,1}$. 
		For each tuple in $Sub_{n+1,1}$, it must have a descendant in $T_n$. 
		Hence, randomly attach the  $|h^*_{n+1,1}|$ tuples from \textit{Leaf Tuple Plucking:} 
		to the decent in $T_n$.
		Then all tuples in $Sub_{n+1,1}$ have descendants in $T_{n+1}$ now.
		Hence, $h^*_{n+1,i}=0$ after attachment.
	\end{proof}
	
	\begin{figure}
		\centering
		\includegraphics[height=1.1in]{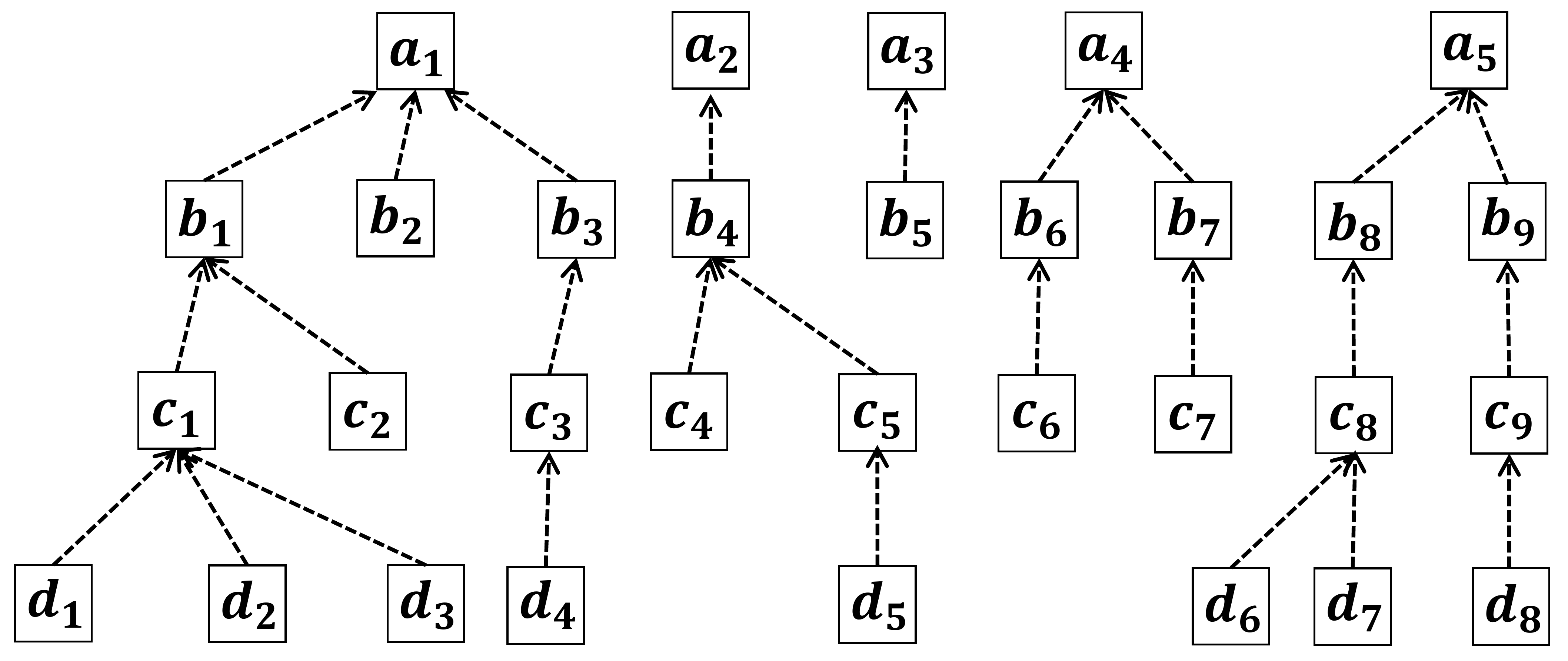}
		\caption{Non-Leading Element Tweaking Demonstration}
		\label{fig:nonLead}
	\end{figure}
	
	We are done with the leading element tweaking; 
	next, we will prove the correctness non-leading element tweaking.
	Some care is needed to tweak $h^{*\prime}_{n+1,2}$.
	
	Suppose $h^{*\prime}_{4,2}=1$ for Fig.\ref{fig:nonLead},
	so we want to remove descendants in  $T_D$ for 1 tuple in $T_B$.
	If we do this by plucking $d_5$ from $c_5$,
	then $h^*_{4,1}$ is decreased by 1 as well,
	so we should instead pluck $d_6$ or $d_7$.
	Therefore, when tweaking $h^*_{n+1,i}$,
	we should avoid affecting $h^*_{n+1,i-1}$.
	
	If $h^{*\prime}_{4,2}=-1$, 
	we need to add  descendants in $T_D$ for 1 more tuple
	in $T_B$.
	We can pluck a leaf, say $d_1$, and attach it to $c_6$;
	this will increase $\widetilde{h}_{4,2}$,
	but it will also increase $\widetilde{h}_{4,1}$.
	However, if we first pluck the subtree rooted at $b_6$ and attach it to $a_2$,
	then pluck $d_1$ and attach it to $c_6$,
	$\widetilde{h}_{4,2}$ will increase 
	without affecting other $\widetilde{h}_{j,i}$ values.
	This leads us to the following definition:
	\newline

	\begin{definition}
		For a reference chain $T_k\rightarrow\dots\rightarrow T_1$,
		suppose we pluck $t\in T_i$ from $t^\prime\in T_{i-1}$
		and attach $t$ to some other $t^{\prime\prime}\in T_{i-1}$.
		We call this an {\bf isomorphic adjustment} if the linear join matrix
		is unchanged.\newline
	\end{definition}
	
	\begin{lemma}
		For a reference chain $T_k\rightarrow\dots\rightarrow T_1$,
		we can make $|S_{k-1,i} - S_{k,i}| - |S_{k-1,i-1} - S_{k,i-1}|$
		isomorphic adjustments to $T_i$.
		\label{lemma:maxiso}
	\end{lemma}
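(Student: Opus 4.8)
The plan is to prove the count through the parent map on the reference tree defined by the foreign keys. Write $A = S_{k-1,i} - S_{k,i}$ and $B = S_{k-1,i-1} - S_{k,i-1}$, so the claimed number is $|A| - |B|$; note this is non-negative, since any realized linear join matrix satisfies the monotonicity in (L4), which gives $h_{k-1,i}-h_{k,i}\ge h_{k-1,i-1}-h_{k,i-1}$. Every $t \in A$ reaches $T_{k-1}$ but not $T_k$, so its deepest descendant sits at level $k-1$ and its subtree occupies exactly levels $T_i,\ldots,T_{k-1}$. The whole argument rests on two invariance facts that characterize when a pluck-and-attach move is \emph{isomorphic}.

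First I would pin down the removal and attachment conditions. For a child $t \in A$ of a parent $q \in T_{i-1}$, plucking the subtree at $t$ leaves the matrix unchanged precisely when $q$ still reaches every level in $\{i,\ldots,k-1\}$ afterwards: this is automatic if $q \in S_{k,i-1}$, because $q$ keeps a path to $T_k$ passing through all of those levels, and it holds for $q \in B$ exactly when $q$ retains another child in $A$. Symmetrically, attaching $t$'s subtree to a new parent $q' \in S_{k-1,i-1}$ changes nothing, since $q'$ already reaches all of $T_i,\ldots,T_{k-1}$ and so gains no new reachability. The routine verification is that, because these two endpoints keep their reach-profiles, no ancestor of either changes, and levels below $T_{i-1}$ are untouched because the moved subtree lives only at levels $\ge i$.

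Next I would count the movable tuples using the parent map $p\colon A \to T_{i-1}$. Since each $t\in A$ reaches $T_{k-1}$, $p$ lands in $S_{k-1,i-1} = S_{k,i-1} \sqcup B$, and every $q \in B$ has at least one child in $A$ (the child witnessing $q$'s reach to $T_{k-1}$ cannot reach $T_k$, as $q$ does not), so $B \subseteq p(A)$. Setting $f(q) = |p^{-1}(q)\cap A|$ gives $\sum_{q} f(q) = |A|$. By the removal condition, for each $q\in S_{k,i-1}$ all $f(q)$ of its $A$-children may be moved, while for each $q\in B$ exactly $f(q)-1$ may be moved, one child staying to keep $q$ reaching $T_{k-1}$. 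Summing yields $\sum_{q} f(q) - |B| = |A| - |B|$ available isomorphic adjustments, each routed to some $q'\in S_{k-1,i-1}$.

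The hard part will be exhibiting valid destinations for all of these moves at once and keeping every intermediate configuration isomorphic: each relocation must target a parent that already reaches $T_{k-1}$, such a target distinct from the source must exist, and later moves must not drain a parent below its one-child threshold. I expect to resolve this by routing all movable tuples to a fixed stable sink in $S_{k,i-1}$ (a parent reaching $T_k$ absorbs arbitrarily many depth-$(k-1)$ subtrees without altering its profile), and by treating the degenerate case $S_{k,i-1}=\emptyset$ separately, redistributing the moves among $B$ while always retaining one child per source parent. Checking that this routing is simultaneously isomorphic throughout the sequence is the one genuinely delicate point, rather than the counting itself.
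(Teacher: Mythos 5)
Your argument is essentially the paper's own proof: reserve one child in $S_{k-1,i}-S_{k,i}$ for each parent in $S_{k-1,i-1}-S_{k,i-1}$ (the paper's set $S_y$), and relocate the remaining $|A|-|B|$ subtrees to parents whose reach profile already covers levels $i,\dots,k-1$ --- the same decomposition and the same count. The only divergence is that you admit destinations in all of $S_{k-1,i-1}$ rather than only $S_{k,i-1}$ and explicitly flag the degenerate case $S_{k,i-1}=\emptyset$, which the paper's proof silently assumes away; that is a minor strengthening of the same route, not a different one.
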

	\begin{proof}
		($S_{k-1,i-1} - S_{k,i-1}$) are the tuples in $T_{i-1}$ having descendants in $T_{k-1}$ 
		but no descendants in $T_{k}$. 
		Similarly, ($S_{k-1,i} - S_{k,i}$) are the tuples in $T_i$ having descendants in $T_{k-1}$ 
		but no descendants in $T_{k}$. 
		For each tuple in $S_{k-1,i-1} - S_{k,i-1}$ , 
		pick $1$ descendant in $T_i$ with descendants in $T_{k-1}$ to form a set $S_y$.
		Then, $S_y \subseteq S_{k-1,i} - S_{k,i}$. 
		Let's consider $S_{k-1,i} - S_{k,i} - S_y$. 
		For any tuple in $S_{k-1,i} - S_{k,i} - S_y$, 
		we can pluck it and attach it back  randomly to tuples in 
		$S_{k,i-1}$. 
		Theses adjustments are isomorphic.
		Hence, the maximum number is $|S_{k-1,i} - S_{k,i} - S_y| = |S_{k-1,i} - S_{k,i}| - (|S_{k-1,i-1} - S_{k,i-1}|)$.
	\end{proof}

	\begin{lemma}
		[{\bf \tt nonLeadingElementAdjust}]
		Suppose the first $n$ rows of $\scaleH$ and $\tweakH$ are the same.
		Then $(0,\dots, 0, h^{*}_{n+1,i},\ldots, h^{*}_{n+1,n})$ can be tweaked to
		$(0,\dots, 0, h^{*\prime}_{n+1,i+1},$ $\ldots, h^{*\prime}_{n+1,n})$.
		\label{lemma:induction}
	\end{lemma}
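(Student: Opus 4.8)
The plan is to prove Lemma~\ref{lemma:induction} in the same two-case style as Lemma~\ref{lemma:base} (\textbf{\tt leadingElementAdjust}), the essential new requirement being that the already-zeroed entries $h^{*}_{n+1,1}=\dots=h^{*}_{n+1,i-1}=0$ must survive the adjustment of $h^{*}_{n+1,i}$. First I would record the structural observation that all tweaking of row $n+1$ is carried out by re-parenting tuples of $T_{n+1}$, i.e.\ by editing only the foreign keys pointing from $T_{n+1}$ into $T_n$. Since no reference among $T_1,\dots,T_n$ is touched, only the entries $\scaleh_{n+1,\cdot}$ of the matrix can change, so the first $n$ rows of $\scaleH$ stay equal to $\tweakH$ automatically, and the whole argument reduces to controlling which columns of row $n+1$ move.

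I would then split on the sign of $h^{*}_{n+1,i}$ (if it is $0$ there is nothing to do and we pass to column $i+1$). In the positive case $h^{*}_{n+1,i}>0$ there are too many roots in $T_i$ reaching $T_{n+1}$, and I would remove $h^{*}_{n+1,i}$ of them by picking roots $t\in S_{n+1,i}$ whose parent $t'\in T_{i-1}$ has another child already in $S_{n+1,i}$; plucking the $T_{n+1}$-descendants of such a $t$ and re-attaching them beneath that sibling disconnects $t$ from $T_{n+1}$ (lowering $\scaleh_{n+1,i}$) while $t'$ keeps a $T_{n+1}$-descendant, so $\scaleh_{n+1,i-1}$ and all earlier columns are untouched. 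In the negative case $h^{*}_{n+1,i}<0$, following the $h^{*\prime}_{4,2}=-1$ scenario of Fig.~\ref{fig:nonLead}, I would first perform one or more isomorphic adjustments in $T_i$ to relocate a subtree so that a non-root $t\in T_i\setminus S_{n+1,i}$ whose parent is already a root of $T_{n+1}\rightarrow\dots\rightarrow T_{i-1}$ becomes available, then pluck a leaf of $T_{n+1}$ and attach it below $t$; this promotes $t$ into $S_{n+1,i}$ (raising $\scaleh_{n+1,i}$) without changing $\scaleh_{n+1,i-1}$, since $t$'s parent was already counted.

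Throughout, every re-attachment is chosen to be isomorphic with respect to columns $1,\dots,i-1$, which is exactly what keeps those entries equal to $0$; the entries $\scaleh_{n+1,i+1},\dots,\scaleh_{n+1,n}$ are permitted to drift to the $h^{*\prime}_{n+1,\cdot}$ of the statement, so no bookkeeping is required there. The one genuine obligation is to show that enough candidate tuples, respectively enough isomorphic adjustments, are always available to carry out all $|h^{*}_{n+1,i}|$ changes without disturbing column $i-1$.

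This is the counting argument I expect to be the crux, and I would settle it with the monotonicity conditions of Theorem~\ref{theorem:linearnec} together with Lemma~\ref{lemma:maxiso}. In the positive case the number of ``safely removable'' roots is $\scaleh_{n+1,i}-\scaleh_{n+1,i-1}=\scaleh_{n+1,i}-\tweakh_{n+1,i-1}$, which exceeds the $\scaleh_{n+1,i}-\tweakh_{n+1,i}$ removals needed by $\tweakh_{n+1,i}-\tweakh_{n+1,i-1}\geq 0$ via (L3). In the negative case the number of available isomorphic adjustments supplied by Lemma~\ref{lemma:maxiso}, namely $|S_{n,i}-S_{n+1,i}|-|S_{n,i-1}-S_{n+1,i-1}|$, equals $(\tweakh_{n,i}-\scaleh_{n+1,i})-(\tweakh_{n,i-1}-\tweakh_{n+1,i-1})$ once the matched row $n$ and column $i-1$ are substituted, and this exceeds the $\tweakh_{n+1,i}-\scaleh_{n+1,i}$ additions needed by exactly the (L4) slack $(\tweakh_{n,i}-\tweakh_{n+1,i})-(\tweakh_{n,i-1}-\tweakh_{n+1,i-1})\geq 0$. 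Since both slacks are nonnegative, the process never stalls, and after $|h^{*}_{n+1,i}|$ steps the $i$-th error entry is $0$, completing the step to column $i+1$.
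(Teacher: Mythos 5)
Your proposal is correct and takes essentially the same route as the paper's proof: the same split on the sign of $h^{*}_{n+1,i}$, the same device of reserving one $S_{n+1,i}$-child for each element of $S_{n+1,i-1}$ with the (L3) count $\scaleh_{n+1,i}-\tweakh_{n+1,i-1}\geq h^{*}_{n+1,i}$ in the positive case, and the same appeal to Lemma~\ref{lemma:maxiso} with the (L4) slack $(\tweakh_{n,i}-\tweakh_{n+1,i})-(\tweakh_{n,i-1}-\tweakh_{n+1,i-1})\geq 0$ in the negative case. The only cosmetic wrinkle is your opening claim that row $n+1$ is tweaked solely by re-parenting tuples of $T_{n+1}$, which your own negative case then contradicts by re-parenting within $T_i$; since those moves are isomorphic adjustments, the first $n$ rows are preserved regardless, so nothing is lost.
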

	\begin{proof}
		The proof and tweaking steps are similar to Lemma~\ref{lemma:base}. 
		There are two cases: 	
		
		\indent \textbf{Case $h^*_{n+1,i} > 0 $}: Similar to Lemma~\ref{lemma:base}.
		There are two steps:
		
		\textbf{	\textit{Leaf Tuple Plucking }}: 
		For each tuple in $S_{n+1, i-1}$, 
		pick one descendant $t_y$ in $T_{i}$,  
		where $t_y$ has descendants in $T_{n+1}$.
		Use $R_{y}$ to denote the set of all such $t_y$.
		Therefore, $|R_y| = |S_{n+1,i-1}|$.
		For any tuple in $S_{n+1,i}-R_{y}$, $\scaleh_{n+1,i}$ will be decreased 
		if all its descendants in $T_{n+1}$ are detached.
		Moreover, such detachment will not affect $\scaleh_{n+1,i-1}$.
		Thus,
		\begin{equation*}
		\begin{split}
		|S_{n+1,i}-R_{y}|  &=\scaleh_{n+1,i} - \scaleh_{n+1,i-1}\\
		&=\scaleh_{n+1,i} - \tweakh_{n+1,i-1}\\
		& \geq  \scaleh_{n+1,i} - \tweakh_{n+1,i} \ \ \ \ \ (By\  L3)\\
		&= h^*_{n+1,i}
		\end{split}
		\end{equation*}
		
		Hence, $h^*_{n+1,i}$ tuples in $S_{n+1,i}-R_{y}$
		can be randomly plucked.
		
		\textbf{\textit{Leaf Tuple Attaching}:}
		Similar to Lemma~\ref{lemma:base}.\\

		\indent \textbf{Case $h^*_{n+1,i}<0$}:
		Similar to Lemma~\ref{lemma:base}, there are two steps: Leaf Tuple Plucking and Leaf Tuple Attaching.
		
		\textbf{\textit{Leaf Tuple Plucking:}} Similar to Lemma~\ref{lemma:base}.

		\textbf{\textit{Leaf Tuple Attaching:}}
		If no isomorphic adjustment is needed, then it is the same as Lemma~\ref{lemma:base}.
		Otherwise, based on Lemma~\ref{lemma:maxiso}, the maximum isomorphic adjustment is  
		$$|S_{n,i}| - |S_{n+1,i}| - (|S_{n,i-1}| -|S_{n+1,i-1}|)$$
		Moreover, 
		\begin{equation*}
		\begin{split}
		& |S_{n,i}| - |S_{n+1,i}| - (|S_{n,i-1}| -|S_{n+1,i-1}|) - |h^*_{n+1,i}|  \\
		= &  
		\scaleh_{n,i} - \scaleh_{n+1,i} + \scaleh_{n+1,i-1} - \scaleh_{n,i-1} - \tweakh_{n+1,i} + \scaleh_{n+1,i} \\
		= & \scaleh_{n,i}  + \scaleh_{n+1,i-1} - \scaleh_{n,i-1} - \tweakh_{n+1,i}  \\
		= &\tweakh_{n,i}  + \tweakh_{n+1,i-1} - \tweakh_{n,i-1}-\tweakh_{n+1,i} \geq 0 \ \ \ \ \ (By \ \ L4)
		\end{split}
		\end{equation*}	
		Therefore, at least $|h^*_{n+1,i}|$ isomorphic adjustments in $T_i$ can be made. 
		Let $Sub_{n+1,i}$ be the
		set of tuples that undergo isomorphic adjustments.
		For each tuple in the subset $Sub_{n+1,i}$, 
		pick $1$ descendant in $T_{n}$ and randomly attach a tuple plucked from the previous step.
		Hence, the leaf tuple attaching can be done.
		\newline
	\end{proof}

	\begin{theorem}
		For a reference chain $T_k\rightarrow\dots\rightarrow T_1$ in some $\scaled_i$,
		let $\scaleH$ be the linear join matrix before tweaking
		and $\tweakH$ the target linear join matrix.
		If $\tweakH$ satisfies the necessary conditions in 
		Theorem~\ref{theorem:linearnec}, 
		then Algorithm~\ref{algo:Tlinear} tweaks $\scaleH$ to give $\tweakH$.
		\label{theorem:Tlinear}
	\end{theorem}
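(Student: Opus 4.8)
The plan is to prove the theorem by induction on the rows of $\scaleH$, taking as loop invariant that once the outer loop of Algorithm~\ref{algo:Tlinear} has processed its first $n$ iterations, the first $n$ rows of the current matrix coincide with those of $\tweakH$; equivalently, the first $n$ rows of $H^*$ are identically zero. Under this invariant the two lemmas already proved do all the real work, and the theorem is essentially their orchestration. For the base case, row $1$ has no entries (an entry $\scaleh_{1,i}$ would require $i<1$), so it is vacuously equal in $\scaleH$ and $\tweakH$ and the invariant holds before the first iteration.

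For the inductive step I would assume the first $n$ rows agree and show the outer loop makes row $n+1$ agree too. First I would invoke Lemma~\ref{lemma:base} ({\tt leadingElementAdjust}), whose hypothesis---that the first $n$ rows already match---is exactly the induction hypothesis; it drives $h^*_{n+1,1}$ to $0$ and leaves the remaining differences in the form $(0,h^{*\prime}_{n+1,2},\dots,h^{*\prime}_{n+1,n})$. Then I would apply Lemma~\ref{lemma:induction} ({\tt nonLeadingElementAdjust}) successively for columns $i=2,\dots,n$: each application zeroes the current leading nonzero difference $h^*_{n+1,i}$ (handling either sign) while preserving the zeros already placed in columns $1,\dots,i-1$, matching the $(0,\dots,0,h^*_{n+1,i},\dots)\mapsto(0,\dots,0,h^{*\prime}_{n+1,i+1},\dots)$ form of that lemma. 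After column $n$ is processed the entire $(n+1)$-th row of $H^*$ is zero, so the first $n+1$ rows agree and the invariant is re-established; after $k$ iterations $\scaleH$ has been turned into $\tweakH$.

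The two facts that must be checked to close the induction---and where I expect the only real care is needed---are non-interference across rows and within a row. For the first: every operation used while tweaking row $n+1$ merely re-points the foreign keys of $T_{n+1}$ tuples to parents in $T_n$ (plucking and re-attaching leaves), whereas $\scaleh_{j,i}=|S_{j,i}|$ for $j\le n$ depends only on the reference structure among $T_j,\dots,T_i$, which never involves $T_{n+1}$; hence the already-fixed first $n$ rows are untouched, and the lemmas' shared hypothesis survives each invocation within row $n+1$. For the second: when {\tt nonLeadingElementAdjust} fixes column $i$ it must not disturb column $i-1$, the subtle point flagged in the discussion around Fig.~\ref{fig:nonLead} and resolved by the isomorphic-adjustment accounting of Lemma~\ref{lemma:maxiso}; the feasibility of every pluck/attach there rests on the necessary conditions (L1)--(L4), which hold by hypothesis. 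Since each lemma invocation is therefore legitimate and terminates, Algorithm~\ref{algo:Tlinear} transforms $\scaleH$ into $\tweakH$, establishing sufficiency. The genuinely hard combinatorics lives in the lemmas; the theorem itself is the verification that the induction invariant is preserved and that no invocation's preconditions are violated.
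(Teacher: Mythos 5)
Your proof follows essentially the same route as the paper's: iterate over the rows of $H^*$, zero the leading entry of each row with Lemma~\ref{lemma:base} and the remaining entries with Lemma~\ref{lemma:induction}, the non-interference within a row being exactly what the isomorphic-adjustment machinery of Lemma~\ref{lemma:maxiso} guarantees. Your version is in fact more careful than the paper's one-line argument, since you explicitly state and verify the loop invariant that the first $n$ rows remain fixed (noting only that the within-row plucks for column $i$ may also touch tuples of $T_i$, not just $T_{n+1}$, but these are isomorphic and so leave earlier rows unchanged).
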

	\begin{proof}
		Algorithm~\ref{algo:Tlinear} iterates over the rows of $H^*=\scaleH-\tweakH$;
		for each row, the first entry is tweaked to 0 with Lemma~\ref{lemma:base},
		and the following entries are tweaked to 0 with Lemma~\ref{lemma:induction}.
	\end{proof}

	\subsection{Theorem and proofs for $\coappear$ feature}
	In this section, we present formal proofs that are related to $\coappear$ feature.\newline
	
	\textit{Theorem 3}  \text{\normalfont [{\bf sufficiency}]}
	Suppose tables $T_1, \dots, T_k$ reference the same tables 
	$T_1^\prime, \dots, T_m^\prime$. 
	Let $\scalexi$ be the coappear distribution in some $\scaled_i$ before tweaking and 
	$\tweakxi$ the target coappear distribution.
	If $\tweakxi$ satisfies the necessary conditions in
	Theorem~\ref{theorem:coappearnecessary},
	then $\Tcoappear$ tweaks $\scalexi$ to become $\tweakxi$. 
	\begin{proof}
		Both $\scalexi$ and $\tweakxi$ satisfy $C1$,
		so $|T_i|$ is unaffected by the tweaking.
		Similarly, $C2$ ensures
		$ \sum_{\bf v}\scalexi= \sum_{\bf v}\tweakxi$,
		so $\sum_{\bf v}\xi^*=0$.
		Therefore
		\[
		\sum_{{\bf v}\in\Delta^+}\xi^*({\bf v})+
		\sum_{{\bf v}\in\Delta^0}\xi^*({\bf v})+
		\sum_{{\bf v}\in\Delta^-}\xi^*({\bf v})=0,
		\]
		so
		$ \sum_{{\bf v}\in\Delta^+}\xi^*({\bf v})=
		\sum_{{\bf v}\in\Delta^-}-\xi^*({\bf v})$,
		i.e.
		$ \sum_{{\bf v}^\prime\in\Delta^+}\xi^*({\bf v}^\prime)=
		\sum_{{\bf v}\in\Delta^-}|\xi^*({\bf v})|$.
		Each tweak decreases $\xi^*({\bf v}^\prime)$ by 1 and increases 
		$\xi^*({\bf v})$ by 1 for some ${\bf v}^\prime\in\Delta^+$
		and ${\bf v}\in\Delta^-$.
		After $\sum_{{\bf v}^\prime\in \Delta^+}\xi^*({\bf v}^\prime)$ iterations,
		we get
		$ \sum_{{\bf v}^\prime\in\Delta^+}\xi^*({\bf v}^\prime)=0=
		\sum_{{\bf v}\in\Delta^-}|\xi^*({\bf v})|$, so $\xi^*=0$;
		i.e. $\scalexi=\tweakxi$.
	\end{proof}

	\subsection{Theorem and proofs for $\pairwise$ feature}
	In this section, we present formal proofs that are related to $\linear$ feature.
	We first prove Theorem~\ref{theorem:pairsufficient}.\newline
	
	\textit{Theorem 5} \text{\normalfont [{\bf sufficiency}]}
	For each $\response$ table $R$, 
	let $\scalepo_R$ be the pairwise distribution in $\scaled$ and 
	$\tweakpo_R$ the target pairwise distribution.
	If $\tweakpo_R$ satisfies ($P3$) in Theorem~\ref{theorem:pairnec},
	then Algorithm~\ref{algo:Tpairwise} tweaks $\scalepo_R$ to become $\tweakpo_R$.
	Moreover, the extra tuples added to the {\tt post} table $P$ is at most
	$|U|-|P|$, where $U$ is the {\tt user} table.
	
	\begin{proof}
		Since $\scalepo_R$ and $\tweakpo_R$ both satisfy ($P3$),
		and $\Tpairwise$ does not affect $|U|$,
		we have $\sum_{x,y}\scalepo_R(x,y)= \sum_{x,y}\tweakpo_R(x,y)$,
		and so $\sum_{x,y}\rho^*_R(x,y)=0$.
		As in the proof of Theorem~\ref{theorem:Tcoappear},
		this implies
		\[
		\sum_{(x,y)\in\Theta^+}\rho^*_R(x,y)=
		\sum_{(x,y)\in\Theta^-}|\rho^*_R(x,y)|.
		\]
		Each tweak by Algorithm~\ref{algo:Tpairwise} 
		increases $\rho^*_R(x,y)$ and $\rho^*_R(y,x)$ by 1, and
		decreases $\rho^*_R(x^\prime,y^\prime)$ and $\rho^*_R(y^\prime,x^\prime)$ by 1
		for $(x,y)\in\Theta^-$ and $(x^\prime,y^\prime)\in\Theta^+$.
		After $\frac{1}{2}\sum_{(x,y)\in\Theta^+}\rho^*_R(x,y)$ loops,
		we get  
		$\sum_{(x,y)\in\Theta^+}\rho^*_R(x,y)=0
		= \sum_{(x,y)\in\Theta^-}|\rho^*_R(x,y)|$,
		so $\rho^*_R=0$, i.e. $\scalepo=\tweakpo$.
		
		Moreover, 
		if a new post needs to be added to $P$,
		then each user has at most 1 post.
		Thus, there are $|U|-|P|$ users who do not have posts,
		so that many posts need to be added to ensure each user has 1 post.
		\newline
	\end{proof}
	
	Previously, we assume a user does not respond to his/her own post.
	Now, we remove the assumption. 
	However, we separate the distribution $\rho_R$ into 
	2 distributions: $\rho_S$ and $\rho_N$, 
	where $\rho_S$ is the distribution generated by self-responding behavior, and
	$\rho_N$ does not contain any pairwise vector generated by self-responding.
	Sec.\ref{sec:Tpairwise} has discussed the case for $\rho_N$, 
	so we now discuss  tweaking for $\rho_S$.\newline

	\begin{theorem}
		For a $\response$ table $R$ in some $\scaled_i$
		let $\scalepo_S$ be the pairwise distribution generated 
		by user self responding before tweaking.
		and $\tweakpo_S$ the target pairwise distribution.
		If $\scalepo_S$ can be tweaked to become $\tweakpo_S$,
		then $\tweakpo_S$ satisfies the following conditions:
		\begin{align}
		(SP1)&\sum_{x}2x\tweakpo_S(x,x) + \sum_{x,y}(x+y)\tweakpo_N(x,y) = 2|T_R|   \nonumber \\
		(SP2)&\sum_{x}\tweakpo_S(x,x)  = |U|\ 
		{\rm where\ U\ is\ the\ }{\tt user}\ {\rm table}. \nonumber
		\end{align}
		\label{theorem:pairnecall}	
	\end{theorem}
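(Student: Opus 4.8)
The plan is to establish $(SP1)$ and $(SP2)$ by the same double-counting arguments used for $(P2)$ and $(P3)$ in Theorem~\ref{theorem:pairnec}, with the crucial adjustment that a self-responding ``pair'' $\langle u,u\rangle$ is its own reverse, so it is counted once rather than twice. Throughout I would read $\tweakpo_S(x,x)=k$ as asserting that exactly $k$ users respond $x$ times to their own posts, which is the natural meaning of the self-responding distribution; the index $(x,x)$ simply records that for a self-pair both response counts coincide.

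For $(SP2)$, I would argue by counting users directly. Every user $u\in U$ responds to his or her own post some number of times $x\ge 0$ (possibly zero), and this value is unique to $u$, so $u$ is counted exactly once, in the term $\tweakpo_S(x,x)$. Summing over all $x$ therefore counts every user exactly once, giving $\sum_x\tweakpo_S(x,x)=|U|$. This mirrors $(P3)$, except that there the objects counted were the $|U|(|U|-1)$ ordered pairs of \emph{distinct} users, whereas here each user indexes a single self-pair.

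For $(SP1)$, I would partition the tuples of $R$ into those arising from self-responses and those between distinct users, then count each class. A user contributing to $\tweakpo_S(x,x)$ supplies exactly $x$ self-response tuples, so the number of self-response tuples is $\sum_x x\,\tweakpo_S(x,x)$. For the remaining tuples the argument of $(P2)$ applies verbatim to $\tweakpo_N$: each distinct pair $\langle u_i,v_i\rangle$ with vector $(x,y)$ yields $x+y$ tuples, and these are double-counted because the reverse pair $\langle v_i,u_i\rangle$ with vector $(y,x)$ is also recorded, so $\sum_{x,y}(x+y)\tweakpo_N(x,y)$ equals twice the number of non-self tuples. Combining gives $|T_R|=\sum_x x\,\tweakpo_S(x,x)+\frac{1}{2}\sum_{x,y}(x+y)\tweakpo_N(x,y)$, and multiplying through by $2$ yields $(SP1)$.

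The main obstacle is keeping the counting multiplicities consistent across the two summands. The non-self term inherits a factor of $2$ from the symmetric double-counting of distinct pairs, while the self term is counted only once, so the two classes of tuples sit on different scales; the resolution is to scale the right-hand side to $2|T_R|$ and correspondingly insert the factor $2$ on the self term, which is exactly the asymmetry recorded by the coefficient $2x$ in front of $\tweakpo_S(x,x)$. Once this scaling is fixed, both identities follow from elementary counting, and since any dataset realizing $\tweakpo_S$ and $\tweakpo_N$ must satisfy them with respect to its own $|T_R|$ and $|U|$, they are necessary for $\scalepo_S$ to be tweakable to $\tweakpo_S$.
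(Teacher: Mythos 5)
Your proof is correct and follows essentially the same route as the paper: a double-counting of the tuples of $R$ partitioned into self-responses (each counted once by $x\,\tweakpo_S(x,x)$, hence the compensating factor $2$) and responses between distinct users (each counted twice by the symmetric entries of $\tweakpo_N$), together with the observation that each user indexes exactly one self-pair for $(SP2)$. Your write-up is in fact a cleaner statement of the multiplicity bookkeeping than the paper's own, but the underlying argument is identical.
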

	\begin{proof}
		\noindent
		($SP1$) For each response tuple $t$, made from $u_i$ to $v_i$. 
		If $u_i \neq v_i$, assuming their pairwise vector is (x,y), 
		then it is double-counted by $\tweakpo_R(x,y)$ and  $\tweakpo_R(x,y)$.
		If $u_i = v_i$, then there are only $x$ tuples, which are double-counted by $(x+y)$.
		So we get the equality in ($SP1$).
		
		\noindent
		($SP2$) There are $|U|$ users, each user can respond to himself. Hence counted only once.
		\newline
	\end{proof}

	For tweaking $\rho_S$, it is similar to $\rho_N$.

	Let $\rho^*_S=\scalepo_S-\tweakpo_S$,
	$\Theta_S^+=\{(x,x)|\rho^*_S(x,x)>0\}$ and
	$\Theta_S^-=\{(x,x)|\rho^*_S(x,x)<0\}$.
	For each $(x,x)\in\Theta_S^-$, it adds $|\rho^*_S(x,x)|$ pairs 
	$\langle u_i,u_i\rangle$, 
	where user $u_i$ has $x$ $\response$ tuples in $R$ referencing $u_i$'s post.
	It does this by looping $|\rho^*_S(x,x)|$ times, and in each iteration:
	
	{\bf PairwiseVectorRetrieve:}
	Pick $v^\prime=(x^\prime,x^\prime)\in\Theta_S^+$ that is closest to $(x,x)$
	by Manhattan distance.
	
	{\bf TupleModification:}
	Choose users $u$ with self-respond  pairwise vector $(x^\prime,x^\prime)$
	and tweak $u$'s responses to $u$'s post, as follows:
	If $x<x^\prime$:
	this means $u$ has $x^\prime-x$ more responses to $u$'s post than desired,
	so $\Tpairwise$ randomly chooses and removes $x^\prime-x$ such responses.
	If $x>x^\prime$:
	$\Tpairwise$ adds $x-x^\prime$ responses from $u$ on $u$'s post.
	If $u$ has no post, we artificially create a post for $u$.
	To do this, we pick another user $w$ who has more than 1 post and,
	among $w$'s posts, pick a post $p_w$ with minimum responses;
	we make $p_w$ a post by $u$, 
	and shift the responses to $p_w$ to other posts by $w$.
	If (in the worst case) all other users have at most 1 post,
	then we create a new post $p$ for $u$, and add $x-x^\prime$ responses to $p$.
	
	{\bf StatsUpdate:}
	Increase $\rho^*_S(x,x)$ by 1 and 
	decrease $\rho^*_S(x^\prime,x^\prime)$  by 1.

	The following theorem says that conditions in Theorem~\ref{theorem:pairnecall} suffices to ensure that
	$\Tpairwise$ tweaks $\scalepo_S$ to become $\tweakpo_S$.
	\newline
	
	\begin{theorem}
		For each $\response$ table $R$ in some $\scaled_i$, 
		let $\scalepo_S$ be the self-responded pairwise distribution before tweaking and 
		$\tweakpo_S$ the target pairwise distribution.
		If $\tweakpo_S$ satisfies the conditions in Theorem~\ref{theorem:pairnecall},
		then $\scalepo_S$ can be tweaked to  $\tweakpo_S$.
		Moreover, the extra tuples added to the {\tt post} table $P$ is at most
		$|U|-|P|$, where $U$ is the {\tt user} table.
	\end{theorem}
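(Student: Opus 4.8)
The plan is to follow the proof of Theorem~\ref{theorem:pairsufficient} almost verbatim, exploiting the fact that self-responding pairs all lie on the diagonal $(x,x)$, so the symmetric double-counting that forced the factor $\frac{1}{2}$ in the $\rho_N$ case disappears and each vector is now handled by the single entry $\tweakpo_S(x,x)$. First I would invoke condition $(SP2)$: since both $\scalepo_S$ and $\tweakpo_S$ satisfy $\sum_x \po_S(x,x) = |U|$ and $\Tpairwise$ never alters $|U|$, the total diagonal mass is conserved, giving $\sum_x \rho^*_S(x,x) = 0$, where $\rho^*_S = \scalepo_S - \tweakpo_S$.

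From this conservation I would split the diagonal into $\Theta_S^+$, $\Theta_S^0$ and $\Theta_S^-$ and conclude $\sum_{(x,x)\in\Theta_S^+}\rho^*_S(x,x) = \sum_{(x,x)\in\Theta_S^-}|\rho^*_S(x,x)|$, exactly as in the $\rho_N$ argument. Each pass through the inner loop performs one \textbf{StatsUpdate} that decreases a positive entry $\rho^*_S(x',x')$ by $1$ and increases a negative entry $\rho^*_S(x,x)$ by $1$, so the remaining positive mass $\sum_{\Theta_S^+}\rho^*_S$ drops by exactly $1$ per iteration. After $\sum_{(x,x)\in\Theta_S^+}\rho^*_S(x,x)$ iterations both sides vanish, forcing $\rho^*_S = 0$, i.e. $\scalepo_S = \tweakpo_S$; this is precisely the termination argument of Theorem~\ref{theorem:pairsufficient} restricted to the diagonal.

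The step I expect to be the real obstacle is verifying that the \textbf{TupleModification} is always realizable, specifically the sub-case $x > x'$ in which user $u$ must acquire additional self-responses but currently owns no post. Here I would argue the post-supplying procedure is always available --- either borrow a post $p_w$ with minimum responses from some $w$ owning more than one post and relocate $p_w$'s responses to $w$'s other posts, or, when every user holds at most one post, manufacture a fresh post for $u$ --- and that neither operation disturbs any already-matched diagonal entry, so the counting argument above stays valid. The same analysis yields the \emph{Moreover} claim: a brand-new post is created only in the regime where every user already owns at most one post, whence the number of users lacking a post is at most $|U| - |P|$, bounding the extra tuples added to the {\tt post} table $P$ by $|U| - |P|$, exactly as in Theorem~\ref{theorem:pairsufficient}.
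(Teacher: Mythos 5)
Your proposal is correct and follows essentially the same route as the paper: the paper's own proof of this theorem simply states that it is ``similar to the proof in Sec.~3.3'' (i.e., the sufficiency proof for $\tweakpo_R$), and your argument is a faithful instantiation of that---using $(SP2)$ to get $\sum_x\rho^*_S(x,x)=0$, balancing $\Theta_S^+$ against $\Theta_S^-$, and noting that the diagonal entries are updated singly so the factor $\frac{1}{2}$ disappears. The handling of the post-creation subcase and the $|U|-|P|$ bound likewise matches the paper's reasoning for Theorem~\ref{theorem:pairsufficient}.
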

	\begin{proof}
		The proof is similar to the proof in Sec.\ref{sec:Tpairwise}
	\end{proof}

	\subsection{Dataset summary}
	In this section, we summarize the datasets used in the experiments.
	We used $4$ datasets from $\Douban$\footnote{https://www.douban.com} 
	and $\Xiami$\footnote{https://www.xiami.com}.  
	$\Douban$ is a Chinese social network website that allows 
	the creation and sharing of content related 
	to movies, books, music, recent events and activities in Chinese cities.
	$\Xiami$ is a Chinese online music website 
	that provides recommendations of music services,
	offline music activities, and other interactive content.
	The short summary of the $4$ datasets are the following:
	\begin{enumerate}
		\item 
		$\DoubanMovie$ contains movie-related data in 
		17 tables, with table sizes ranging from 10856 to 36747342 tuples. 
		
		\item 
		$\DoubanBook$ contains book-related data in 
		12 tables, with table sizes ranging from 686605 to 12891598 tuples.  
		
		\item 
		$\DoubanMusic$ contains music-related data in 
		10 tables, with table sizes ranging from 52078 to 7086936 tuples. 
		
		\item 
		$\Xiami$ also contains music-related data, but is larger:
		It has 26 tables and more than 90millions tuples.
	\end{enumerate}

	Fig.\ref{fig:datasize} presents the dataset size for each partition. 
	For example, the 6th partition of {\tt DoubanMovie} $\oldd_6$ is 2.5 Gigabytes.
	\begin{figure}[h!]
		\centering
		\includegraphics[height = 0.88in]{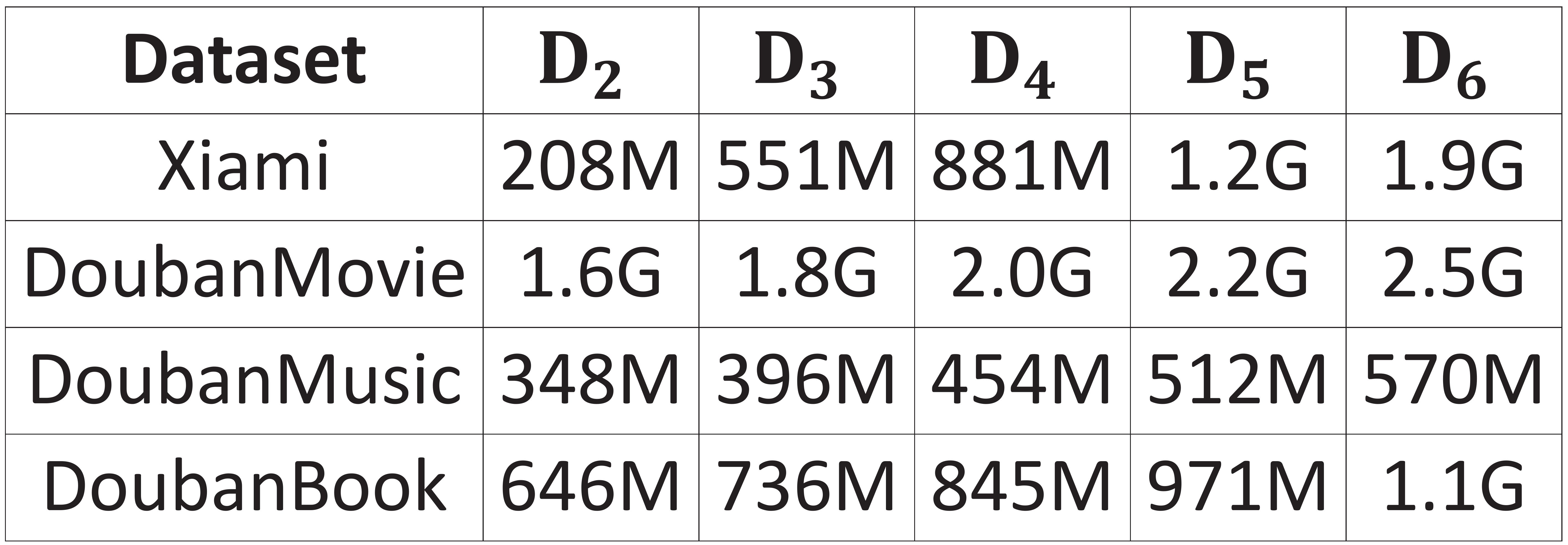}
		\caption{Dataset Size Summary}
		\label{fig:datasize}
	\end{figure}
	
	The schema of each dataset is presented as follows.
	\begin{figure}[h!]
		\centering
		\includegraphics[height = 2.1in,trim=4 4 4 4,clip]{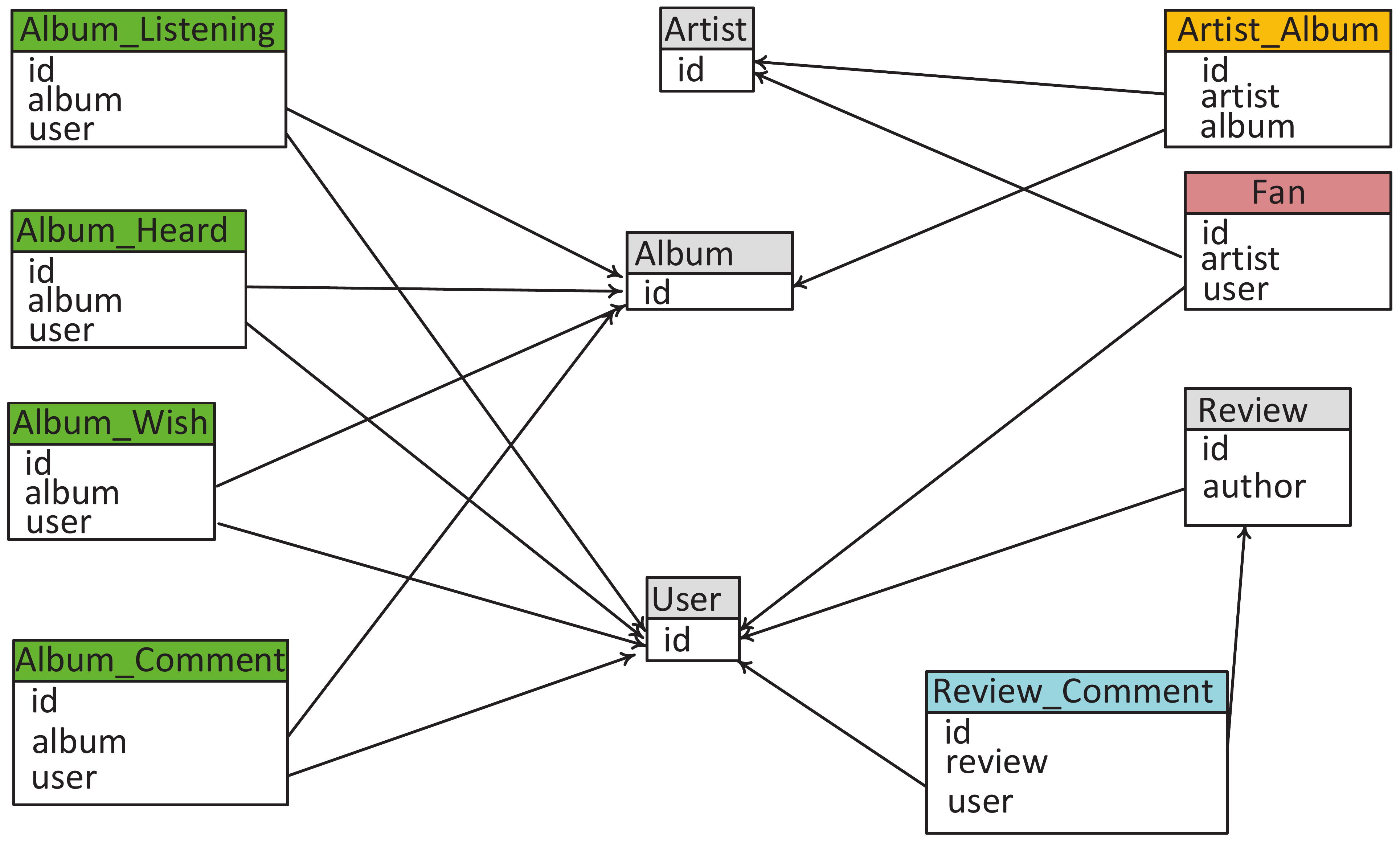}
		\caption{Schema For $\DoubanMusic$: 
			There are 11 tables;
			The tables with the same color (except grey color) share the same coappear distribution.  
			For example, {\tt Album\_Comment, Album\_Listening, Album\_Heard, Album\_Wish} reference to both {\tt Album} and {\tt User } tables; {\tt Review} is the {\tt post} table; {\tt Review\_Comment} is the $\response$ table.}
		\label{fig:schemamusic}
	\end{figure}

	\begin{figure}[h!]
		\includegraphics[height = 2.5in]{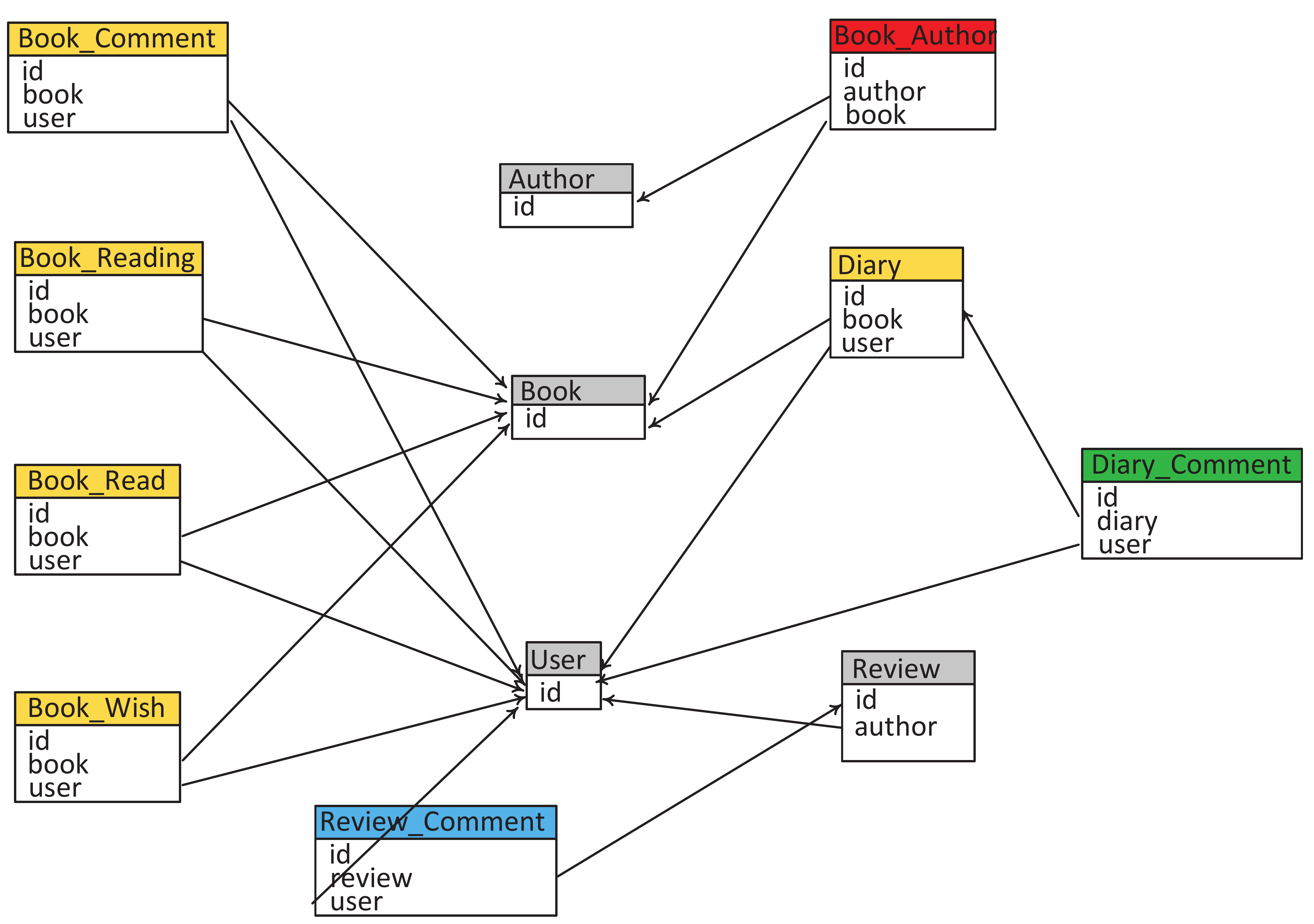}
		\caption{Schema For $\DoubanBook$: 
			There are 12 tables;
			The tables with the same color (except grey color) share the same coappear distribution.  
			For example, {\tt Book\_Comment}, {\tt Book\_Reading}, {\tt Book\_Read}, {\tt Book\_Wish, Diary} 
			reference to both {\tt Book} and {\tt User} tables; 
			{\tt Diary} and {\tt Review} are the {\tt post} tables; {\tt Diary\_Comment} {\tt Review\_Comment} are the $\response$ tables.}
		\label{fig:schemabook}
	\end{figure}

	\begin{figure}[h!]
		\includegraphics[height = 2.8in]{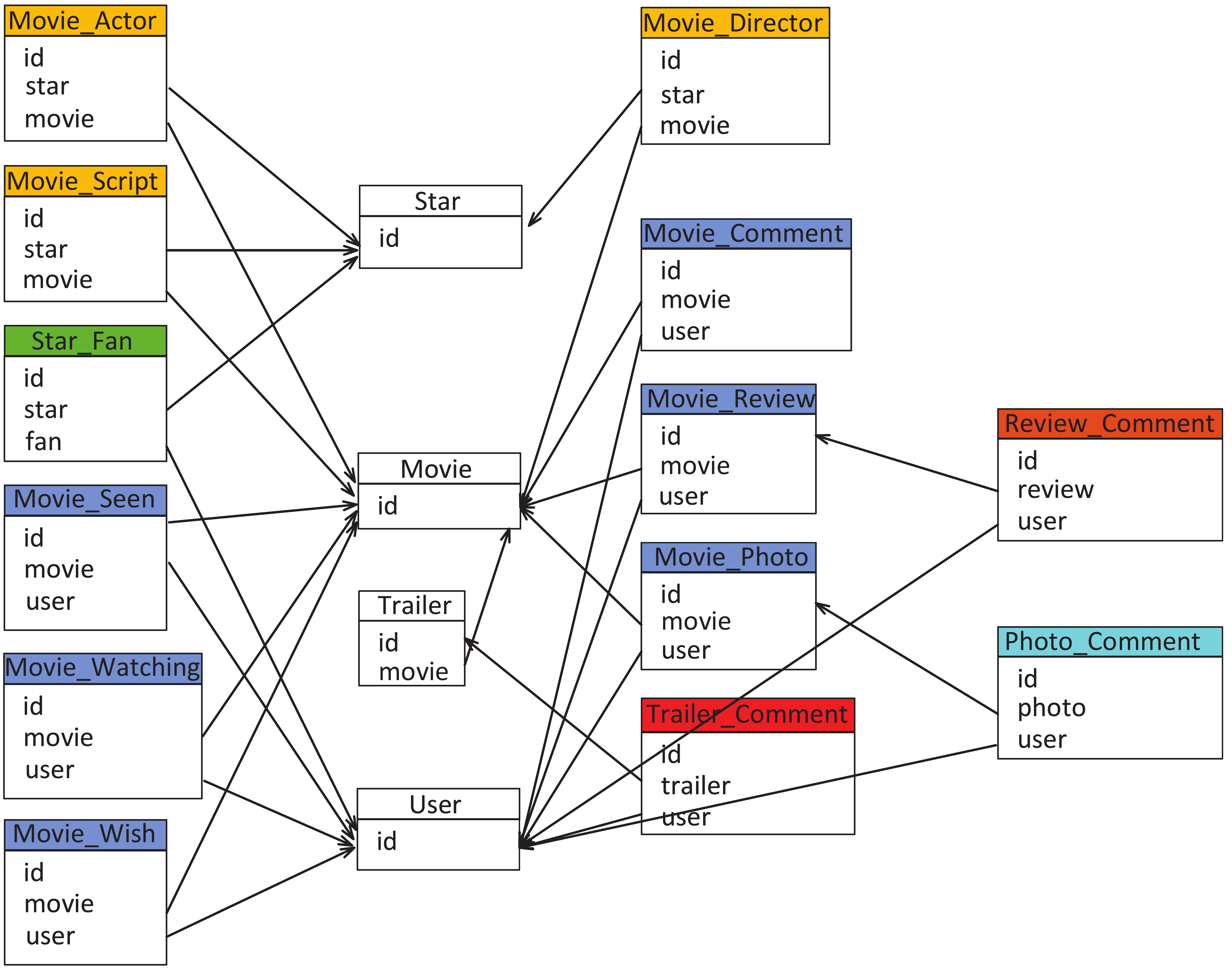}
		\caption{Schema For $\DoubanMovie$: 
			There are 17 tables;
			The tables with the same color (except grey color) share the same coappear distribution. 
			For example, {\tt Movie\_Actor}, {\tt Movie\_Script}, {\tt Movie\_Director} reference to both {\tt Star} and {\tt Movie} tables;
			{\tt Movie\_Review} and {\tt Movie\_Photo} are the {\tt post} tables; {\tt Review\_Comment} and {\tt Photo\_Comment} are the $\response$ tables.}
		\label{fig:schemamovie}
	\end{figure}
	
	\begin{figure}[h!]
		\includegraphics[height = 3.1in,trim=4 4 4 4,clip]{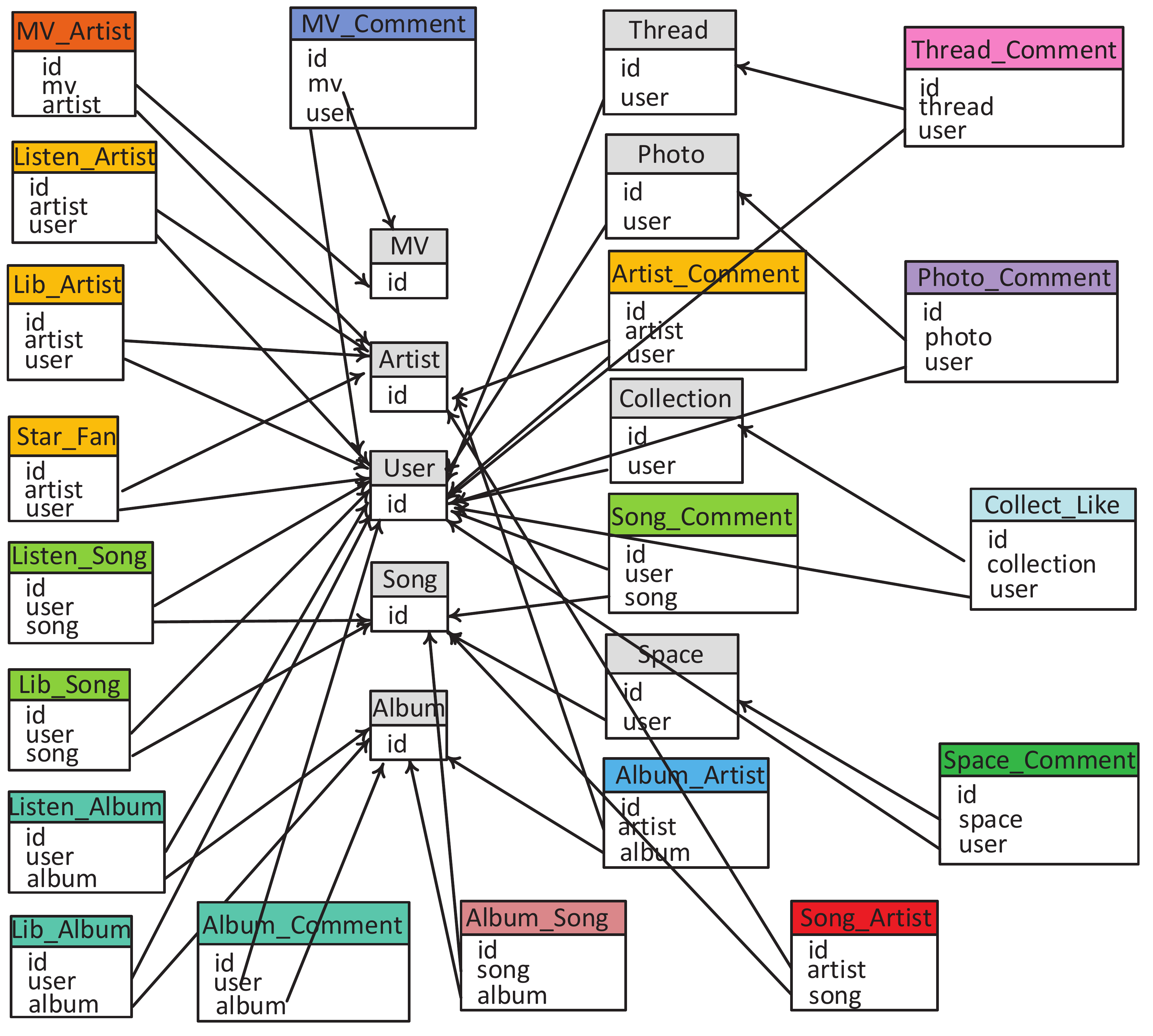}
		\caption{Schema For $\Xiami$: 
			There are 28 tables;
			The tables with the same color (except grey color) share the same coappear distribution.
			For example, {\tt Listen\_Song}, {\tt Lib\_Song} 
			reference to both {\tt Song} and {\tt User} tables;
			{\tt Collection}, {\tt Photo}, {\tt Space} and {\tt Thread}  are the {\tt post} tables;
			{\tt Photo\_Comment}, {\tt Space\_Comment}, {\tt Collect\_Like} and {\tt Thread\_Comment} are the $\response$ tables.}
		\label{fig:schemaxiami}
	\end{figure}
	
	\subsection{Feature similarity for $\DoubanMovie$, $\DoubanMusic$, $\DoubanBook$}
	In this section, we presents the feature similarity for $\DoubanMovie$, $\DoubanMusic$, $\DoubanBook$.
	\begin{figure}[t!]
		\begin{minipage}[t]{0.32\linewidth}
			\centering
			\includegraphics[height = 1.1in]{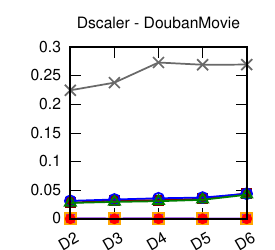}
		\end{minipage}
		\hfill
		\begin{minipage}[t]{0.32\linewidth}
			\centering
			\includegraphics[height = 1.1in]{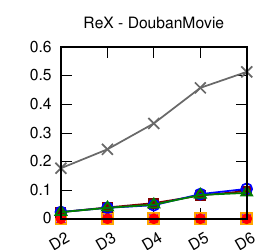}
		\end{minipage}
		\hfill
		\begin{minipage}[t]{0.32\linewidth}
			\centering
			\includegraphics[height = 1.1in]{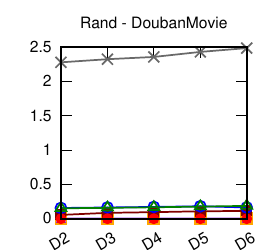}
		\end{minipage}
		\begin{minipage}[t]{0.32\linewidth}
			\centering
			\includegraphics[height = 1.1in]{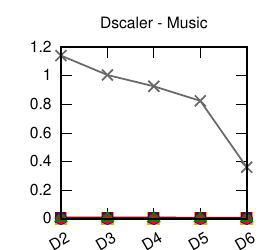}
		\end{minipage}
		\hfill\begin{minipage}[t]{0.32\linewidth}
			\centering
			\includegraphics[height = 1.1in]{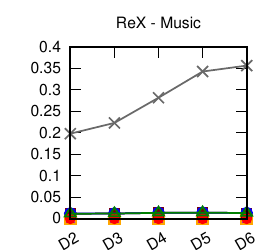}
		\end{minipage}	
		\hfill
		\begin{minipage}[t]{0.32\linewidth}
			\centering
			\includegraphics[height = 1.1in]{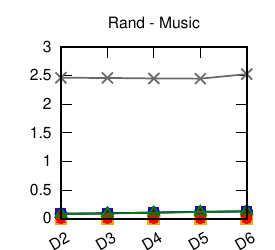}
		\end{minipage}
		\begin{minipage}[t]{0.32\linewidth}
			\centering
			\includegraphics[height = 1.1in]{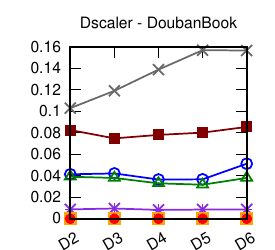}
		\end{minipage}
		\hfill
		\begin{minipage}[t]{0.32\linewidth}
			\centering
			\includegraphics[height = 1.1in]{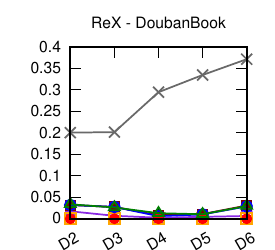}
		\end{minipage}
		\hfill\begin{minipage}[t]{0.32\linewidth}
			\centering
			\includegraphics[height = 1.1in]{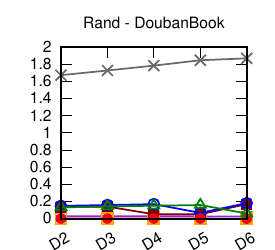}
		\end{minipage}
		\hfill
		\begin{minipage}[t]{0.99\linewidth}
			\centering
			\includegraphics[height = 0.085in]{newgnuplot/seed/xiami/legend-eps-converted-to.pdf}
		\end{minipage}
		\caption{Linear feature errors }
		\label{fig:linearMore}
		\vspace{-4mm}
	\end{figure}
	
	\subsubsection{\textbf{Linear feature similarity}}
	Fig.\ref{fig:linearMore} presents the $\linear$ feature similarity results.
	All tables are involved in at least one linear join matrix for all datasets.
	For $\DoubanMovie$, each of the 17 tables is involved in one of 24 linear
	join matrices.
	For example, {\tt Movie\_Comment } $\rightarrow$ {\tt Movie} and 
	{\tt Trailer\_Comment} $\rightarrow$ {\tt Trailer} $\rightarrow$ 
	{\tt Movie} are maximal linear joins.
	Similarly, the 12 tables in $\DoubanBook$ have 15 linear join matrices,
	the 11 tables in $\DoubanMusic$ have 14 linear join matrices,
	
	In general, the later $\Tlinear$ is applied,
	the smaller the linear feature error,
	i.e. C-L-P and P-L-C have smaller errors 
	than L-C-P and L-P-C, and C-P-L and P-C-L have 0 error.
	All permutations reduce the error tremendously for all size-scalers on all datasets.
	
	Even through the error reduction is huge, there are still some cases that the error is > 0.1.
	For example,  {\tt Rand-DoubanBook} for L-P-C. 
	It reduces $\linear$ feature error from 2 to 0.2. 
	We further investigate this issue, 
	the largest error occurs on the join {\tt Book\_Comment}  $\rightarrow$ {\tt User}. 
	For L-P-C, while tweaking the coappear distribution for $\xi_{\bf T}$, where 
	$\bf T$ is $\langle$ {\tt Book\_Comment}, {\tt Book\_Read}, 
	{\tt Book\_Reading}, {\tt Book\_Wish}, {\tt Book\_Review} $\rangle$,
	it overlaps with 1 pairwise distribution --- 
	{\tt Book\_Review} as a {\tt post} table, 
	and {\tt Review\_Comment} as a {\tt response2post} table. 
	Moreover, it overlaps with 12 linear joins 
	(e.g. {\tt Book\_Comment  $\rightarrow$ User}, {\tt Book\_Comment  $\rightarrow$ Book}).
	As stated in Sec.\ref{sec:featureSimExp},
	such highly overlapped features increase the difficulty of 
	getting a validated modification as described in Section~\ref{sec:framework}. 
	Hence, this could be a potential reason that error is > 0.1.
	
	\begin{figure}[t!]
		\begin{minipage}[t]{0.32\linewidth}
			\centering
			\includegraphics[height = 1.1in]{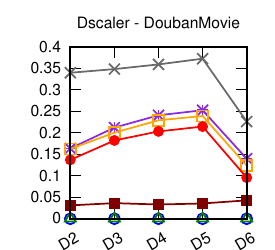}
		\end{minipage}
		\hfill
		\begin{minipage}[t]{0.32\linewidth}
			\centering
			\includegraphics[height = 1.1in]{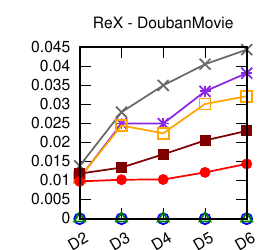}
		\end{minipage}
		\hfill
		\begin{minipage}[t]{0.32\linewidth}
			\centering
			\includegraphics[height = 1.1in]{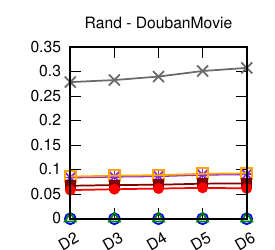}
		\end{minipage}
		\begin{minipage}[t]{0.32\linewidth}
			\centering
			\includegraphics[height = 1.1in]{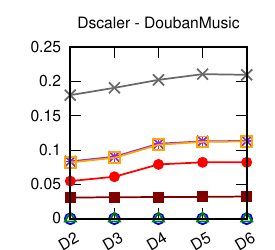}
		\end{minipage}
		\hfill\begin{minipage}[t]{0.32\linewidth}
			\centering
			\includegraphics[height = 1.1in]{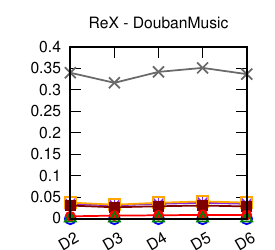}
		\end{minipage}	
		\hfill
		\begin{minipage}[t]{0.32\linewidth}
			\centering
			\includegraphics[height = 1.1in]{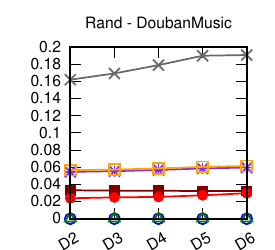}
		\end{minipage}
		\begin{minipage}[t]{0.32\linewidth}
			\centering
			\includegraphics[height = 1.1in]{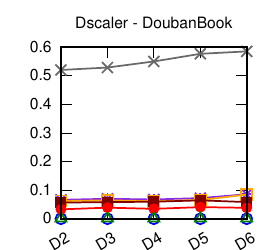}
		\end{minipage}
		\hfill
		\begin{minipage}[t]{0.32\linewidth}
			\centering
			\includegraphics[height = 1.1in]{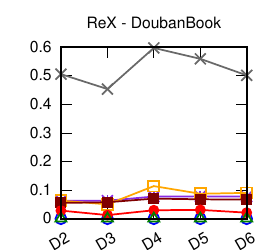}
		\end{minipage}
		\hfill\begin{minipage}[t]{0.32\linewidth}
			\centering
			\includegraphics[height = 1.1in]{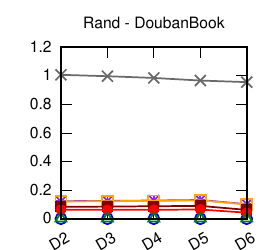}
		\end{minipage}
		\hfill
		\begin{minipage}[t]{0.99\linewidth}
			\centering
			\includegraphics[height = 0.085in]{newgnuplot/seed/xiami/legend-eps-converted-to.pdf}
		\end{minipage}
		\caption{Coappear feature errors }
		\label{fig:coappearMore}
		\vspace{-4mm}
	\end{figure}
	
	\subsubsection{\textbf{Coappear feature similarity}}
	Fig.\ref{fig:coappearMore} presents the $\coappear$ feature similarity results.
	There are 6 coappear distributions for $\DoubanMovie$. 
	For example, 
	The 6 tables $\langle $ {\tt  Movie\_Seen}, {\tt Movie\_Watching}, 
	{\tt Movie\_Wish}, {\tt Movie\_Photo}, {\tt Movie\_Review}, 
	{\tt Movie\_Comment} $\rangle$ reference {\tt Movie} and {\tt User}.
	Similarly, $\DoubanMusic$ has 4 coappear distributions,
	$\DoubanBook$ has 4 and $\Xiami$ has 12.
	In each case, each table is involved in one or more coappear distributions.
	
	Fig.\ref{fig:coappearMore} shows that, like for $\Tlinear$,
	the later $\Tcoappear$ is applied in the tweaking order,
	the smaller the coappear error.
	In general, we find that permutations where $\Tcoappear$ is after
	$\Tlinear$ reduces the errors more than if $\Tcoappear$ is before $\Tlinear$.
	This is expected, since $\Tlinear$ modifies the coappearing tables
	massively after $\Tcoappear$ is done.
	
	For average error, all permutations of tweaking significantly 
	reduce the error for all datasets for all size-scalers.
	It is below 0.1 for all most tweaking.
	
	For the plot {\tt Dscaler-DoubanMovie}, 
	we observe that the tweaking permutations ($\Tlinear$ applied after $\Tcoappear$) have an error around $0.2$.
	By looking at the details, 
	we find that this happens for the coappear distribution involving many tables. 
	Take $\xi_{\bf T}$ for example, where 
	$\bf T$ is $\langle$ {\tt Movie\_Comment}, {\tt Movie\_Seen}, 
	{\tt Movie\_Watching}, {\tt Movie\_Wish}, {\tt Movie\_Review},{\tt Movie\_Photo} $\rangle$. 
	This coappear distribution overlaps with 12 linear join matrices and 2 pairwise distribution.
	This coappear distribution will be modified by 12 linear tweaking tools if $\Tlinear$ applied after $\Tcoappear$.
	Hence, increase the difficulty of 
	getting a validated modification as described in Section~\ref{sec:framework}. 
	Nevertheless, we still have a small error for {\tt ReX-DoubanMovie, Rand-DoubanMovie} for such a highly overlapped structure.
	
	For  {\tt ReX-DoubanMovie}, 
	even through the error without tweaking is as low as 0.01. 
	All tweaking permutations are still able to reduce the error.

	\begin{figure}[t!]
		\begin{minipage}[t]{0.32\linewidth}
			\centering
			\includegraphics[height = 1.1in]{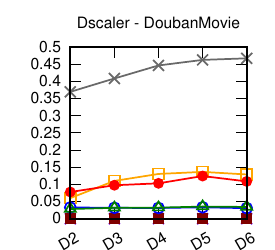}
		\end{minipage}
		\hfill
		\begin{minipage}[t]{0.32\linewidth}
			\centering
			\includegraphics[height = 1.1in]{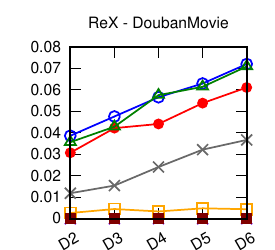}
		\end{minipage}
		\hfill
		\begin{minipage}[t]{0.32\linewidth}
			\centering
			\includegraphics[height = 1.1in]{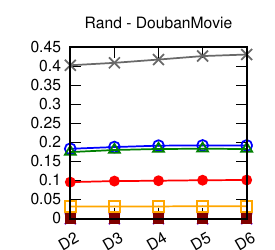}
		\end{minipage}
		\begin{minipage}[t]{0.32\linewidth}
			\centering
			\includegraphics[height = 1.1in]{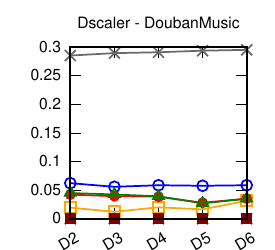}
		\end{minipage}
		\hfill\begin{minipage}[t]{0.32\linewidth}
			\centering
			\includegraphics[height = 1.1in]{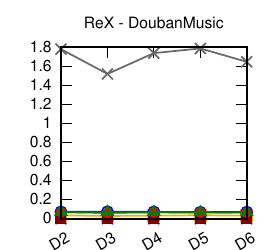}
		\end{minipage}	
		\hfill
		\begin{minipage}[t]{0.32\linewidth}
			\centering
			\includegraphics[height = 1.1in]{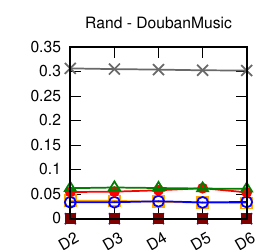}
		\end{minipage}
		\begin{minipage}[t]{0.32\linewidth}
			\centering
			\includegraphics[height = 1.1in]{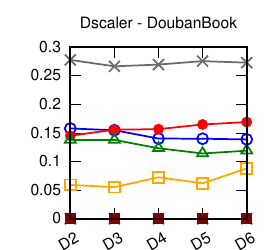}
		\end{minipage}
		\hfill
		\begin{minipage}[t]{0.32\linewidth}
			\centering
			\includegraphics[height = 1.1in]{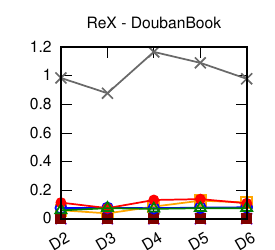}
		\end{minipage}
		\hfill\begin{minipage}[t]{0.32\linewidth}
			\centering
			\includegraphics[height = 1.1in]{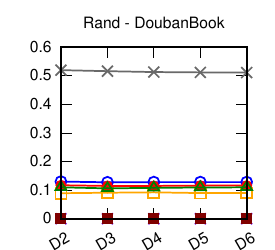}
		\end{minipage}
		\hfill
		\begin{minipage}[t]{0.99\linewidth}
			\centering
			\includegraphics[height = 0.085in]{newgnuplot/seed/xiami/legend-eps-converted-to.pdf}
		\end{minipage}
		\caption{Pairwise feature errors }
		\label{fig:pairwiseMore}
		\vspace{-4mm}
	\end{figure}
	
	\subsubsection{\textbf{Pairwise feature similarity}}
	Fig.\ref{fig:pairwiseMore} presents the $\pairwise$ feature similarity results.
	$\DoubanMovie$ has 2 pairwise distributions:
	(i) {\tt Review} as {\tt post} table and {\tt Review\_Comment} as
	$\response$ table; and
	(ii) {\tt Photo} as {\tt post} and {\tt Photo\_Comment} as $\response$. 
	$\DoubanMusic$ has 1 pairwise distribution, 
	and $\DoubanBook$ has 2 pairwise distributions.
	
	Fig.\ref{fig:pairwiseMore} again shows that,
	the later $\Tpairwise$ is applied in a tweaking order,
	the smaller the pairwise feature error in the tweaked dataset.
	For $\DoubanMusic$ and $\Xiami$,
	all tweaking permutations reduce the errors tremendously for all size-scalers.	
	For $\DoubanMovie$, 
	all tweaking permutations on data generated by $\dscaler$ significantly 
	reduce the pairwise feature error; 
	most tweaking permutations on data generated by Rand significantly reduce the error, except L-P-C and P-L-C.
	For {\tt Dscaler-DoubanMovie},  
	the error without tweaking is small (< 0.05),
	some tweaking permutations increase the errors.
	For $\DoubanBook$, 
	all tweaking permutations reduce the errors tremendously for all size-scalers 
	except three tweaking permutations on {\tt Dscaler-DoubanBook}.
	
	\subsection{Query similarity for $\DoubanMovie$, $\DoubanMusic$, $\DoubanBook$ }
	In this section, 
	we similarly run queries on  $\DoubanMovie$, $\DoubanMusic$, $\DoubanBook$, 
	and compare the query results on ground-truth dataset and scaled dataset.
	
	\subsubsection{\textbf{Query similarity for $\DoubanMovie$}} 	
	Fig.\ref{fig:movieQuery} presents the query results on $\DoubanMovie$.
	The $4$ queries used are: 
	$Q_1$ computes the number of movies that have video clips with commenters;
	$Q_2$ computes the number of movies that have been commented on by at most 10 different users;
	$Q_3$ computes the average number of stars per movie;
	$Q_4$ computes the number of user pairs having interactions through a movie review.
	
	As we can see from Fig.\ref{fig:movieQuery}, 
	all tweaking permutations reduce the query error significantly on both size-scalers.
	The errors are reduced to $ < 0.05$ for most of the tweaking permutations.

	\begin{figure}[t]
		\begin{minipage}[t]{0.24\linewidth}
			\centering
			\includegraphics[height = 0.95in]{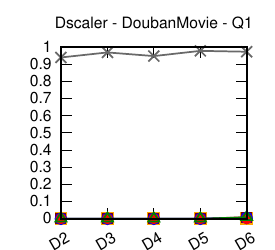}
		\end{minipage}
		\hfill
		\begin{minipage}[t]{0.24\linewidth}
			\centering
			\includegraphics[height = 0.95in]{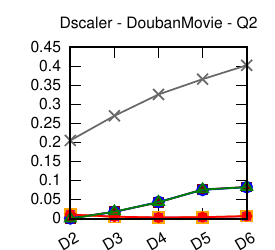}
		\end{minipage}
		\hfill
		\begin{minipage}[t]{0.24\linewidth}
			\centering
			\includegraphics[height = 0.95in]{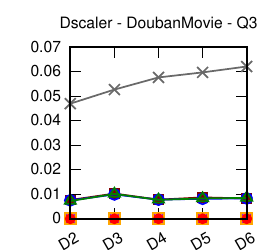}	
		\end{minipage}
		\hfill
		\begin{minipage}[t]{0.24\linewidth}
			\centering
			\includegraphics[height = 0.95in]{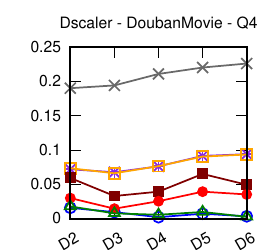}
		\end{minipage}
		\hfill
		\begin{minipage}[t]{0.24\linewidth}
			\centering
			\includegraphics[height = 0.95in]{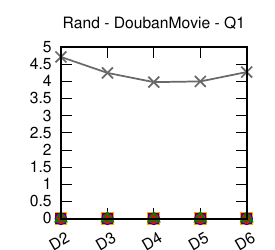}
		\end{minipage}
		\hfill
		\begin{minipage}[t]{0.24\linewidth}
			\centering
			\includegraphics[height = 0.95in]{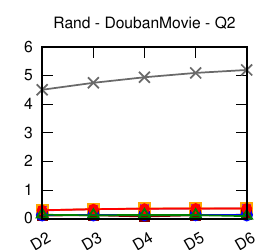}
		\end{minipage}
		\hfill
		\begin{minipage}[t]{0.24\linewidth}
			\centering
			\includegraphics[height = 0.95in]{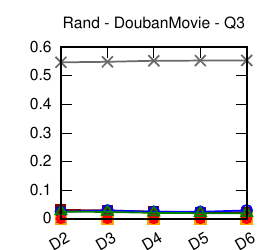}	
		\end{minipage}
		\hfill
		\begin{minipage}[t]{0.24\linewidth}
			\centering
			\includegraphics[height = 0.95in]{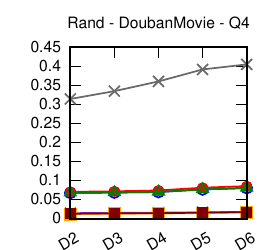}
		\end{minipage}
		\hfill
		\begin{minipage}[t]{0.99\linewidth}
			\centering
			\includegraphics[height = 0.085in]{newgnuplot/seed/xiami/legend-eps-converted-to.pdf}
		\end{minipage}
		\caption{Query similarity for $\DoubanMovie$}
		\vspace{-4mm}
		\label{fig:movieQuery}
	\end{figure}

	\subsubsection{\textbf{Query similarity for $\DoubanMusic$}} 	
	Fig.\ref{fig:musicQuery} presents the query results on $\DoubanMusic$.
	The $4$ queries used are: 
	$Q_1$ computes the number of users that have written a album-view with commenters;
	$Q_2$ computes the number of stars that have at most 10 different fans;
	$Q_3$ computes the average number of interested listeners of a album;
	$Q_4$ computes the number of user pairs having interactions through a album review.
	
	Similar to $\DoubanMovie$, all permutations reduce the errors tremendously.

	\begin{figure}[t]
		\begin{minipage}[t]{0.24\linewidth}
			\centering
			\includegraphics[height = 0.95in]{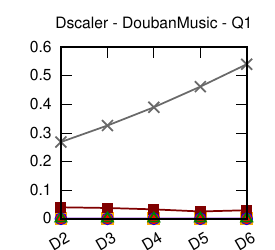}
		\end{minipage}
		\hfill
		\begin{minipage}[t]{0.24\linewidth}
			\centering
			\includegraphics[height = 0.95in]{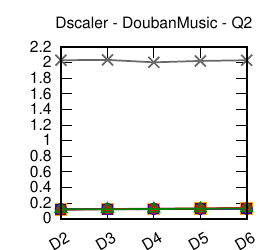}
		\end{minipage}
		\hfill
		\begin{minipage}[t]{0.24\linewidth}
			\centering
			\includegraphics[height = 0.95in]{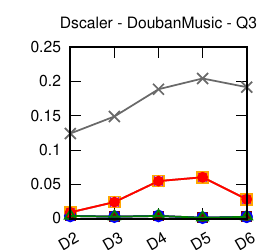}	
		\end{minipage}
		\hfill
		\begin{minipage}[t]{0.24\linewidth}
			\centering
			\includegraphics[height = 0.95in]{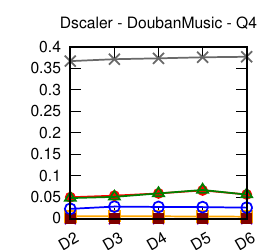}
		\end{minipage}
		\hfill
		\begin{minipage}[t]{0.24\linewidth}
			\centering
			\includegraphics[height = 0.95in]{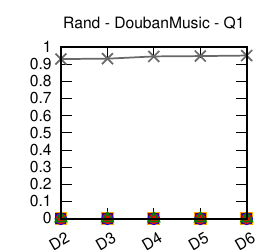}
		\end{minipage}
		\hfill
		\begin{minipage}[t]{0.24\linewidth}
			\centering
			\includegraphics[height = 0.95in]{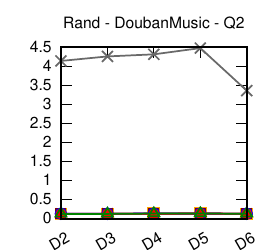}
		\end{minipage}
		\hfill
		\begin{minipage}[t]{0.24\linewidth}
			\centering
			\includegraphics[height = 0.95in]{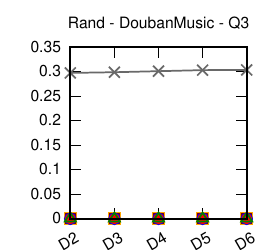}	
		\end{minipage}
		\hfill
		\begin{minipage}[t]{0.24\linewidth}
			\centering
			\includegraphics[height = 0.95in]{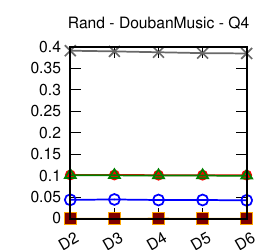}
		\end{minipage}
		\hfill
		\begin{minipage}[t]{0.99\linewidth}
			\centering
			\includegraphics[height = 0.085in]{newgnuplot/seed/xiami/legend-eps-converted-to.pdf}
		\end{minipage}
		\caption{Query similarity for $\DoubanMusic$}
		\label{fig:musicQuery}
		\vspace{-6mm}
	\end{figure}

	\subsubsection{\textbf{Query similarity for $\DoubanBook$}} 	
	Fig.\ref{fig:bookQuery} presents the query results on $\DoubanBook$.
	The $4$ queries used are: 
	$Q_1$ computes the number of users that have written a book-view with commenters;
	$Q_2$ computes the number of diaries that have at most 10 different commenters;
	$Q_3$ computes the average number of interested readers of a book;
	$Q_4$ computes the number of user pairs having interactions through a book review.
	
	As we can see from Fig.\ref{fig:bookQuery}, 
	most of the tweaking permutations reduce the errors tremendously except for few rare cases, 
	e.g. \texttt{Dscaler--DoubanBook-Q1}.
	For the L-C-P permutation, we can see that it has a larger error than the baseline. 
	This is expected, since $Q_1$ is a $\linear$ feature related query, 
	and the $\linear$ feature that were tweaked by $\Tlinear$ is subsequently modified by $\Tcoappear$ and $\Tpairwise$.
	Such a scenario can be improved by having more iterations.
	In Fig.\ref{fig:queryimprove}, we run L-C-P on Dscaler-DoubanBook with more iterations.
	We can see that from second iteration onwards, the Q1 error is reduced to less than 0.001.

	\begin{figure}[t]
		\begin{minipage}[t]{0.24\linewidth}
			\centering
			\includegraphics[height = 0.95in]{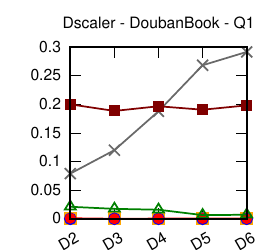}
		\end{minipage}
		\hfill
		\begin{minipage}[t]{0.24\linewidth}
			\centering
			\includegraphics[height = 0.95in]{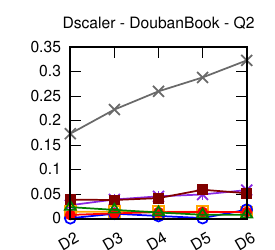}
		\end{minipage}
		\hfill
		\begin{minipage}[t]{0.24\linewidth}
			\centering
			\includegraphics[height = 0.95in]{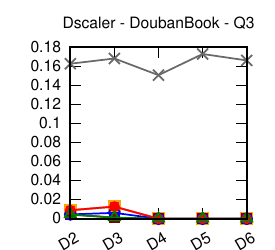}	
		\end{minipage}
		\hfill
		\begin{minipage}[t]{0.24\linewidth}
			\centering
			\includegraphics[height = 0.95in]{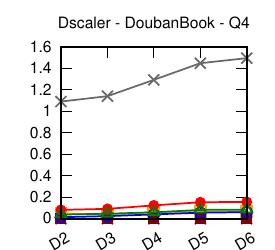}
		\end{minipage}
		\hfill
		\begin{minipage}[t]{0.24\linewidth}
			\centering
			\includegraphics[height = 0.95in]{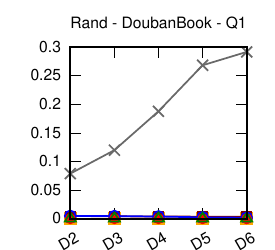}
		\end{minipage}
		\hfill
		\begin{minipage}[t]{0.24\linewidth}
			\centering
			\includegraphics[height = 0.95in]{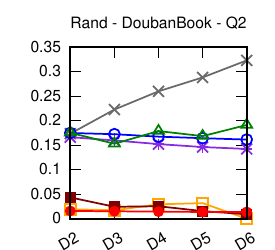}
		\end{minipage}
		\hfill
		\begin{minipage}[t]{0.24\linewidth}
			\centering
			\includegraphics[height = 0.95in]{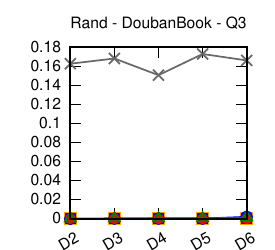}	
		\end{minipage}
		\hfill
		\begin{minipage}[t]{0.24\linewidth}
			\centering
			\includegraphics[height = 0.95in]{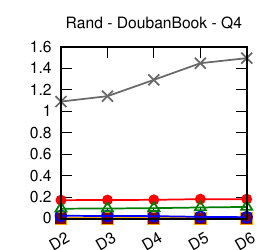}
		\end{minipage}
		\hfill
		\begin{minipage}[t]{0.99\linewidth}
			\centering
			\includegraphics[height = 0.085in]{newgnuplot/seed/xiami/legend-eps-converted-to.pdf}
		\end{minipage}
		\caption{Query similarity for $\DoubanBook$}
		\label{fig:bookQuery}
		\vspace{-4mm}
	\end{figure}

	\begin{figure}
		\centering
		\includegraphics[height=0.9in]{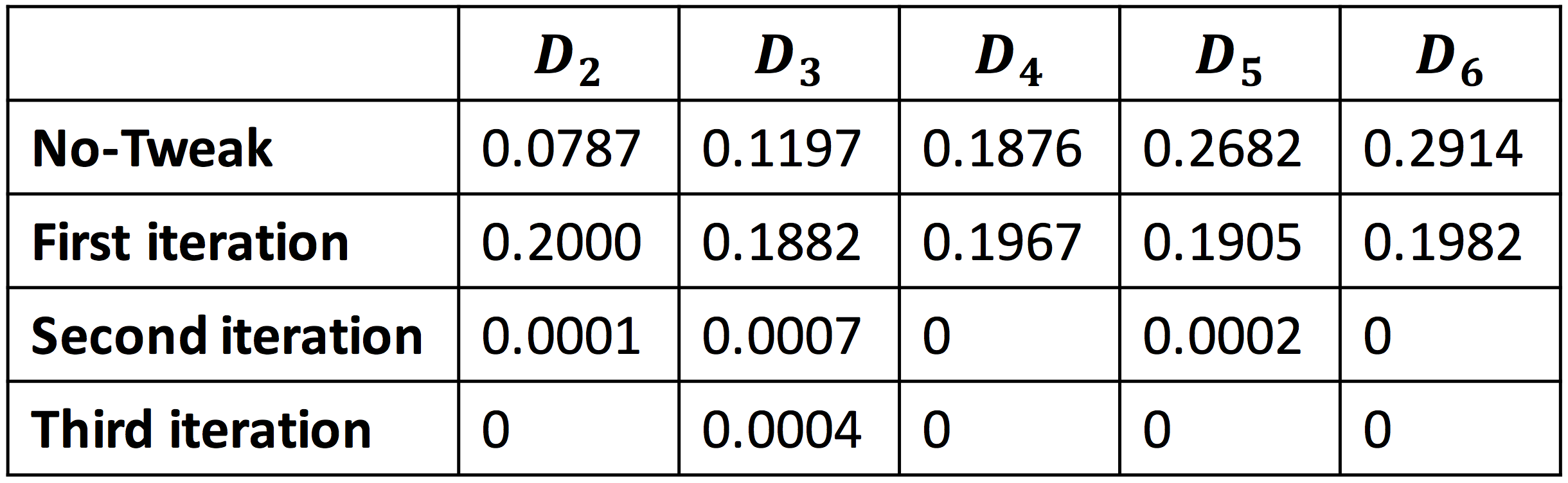}
		\caption{Query similarity improvement with more iterations. We run L-C-P for more iterations on Dscaler--DoubanBook, and test Q1. }
		\label{fig:queryimprove}
		\vspace{-4mm}
	\end{figure}

	\begin{figure}[h!]
		\centering
		\includegraphics[height=1.2in]{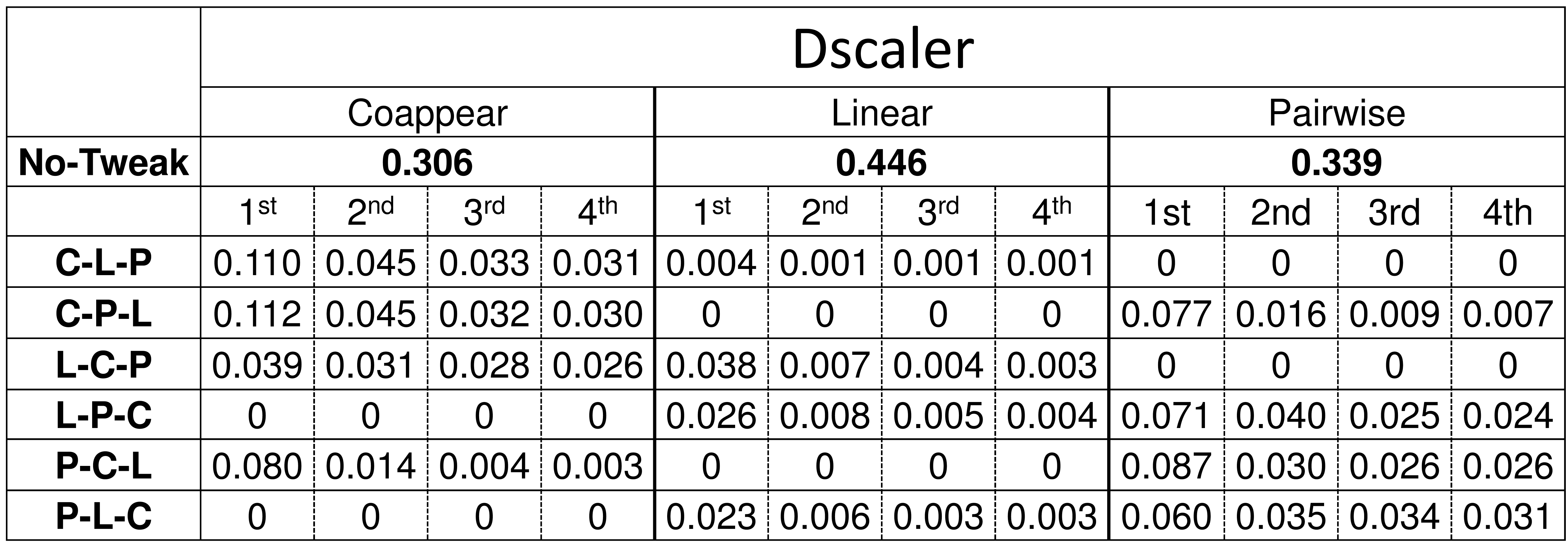}
		\caption{Feature errors of using $\dscaler$  as a size-scaler}
		\label{fig:dscalerIter}
		\vspace{-6mm}
	\end{figure}
	\begin{figure}[h!]
		\centering
		\includegraphics[height=1.2in]{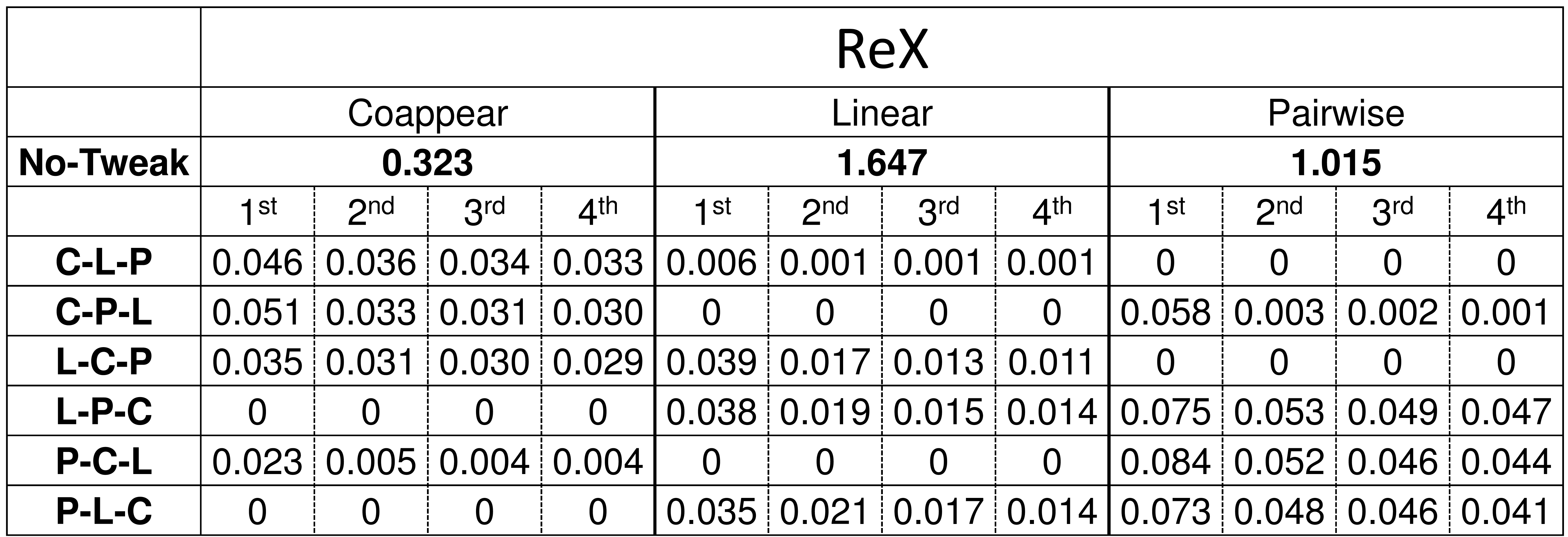}
		\caption{Feature errors of using ReX  as a size-scaler}
		\label{fig:rexIter}
		\vspace{-5mm}
	\end{figure}
	\begin{figure}[h!]
		\centering
		\includegraphics[height=1.2in]{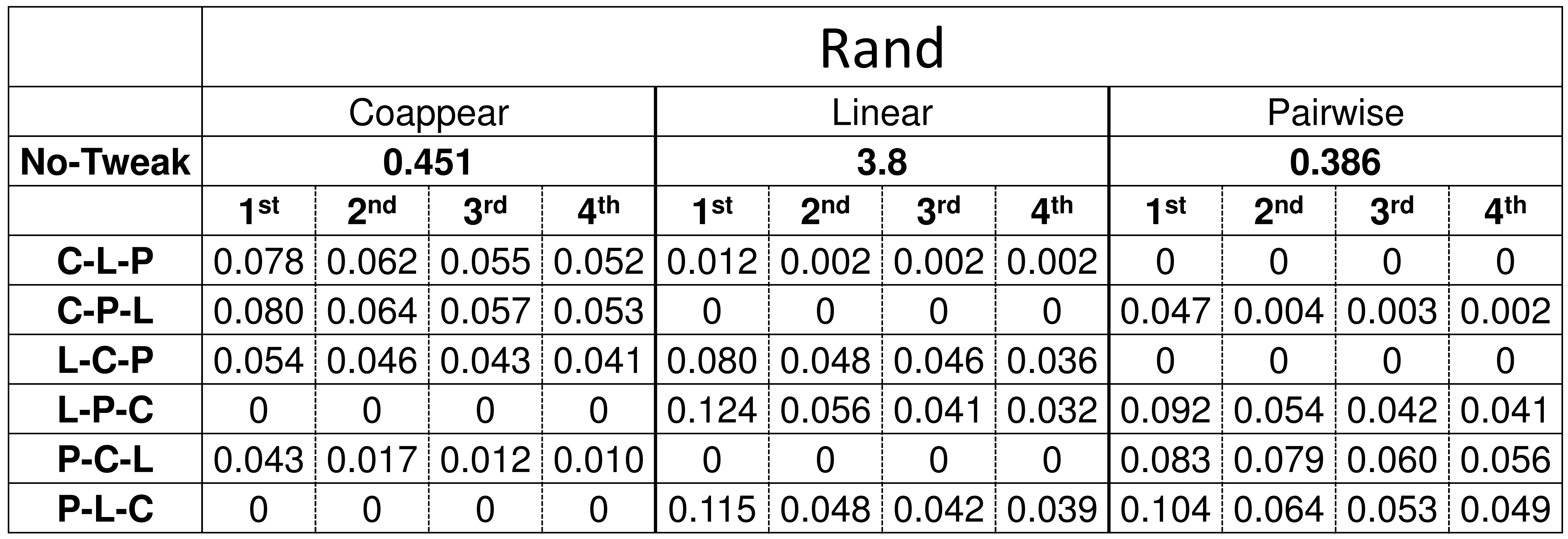}
		\caption{Feature errors of using Rand  as a size-scaler}
		\label{randIter}
		\vspace{-5mm}
	\end{figure}
	
	\subsection{Similarity improvement over iterations }
	In this section, we present feature similarity results for different iterations.
	Fig.\ref{fig:dscalerIter}, Fig.\ref{fig:rexIter} and Fig.\ref{randIter} 
	present the results of running 6 tweaking permutations for up to 4 iterations 
	on the dataset generated by  $\dscaler$, $\rex$ and $\rand$.
	Take Fig.\ref{fig:dscalerIter} for example, 
	for $\coappear$ feature, 4th column for C-L-P is 0.031. 
	It means that after running C-L-P permutation on the data generated by $\dscaler$ for 4 times, 
	the $\coappear$ feature error is 0.031. It is a 10-fold decrease from 0.306 (the No-Tweak baseline).
	
	For all the three figures, 
	we can see that the more iterations of tweaking, the less error we will have.
	On average, ASPECT can achieve an error of around 0.02 after 2 or 3 iterations.

	\subsection{Execution time for $\DoubanMovie$, $\DoubanMusic$, $\DoubanBook$}
	\begin{figure}[t!]
		\begin{minipage}[t]{0.32\linewidth}
			\centering
			\includegraphics[height = 1.1in]{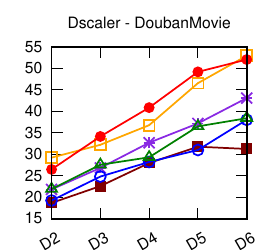}
		\end{minipage}
		\hfill
		\begin{minipage}[t]{0.32\linewidth}
			\centering
			\includegraphics[height = 1.1in]{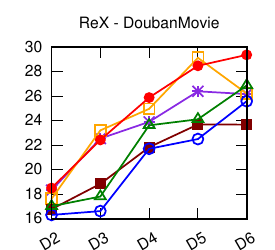}
		\end{minipage}
		\hfill
		\begin{minipage}[t]{0.32\linewidth}
			\centering
			\includegraphics[height = 1.1in]{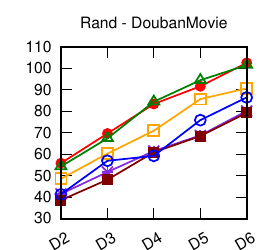}
		\end{minipage}
		\begin{minipage}[t]{0.32\linewidth}
			\centering
			\includegraphics[height = 1.1in]{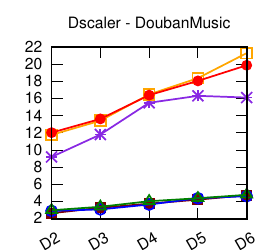}
		\end{minipage}
		\hfill\begin{minipage}[t]{0.32\linewidth}
			\centering
			\includegraphics[height = 1.1in]{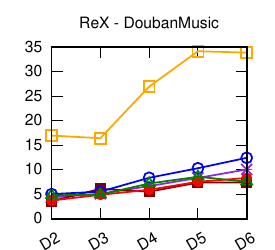}
		\end{minipage}	
		\hfill
		\begin{minipage}[t]{0.32\linewidth}
			\centering
			\includegraphics[height = 1.1in]{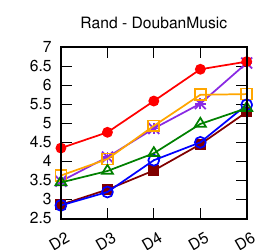}
		\end{minipage}
		\begin{minipage}[t]{0.32\linewidth}
			\centering
			\includegraphics[height = 1.1in]{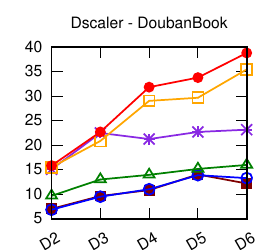}
		\end{minipage}
		\hfill
		\begin{minipage}[t]{0.32\linewidth}
			\centering
			\includegraphics[height = 1.1in]{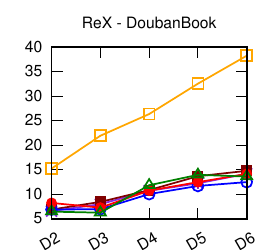}
		\end{minipage}
		\hfill\begin{minipage}[t]{0.32\linewidth}
			\centering
			\includegraphics[height = 1.1in]{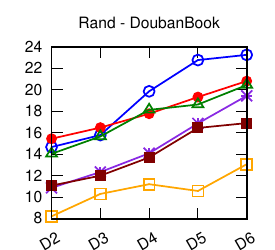}
		\end{minipage}
		\hfill
		\begin{minipage}[t]{0.99\linewidth}
			\centering
			\includegraphics[height = 0.085in]{newgnuplot/seed/xiami/legend-eps-converted-to.pdf}
		\end{minipage}
		\caption{Execution time }
		\label{fig:timeMore}
		\vspace{-4mm}
	\end{figure}

	We can see that, 
	the execution time increases linearly 
	with the dataset size for most of the experiments.
	{\tt DoubanMovie} is the largest dataset, 
	it takes more time.
	Nevertheless, most experiments finishes with 60 minutes 
	for the largest snapshot of {\tt DoubanMovie}.  
	{\tt DoubanMusic} and {\tt DoubanBook} are the smaller datasets,
	hence, it takes less time, within 60 minutes,
	for the worst tweaking permutation.
	
	For the same dataset, 
	different size-scaler will result in different execution time.
	This is understandable, 
	the data generated by the size-scalers have different feature errors.
	Hence, the amount of tweaking is different. 
	Take {\tt DoubanMovie} for example,  
	the execution time for each permutation varies among the different size-scaler.
	Moreover, for the same size-scaler and the same dataset, 
	different tweaking permutation has different execution time. 
	In general we find that L-C-P and L-P-C 
	are more efficient than other tweaking permutations.

\end{document}